\documentclass[10pt,conference]{llncs2e/llncs}
\usepackage{etex}


\usepackage[textfont={it,scriptsize}]{subfig}

\usepackage{cprotect}
\usepackage{booktabs}

\usepackage{amsthm}
\usepackage{amsmath,amssymb,mathtools,stmaryrd}
\usepackage{graphicx}
\usepackage{complexity}
\usepackage{tikz}
\usepackage{xspace}
\usepackage{multicol}
\usepackage{wrapfig}
\usepackage{enumitem}
\usepackage[bottom]{footmisc}
\usepackage{footnote}
\makesavenoteenv{table}
\makesavenoteenv{tabular}
\usepackage[font={scriptsize,it}]{caption}
\usepackage{adjustbox}

\usepackage{cancel}
\usepackage[normalem]{ulem}

\renewcommand\paragraph[1]{\vspace{.0ex}\par\noindent\textbf{#1}.\;\;}

\let\overrightarroworig\overrightarrow
\renewcommand{\overrightarrow}[1]{%
  \overrightarroworig{%
    \rule{0pt}{\heightof{$#1$}+1.5pt}%
    \hspace{-1mm}#1\vspace{-2.5pt}%
    }}

\let\overleftarroworig\overleftarrow
\renewcommand{\overleftarrow}[1]{%
  \overleftarroworig{%
    \rule{0pt}{\heightof{$#1$}+1.5pt}%
    \hspace{-1mm}#1\vspace{-2.5pt}%
    }}

\usepackage{color}
\definecolor{myblue}{rgb}{.8, .8, 1}
\usepackage{empheq}

\newlength\mytemplen
\newsavebox\mytempbox

\makeatletter
\newcommand\mybluebox{%
    \@ifnextchar[
       {\@mybluebox}%
       {\@mybluebox[0pt]}}

\def\@mybluebox[#1]{%
    \@ifnextchar[
       {\@@mybluebox[#1]}%
       {\@@mybluebox[#1][0pt]}}

\def\@@mybluebox[#1][#2]#3{
    \sbox\mytempbox{#3}%
    \mytemplen\ht\mytempbox
    \advance\mytemplen #1\relax
    \ht\mytempbox\mytemplen
    \mytemplen\dp\mytempbox
    \advance\mytemplen #2\relax
    \dp\mytempbox\mytemplen
    \colorbox{myblue}{\hspace{1em}\usebox{\mytempbox}\hspace{1em}}}
\makeatother

\definecolor{shadecolor}{cmyk}{.05,.05,0.05,0.05}
\definecolor{light-blue}{cmyk}{0.15,.15,.15,.15}
\newsavebox{\mysaveboxM} 
\newsavebox{\mysaveboxT} 

\newcommand*\Garybox[2][Example]{%
  \sbox{\mysaveboxM}{#2}%
  \sbox{\mysaveboxT}{\fcolorbox{black}{light-blue}{#1}}%
  \sbox{\mysaveboxM}{%
    \parbox[b][\ht\mysaveboxM+.5\ht\mysaveboxT+.5\dp\mysaveboxT][b]{\wd\mysaveboxM}{#2}%
  }%
  \sbox{\mysaveboxM}{%
    \fcolorbox{black}{shadecolor}{%
      \makebox[10em]{\usebox{\mysaveboxM}}%
    }%
  }%
  \usebox{\mysaveboxM}%
  \makebox[0pt][r]{%
    \makebox[\wd\mysaveboxM][c]{%
      \raisebox{\ht\mysaveboxM-0.5\ht\mysaveboxT+0.5\dp\mysaveboxT-0.5\fboxrule}{\usebox{\mysaveboxT}}%
    }%
  }%
}



\usetikzlibrary{automata}
\usetikzlibrary{positioning}
\usetikzlibrary{arrows,patterns}
\usetikzlibrary{calc}
\usetikzlibrary{decorations.pathreplacing,decorations.markings}

\usepackage{ifpdf}
    \ifpdf
\usepackage[colorlinks,citecolor=blue!60!black!70,linkcolor=blue!60!black!70]{hyperref}
    \else
    \fi



\let\phi=\varphi 
\let\epsilon=\varepsilon

\newcommand\concept[1]{\textit{#1}}

\hyphenation{bet-ween}
\hyphenation{da-ta-base}
\hyphenation{wor-ker}



\newcommand\defmath[2]{\newcommand#1{\ensuremath{#2}\xspace}}

\renewcommand\implies{\Rightarrow}

\defmath\comm{\bowtie}
\defmath\lcomm{\stackrel{\leftarrow}{\comm}}
\defmath\rcomm{\stackrel{\rightarrow}{\comm}}

\defmath\alias{\mathsf{Alias}}
\defmath\strip{\mathsf{strip}}
\defmath\wait{\mathsf{wait}}
\defmath\register{\mathsf{register}}

\defmath\locks{\mathsf{Mutex}}
\defmath\lock{\mathsf{lock}}
\defmath\unlock{\mathsf{unlock}}
\defmath\locked{\mathsf{locked}}

\newcommand\Ln{L'} 
\newcommand\Rn{R'} 

\defmath\LFS{\mathit{LFS}}

\defmath\TS{{TS}}
\defmath\TAS{{TS}}
\defmath\PAS{{PT}}
\defmath\bisim{\sim}
\defmath\tatrans{\rightarrow}

\defmath\trtrans{\hookrightarrow}
\defmath\brtrans{\leadsto}
\defmath\wrtrans{\rightsquigarrow}

\defmath\RR{\mathcal{R}}
\defmath\LL{\mathcal{L}}
\defmath\NN{\mathcal{N}}
\defmath\WW{\mathcal{E}}
\defmath\EX{\mathcal{F}}
\defmath\CC{\mathcal{C}}
\defmath\II{\mathcal{I}}
\defmath\nRR{\overline{\RR}}
\defmath\nLL{\overline{\LL}}
\defmath\nNN{\overline{\NN}}
\defmath\nWW{\overline{\WW}}
\defmath\nEE{\overline{\EX}}

\defmath{\States}{S}
\defmath\Ta{\rightarrow}
\defmath\Ti{\rightarrow_{t}}
\defmath\Tj{\rightarrow_{u}}

\defmath\sink{l_\mathit{sink}}

\defmath\tx{{\overline{t}}}

\defmath{\cfg}{{\textsc{cfa}}}
\defmath\DYN{\mathsf{dyn}}
\defmath\cfgdyn{\cfg^{\DYN}}

\defmath{\threads}{\mathsf{Threads}}
\defmath{\pc}{\mathsf{pc}}
\defmath{\stmt}{\mathsf{stmt}}
\defmath{\stmts}{\mathsf{Stmts}}
\newcommand{\gstmt}[2]{\ensuremath{{#1}\blacktriangleright{#2}}}
\defmath{\vars}{\mathsf{Vars}}
\defmath{\data}{\mathsf{Data}}
\defmath{\vals}{\mathsf{Vals}}
\defmath{\inputs}{\mathsf{Inputs}}
\defmath{\locs}{\mathsf{Locs}}
\defmath{\bool}{\mathbb{B}}

\newcommand{\old}[1]{}

\defmath{\Left}{\mathsf{Left}}
\defmath{\Right}{\mathsf{Right}}

\newcommand\ccode[1]{\texttt{#1}}

\defmath{\pre}{\mathsf{Pre}}
\defmath{\post}{\mathsf{Post}}
\defmath{\ext}{\mathsf{Ext}}
\defmath{\err}{\mathsf{Err}}

\newcommand{\overarrowi}[1]{\xrightarrow{#1}}
\newcommand{\overarrow}[1]{
  \mathchoice{\raisebox{-3pt}{ $\overarrowi{#1}$ }}
             {\raisebox{-3pt}{ $\overarrowi{#1}$ }}
             {\raisebox{-3pt}{ $\overarrowi{#1}$ }}
             {\raisebox{-3pt}{ $\overarrowi{#1}$ }}}

\newcommand{\trans}[4]{\ensuremath{{#1\,}{\overarrow{#3}_{\hspace{-.9mm}#4}}{\,#2}}}
\newcommand{\tr}[2]{\trans{\hspace{-1mm}}{\hspace{.1mm}}{#1}{#2}}

\defmath\ta{\rightarrow}
\defmath\ti{\rightarrow_{t}}
\defmath\tj{\rightarrow_{u}}

\defmath\tpre{\overarrowi{\pre}}
\defmath\tpost{\overarrowi{\post}}
\defmath\tex{\overarrowi{\ext}}

\providecommand{\tuple}[1]{\ensuremath{\left( #1 \right)}}
\providecommand{\set}[1]{\ensuremath{\left\lbrace #1 \right\rbrace}}

\providecommand{\sizeof}[1]{\ensuremath{\left\vert{#1}\right\vert}}




\newcommand{\defn}{\,\triangleq\,}
\newcommand{\rrestr}[0]{\hspace{-.7mm}\mathrel{\reflectbox{\rotatebox[origin=tl]{-25}{$\|$}}}\hspace{-.7mm}}
\defmath{\lrestr}{\hspace{-.7mm}\mathrel{\reflectbox{\rotatebox[origin=tr]{25}{$\|$}}}\hspace{-.7mm}}



\defmath{\followedby}{\mathrel{.}}
\defmath{\becomes}{\coloneqq}
\defmath{\true}{{\mathrm{true}}\xspace}
\defmath{\false}{{\mathrm{false}}\xspace}
\defmath{\tbool}{{\mathsf{Bool}}}
\defmath{\sortbool}{{\mathbb{B}}}

\defmath{\Nat}{{\mathbb N}}
\defmath{\Bool}{{\mathbb B}}
\defmath{\sortnat}{{\mathbb{N}}}
\defmath{\Land}{{\bigwedge}}
\defmath{\Lor}{{\bigvee}}



\defmath\pr{^\pre}
\defmath\po{^\post}
\defmath\ex{^\ext}
\defmath\er{^\err}
\defmath\safe{^{\mathsf{safe}}}
\defmath\ner{\safe}
\defmath\lc{l_1}
\defmath\kc{l_2}
\defmath\ls{l_1\safe}
\defmath\ks{l_2\safe}
\defmath\lx{l_1\ex}
\defmath\kx{l_2\ex}
\defmath\lr{l_1\pr}
\defmath\kr{l_2\pr}
\defmath\lo{l_1\po}
\defmath\ko{l_2\po}

\defmath\CE{CE}
\defmath\CL{CL}
\defmath\CR{CR}


\title{Dynamic Reductions for Model Checking Concurrent Software}

\author{
  {Henning G\"unther\inst{1}}
  \and {Alfons Laarman\inst{1}}
  \and {Ana Sokolova\inst{2}} 
  \and {Georg~Weissenbacher\inst{1}}
}

\institute{
  TU Wien\thanks{\scriptsize\vspace{-.3ex} This work is
supported by the Austrian National Research Network S11403-N23 (RiSE)
of the Austrian Science Fund (FWF) \vspace{-.3ex}and by the Vienna Science and
Technology Fund (WWTF) through grant VRG11-005.}
\and
University of Salzburg
}

\pagestyle{plain}

\begin{document}

\maketitle

\begin{abstract}
Symbolic model checking of parallel programs stands and falls
with effective methods of dealing with the explosion of interleavings.
We propose a dynamic reduction technique to avoid unnecessary interleavings.
By extending Lipton's original work with a notion of bisimilarity,
we accommodate dynamic transactions, and thereby
reduce dependence on the accuracy of static analysis, which
is a severe bottleneck in other reduction techniques.

The combination of symbolic model checking and dynamic reduction
techniques has proven to be challenging in the past.
Our generic reduction theorem nonetheless enables us to 
derive an efficient symbolic encoding, which we
implemented for IC3 and BMC.
The experiments demonstrate the power of dynamic reduction 
on several case studies and a large set of SVCOMP benchmarks.

\end{abstract}


\section{Introduction}
\label{sec:introduction}

The rise of multi-threaded software---a consequence of a
necessary technological shift from ever higher 
frequencies to multi-core architectures---exacerbates the
challenge of verifying programs automatically. While
automated software verification has made impressive advances 
recently thanks to
novel symbolic model checking techniques, such as
lazy abstraction \cite{hrmgs02,beyerCAV07},
interpolation \cite{impact}, and IC3 \cite{ic3}
for software \cite{CTIGAR,CimattiGMT14},  multi-threaded
programs still pose a formidable challenge.

The effectiveness of model checking in the presence of
concurrency is severely limited by the state explosion caused through thread 
interleavings.
Consequently, techniques that
avoid thread interleavings,
such as partial order reduction (POR) \cite{peled-93,parle89,godefroid}
or Lipton's reduction \cite{lipton}, are crucial to the
scalability of model checking,
while also benefitting other verification approaches~\cite{qadeer-atomicity,dimitrov2014commutativity,ElmasQT09}.

These reduction techniques, however, rely heavily
on the identification of statements that are either independent
or commute with the statements of all other threads, i.e. those that are
\concept{globally independent}.
For instance, the single-action rule~\cite{lamport-lipton}---a primitive precursor of Lipton
reduction---states that a sequential block of statements can be considered
an atomic transaction if all but one of the statements are
globally independent.
Inside an atomic block, all interleavings of other threads
can be discarded, thus yielding the reduction.

Identifying these globally independent statements requires
non-local \concept{static analyses}.
In the presence of pointers, arrays, and complicated branching
structures, however, the results of an up-front static analysis are typically 
extremely conservative, thus a severe bottleneck for good reduction.

\autoref{fig:illustration} shows an example with two threads (T1 and T2).
Let's assume
static analysis can establish that pointers \ccode{p} and \ccode{q}
never point to the same memory throughout the program's (parallel) execution.
This means that statements involving the pointers are globally independent,
hence they globally commute, e.g. an interleaving \ccode{*p++; *q = 1} always yields
the same result as \ccode{*q = 1; *p++;}.
Assuming that \ccode{*p++;} is also independent of the other
statements from T2 (\ccode{b = 2} and \ccode{c = 3}),
we can reorder any trace of the parallel program to a trace where
\ccode{*p++} and \ccode{*q = 2} occur subsequently without affecting the 
resulting state. The figure shows one example.
Therefore,
a syntactic transformation from \ccode{*p++; *q = 2} to
\ccode{atomic\{*p++; *q = 2\}} is a valid static reduction.

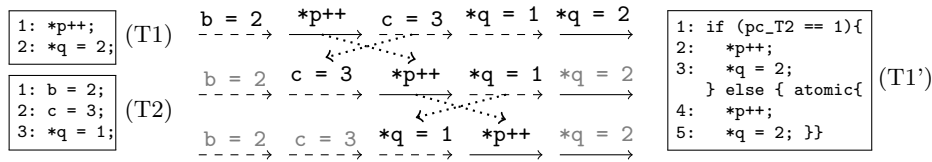
\begin{figure}[b]
\begin{minipage}{0.18\linewidth}
\cprotect\fbox{\scriptsize
\begin{minipage}{0.5\linewidth}
\begin{verbatim}
1: *p++;
2: *q = 2;
\end{verbatim}
\end{minipage}
}~(T1)

\vspace{1mm}

\cprotect\fbox{\scriptsize
\begin{minipage}{0.5\linewidth}
\begin{verbatim}
1: b = 2;
2: c = 3;
3: *q = 1;
\end{verbatim}
\end{minipage}
}~(T2)
\end{minipage}
\begin{minipage}{0.52\linewidth}
\begin{tikzpicture}

   \tikzstyle{e}=[minimum width=0cm]
   \tikzstyle{every node}=[font=\small, node distance=1.2cm]

	\node (s1) [e] {};
	\node (s2) [right of=s1,e] {};
	\node (s3) [right of=s2,e] {};
	\node (s4) [right of=s3,e] {};
	\node (s5) [right of=s4,e] {};
	\node (s6) [right of=s5,e] {};
	\path (s1.east) edge[->,dashed] node[above,pos=.45]{\ccode{b = 2}} (s2.west)
		  (s2.east) edge[->] 		node[above,pos=.45](a){\ccode{*p++}} (s3.west)
		  (s3.east) edge[->,dashed] node[above,pos=.45](e){\ccode{c = 3}} (s4.west)
		  (s4.east) edge[->,dashed] node[above,pos=.45]{\ccode{*q = 1}} (s5.west)
		  (s5.east) edge[->] 		node[above,pos=.45]{\ccode{*q = 2}} (s6.west);
	\node (s1a) [node distance=.8cm,below of=s1,e] {};
	\node (s2a) [right of=s1a,e] {};
	\node (s3a) [right of=s2a,e] {};
	\node (s4a) [right of=s3a,e] {};
	\node (s5a) [right of=s4a,e] {};
	\node (s6a) [right of=s5a,e] {};
	\path (s1a.east) edge[->,dashed] node[above,gray]{\ccode{b = 2}} (s2a.west)
		  (s2a.east) edge[->,dashed] node[above](ea){\ccode{c = 3}
		  										\vphantom{\ccode{*p++}}} (s3a.west)
		  (s3a.east) edge[->] 		node[above](aa){\ccode{*p++}} (s4a.west)
		  (s4a.east) edge[->,dashed] node[above](eea){\ccode{*q = 1}} (s5a.west)
		  (s5a.east) edge[->] 		node[above,gray]{\ccode{*q = 2}} (s6a.west);
	\node (s1b) [node distance=.8cm,below of=s1a,e] {};
	\node (s2b) [right of=s1b,e] {};
	\node (s3b) [right of=s2b,e] {};
	\node (s4b) [right of=s3b,e] {};
	\node (s5b) [right of=s4b,e] {};
	\node (s6b) [right of=s5b,e] {};
	\path (s1b.east) edge[->,dashed] node[above,gray]{\ccode{b = 2}} (s2b.west)
		  (s2b.east) edge[->,dashed] node[above,gray]{\ccode{c = 3}} (s3b.west)
		  (s3b.east) edge[->,dashed] node[above](eb){\ccode{*q = 1}} (s4b.west)
		  (s4b.east) edge[->] node(ab)[above]{\ccode{*p++}} (s5b.west)
		  (s5b.east) edge[->] node[above,gray]{\ccode{*q = 2}} (s6b.west);
	\path (a.south) edge[->,thick,dotted] node[pos=.45]{} (aa.north)
		  (aa.south) edge[->,thick,dotted] node[pos=.45]{} (ab.north)
		  (e.south) edge[->,thick,dotted] node[pos=.45]{} (ea.north)
		  (eea.south) edge[->,thick,dotted] node[pos=.45]{} (eb.north);

\end{tikzpicture}
\end{minipage}
\begin{minipage}{0.28\linewidth}
\cprotect\fbox{\scriptsize
\begin{minipage}{0.7\linewidth}
\begin{verbatim}
1: if (pc_T2 == 1){
2:   *p++;
3:   *q = 2;
   } else { atomic{
4:   *p++;
5:   *q = 2; }}
\end{verbatim}
\end{minipage}
}~(T1')
\end{minipage}
\caption{
(Left) C code for threads T1 and T2.
(Middle) Reordering (dotted lines) a multi-threaded execution trace (T1's actions are represented with straight arrows and T2's with `dashed' arrows).
(Right) The instrumented code for T1.}
\label{fig:illustration}
\end{figure}

Still, it is often hard to prove that pointers do not overlap throughout a program's execution.
Moreover, in many cases, pointers might temporarily
overlap at some point in the execution.
For instance, assume that initially \ccode{p} points to
the variable \ccode{b}.
This means that statements \ccode{b = 2} and \ccode{*p++}
no longer commute, because \ccode{b = 2; b++} 
yields a different result than \ccode{b++; b = 2}.
Nevertheless, if \ccode{b = 2} already happened, then we can still swap instructions and achieve the reduction as shown in~\autoref{fig:illustration}.
Traditional, static reduction methods cannot distinguish whether \ccode{b = 2}
already happened and yield no reduction.
\autoref{sec:motivation} provides various other 
real-world examples of hard cases for static analysis.

In \autoref{sec:instrument},
we propose a dynamic reduction method that is still based on a similar
syntactic transformation. Instead of merely making sequences of statements 
atomic, it introduces branches as shown in \autoref{fig:illustration} (T1').
A dynamic commutativity condition determines whether the branch with or
without reduction is taken. In our example, the condition checks whether
the program counter of T2 (\ccode{pc\_T2}) still points to the statement
\ccode{b = 2} (\ccode{pc\_T2} == 1). In that case, no reduction is performed,
otherwise the branch with reduction is taken.
In addition to conditions on the program counters, we provide other heuristics
comparing pointer and array values dynamically.

The instrumented code (T1') however poses one problem: the branching condition
no longer commutes with the statement that enables it.
In this case, the execution of \ccode{b = 2} disables the condition, thus
before executing \ccode{b = 2}, T1' ends up at Line 2, whereas after
\ccode{b = 2} it ends up at Line 4 (see \autoref{fig:noncommutativity}).
\begin{figure}[t]
\centering
\begin{minipage}[b]{.56\textwidth}
\begin{tikzpicture}

   \tikzstyle{e}=[minimum width=.6cm]
   \tikzstyle{every node}=[font=\scriptsize, node distance=1.2cm]

  \node (s1) {$\tuple{1,1}$};
  \node (s2) [below of=s1] {$\tuple{2,1}$};
  \node (s3) [node distance=4.5cm,right of=s2] {$\tuple{2,2}$};
  \path (s2) edge[->] node[midway,above]{\ccode{b = 2}} (s3);
  \path (s1) edge[->] node(m)[midway,right]{\ccode{pc\_T2}==1} (s2);

  \node (s0) [node distance=.35cm,above of=s1]  {$\tuple{T1,T2}$};

  \node (s4) [node distance=2.5cm,right of=s1] {$\tuple{1,2}$};
  \path (s1) edge[->] node [midway,above,sloped]{\ccode{b = 2}} (s4);

  \node (s3p) [node distance=1cm,above of=s3] {$\tuple{4,2}$};
  \path (s4) edge[->] node[midway,above,sloped]{\ccode{pc\_T2!=1}}
  					 node[yshift=.3cm,midway,above,sloped]{(\ccode{\bf else})} (s3p);

  \path (s3p) -- node[sloped,pos=.48]{$\neq$} (s3);
\end{tikzpicture}
\caption{Loss of commutativity}
\label{fig:noncommutativity}

\end{minipage}
\begin{minipage}[b]{.26\textwidth}

\begin{tikzpicture}

   \tikzstyle{e}=[minimum width=1cm]
   \tikzstyle{every node}=[font=\scriptsize, node distance=1cm]

  \node (s1) {$4$};
  \node (s2) [below of=s1] {$2$};
  \node (s3) [node distance=1.8cm,right of=s2] {$3$};
  \path (s2) edge[->] node[midway,above]{\ccode{*p++}} (s3);
  \path (s1) -- node(m)[midway,sloped]{$\cong$} (s2);

  \node (s4) [node distance=1.8cm,right of=s1] {$5$};
  \path (s1) edge[->] node [midway,above,sloped]{\ccode{*p++}} (s4);

  \path (s4) -- node[midway,sloped]{$\cong$} (s3);
\end{tikzpicture}
\caption{Bisimulation (T1)}
\label{fig:bisim-T1}

\end{minipage}
\vspace{-1em}
%
%
%
%
%
%
%
%
%
%
%
\end{figure}
To remedy this, we require in \autoref{sec:reduction}
that the instrumentation guarantees that the target states
in both branches are bisimilar.
\autoref{fig:bisim-T1} shows that locations 2 and 4 of T1' are bimilar,
written $2 \cong 4$,
which implies that any statement executable from the one is also executable
from the other ending again in a bisimilar location ($3\cong 5$).
As bisimularity is preserved under
parallel composition $(4,2) \cong (2,2)$,
we can prove the correctness of our dynamic reduction method in \autoref{app:proofs}.

The benefit of our syntactic approach is that the technique can be combined with
symbolic model checking checking techniques
(\autoref{sec:encoding} provides an encoding for our lean instrumentation).
Thus far, symbolic model checkers only supported more 
limited and static versions of reduction techniques as discussed
in \autoref{sec:related}. 

We implemented the dynamic reduction and encoding for LLVM bitcode,
mainly to enable support for C/C++ programs without dealing with
their intricate semantics
(the increased instruction count of LLVM bitcode is mitigated by the reduction).
The encoded transition relation is then passed to the
Vienna Verification Tool~(VVT)~\cite{vvt}, which implements both BMC
and IC3 algorithms extended with abstractions~\cite{CTIGAR}.
Experimental evaluation shows that (\autoref{sec:experiments})
dynamic reduction can yield several orders of magnitude gains in verification 
times.

%


\section{Motivating Examples}
\label{sec:motivation}

~\\
\vspace{-2.5em}

\begin{wrapfigure}[12]{R}{18.5em}
\centering\vspace{-2.5em}
  \fbox{
    \begin{minipage}{.9\columnwidth}
      \small
\scriptsize
      \begin{tabbing}
        \quad\=\quad\=\quad\=\quad\=\kill
        \>{\tt int *data = NULL;}\\
        \>{\tt void worker\_thread(int tid) \{}\\
        {\tt c:}\>\>{\tt if (data == NULL) \{}\\
        {\tt d:}\>\>\>{\tt int *tmp = read\_from\_disk(1024);}\\
        {\tt W:}\>\>\>{\tt if (!CAS(\&data, NULL, tmp)) free(tmp); }\\
        \>\>{\tt\}}\\
        \>\>{\tt for (int i = 0; i < 512; i++) }\\
        {\tt R:}\>\>\>{\tt process(data[i + tid * 512]);}\\
        \>{\tt\}}\\
        \>{\tt int main () \{}\\
        {\tt a:}\>\>{\tt pthread\_create(worker\_thread, 0);} // T1\\
        {\tt b:}\>\>{\tt pthread\_create(worker\_thread, 1);} // T2\\
        \>{\tt\}}
      \end{tabbing}
    \end{minipage}}
  \caption{Lazy initialization\label{fig:lazy}}
\end{wrapfigure}
\paragraph{Lazy initialization}
We illustrate our method with the code in \autoref{fig:lazy}.
The main function starts two threads
executing the \ccode{worker\_thread} function, which processes the contents of
\ccode{data} in the for loop at the end of the function.
Using a common pattern, a worker thread lazily delays the initialization of the 
global \ccode{data} pointer until it is needed.
It does this by reading some content from
disc and setting the pointer atomically
via a
compare-and-swap operation (\ccode{CAS})
at label {\tt W}
(whose semantics here is an atomic C-statement:
\ccode{if (data==NULL) \{ data = tmp; return 1; \} else return 0;}).
If it fails (returns 0), the locally allocated data is freed as the other thread
has won~the~race.

The subsequent read access at label {\tt R} is
only reachable once \ccode{data} 
has been initialized.
Consequently, the write access at {\tt W} cannot possibly
interfere with the read accesses at {\tt R},
and the many interleavings caused by both threads executing the for loop
can safely be ignored by the model checker.
This typical pattern is however too complex for static analysis to
efficiently identify, causing the model checker to conservatively
assume conflicting accesses, preventing any reduction.


\begin{figure}[b]
\begin{minipage}[b]{0.5\linewidth}
\scriptsize
\cprotect\fbox{
\begin{minipage}{.9\linewidth}
\begin{verbatim}
  int T[10] = {E,E,22,35,46,25,E,E,91,E};

  int find-or-put(int v) {
     int hash = v / 10;
     for (int i = 0; i < 10; i++) {
        int index = (i + hash) % 10;
        if (CAS(&T[index], E, v)) {
           return INSERTED;        
        } else if (T[index] == v) 
           return FOUND;
     }
     return TABLE_FULL;
  }
  int main() {
     pthread_create(find-or-put, 25);
     pthread_create(find-or-put, 42);
     pthread_create(find-or-put, 78);
  }
\end{verbatim}
\end{minipage}
}
\caption{Lockless hash table.}
\label{ex:ht}
\end{minipage}
\begin{minipage}[b]{0.5\linewidth}
\scriptsize
\cprotect\fbox{
\begin{minipage}{.9\linewidth}
\begin{verbatim}
  int x = 0, y = 0;
  int *p1, *p2;

  void worker(int *p) {
     while (*p < 1024)
        *p++;
  }
  int main(){
a:   if (*)
b:      { p1 = &x; p2 = &y; }
     else
c:      { p1 = &y; p2 = &x; }
     pthread_create(worker, p1); // T1
     pthread_create(worker, p2); // T2
     pthread_join(t1);
     pthread_join(t2);
     return x+y;
  }
\end{verbatim}
\end{minipage}
}
\caption{Load balancing.}
\label{ex:pointers}
\end{minipage}
\end{figure}

\paragraph{Hash table}~The code in \autoref{ex:ht} 
implements a lockless hash table~(from \cite{boosting}) inserting a value \ccode{v} by
scanning the bucket array \ccode{T} starting from \ccode{hash}, the
hash value calculated from \ccode v.
If an empty bucket is found (\ccode{T[index]==E}), then
\ccode{v} is atomically inserted using the
\ccode{CAS} operation.
If the bucket is not empty, the operation checks whether the value was already inserted in the bucket (\ccode{T[index] == v}).
If that is not the case, then it probes the next bucket of \ccode{T}
until either \ccode{v} is found to be already in the table, or it is inserted in an empty slot, or the table is full.
This basic bucket search order is called a linear \concept{probe sequence}.

A thread performing \ccode{find-or-put(25)}, for instance,
merely reads buckets \ccode{T[2]} to \ccode{T[5]}.
However, other threads might write an empty bucket, thus causing interference.
To show that these reads are independent, the static analysis would 
have to demonstrate that the writes happen to different buckets.
Normally this is done via alias analysis that tries to identify
the buckets that are written to (by the CAS operation).
However, because of the hashing and the probe sequence, such an analysis can
only conclude that all buckets may be written.
So all operations involving \ccode{T} will be classified as
non-movers, also the reads.
However if we look at the state of individual buckets, it turns out that
a common pattern is followed using the CAS operation: a bucket is only written
when it is empty, thereafter it doesn't change.
In other words, when a bucket \ccode{T[i]} does not contain \ccode{E}, then
any operation on it is a read and consequently is independent.

\paragraph{Load balancing}~\autoref{ex:pointers} shows a simplified example of a common pattern in multi-threaded software; load balancing:
The work to be done (counting to 2048) is split up between two threads (each of which counts to 1024).
The work assignment is represented by pointers \ccode{p1} and \ccode{p2}, and a dynamic hand-off to one of the two threads is simulated using non-determinism (the first if branch).
Static analysis cannot establish the fact that the partitions are independent, because they are assigned dynamically.
But because the pointer is unmodified after assignment,
its dereference commutes with  that in other worker threads.

In \autoref{sec:dynamic}, we show how the discussed operations in all three
examples can become dynamic movers, allowing for more reduction.


\section{Preliminaries}
\label{sec:prelim}

A concurrent program consists of a finite number of sequential procedures, one for each thread $i$. We model the syntax of each thread $i$ by a control flow graph (CFG) $G_i = (V_i, \delta_i)$ with $\delta_i \subseteq V_i \times A \times V_i$ and $A$ being the set of actions, i.e., statements.
$V_i$ is a finite set of locations, and
 $(l,\alpha,l') \in \delta_i$ are (CFG) edges.
 We abbreviate the actions for a thread $i$ with
$\Delta_i = \set{\alpha \mid \exists l,l' \colon (l,\alpha,l')\in\delta_i )}$.

\begin{wrapfigure}{r}{10.5em}
\begin{minipage}{8.5em}\vspace{-3.5em}
\begin{empheq}[box={\Garybox[Domains]}]{align*}
i,j,k 					& \colon \threads\\
a,b,x,y,p,p'			& \colon \vars\\
c,c',\dots				& \colon \vals\\
l,l',l_1,\ldots 		& \colon V_i\\
d,d'					& \colon \data\\
\pc,\pc',\ldots			& \colon \locs\\
\sigma,\sigma',\ldots	& \colon S\\
\alpha_i 				& \colon \mathcal{P}(\data^2)
\end{empheq}
\end{minipage}\vspace{-2em}
\end{wrapfigure}

A state of the concurrent system is composed of
(1) a location for each thread, i.e., a a tuple of thread locations (the set 
\locs contains all such tuples), and
(2) a data valuation, i.e., a mapping from variables ($\vars$) to data values ($\vals$). 
We take $\data$~to be the set of all data valuations. 
Hence, a state is a pair, $\sigma = (\pc,d)$ where
$\pc \in {\prod}_i V_i$ and $d \in \data$. 
The locations in each CFG 
actually correspond to the values  of the thread-local program counters for each thread. In particular, the global locations correspond to the global program counter $\pc$ being a tuple with $\pc_i \in V_i$ the thread-local program counter for thread $i$. We use $\pc[i := l]$ to denote $\pc[i := l]_i=l$ and
$\pc[i := l]_j = \pc_j$ for all $j\neq i$.

Each possible action $\alpha$ semantically corresponds to a binary
relation $\alpha \subseteq \data \times \data$ representing the
evolution of the data part of a state under the transition labelled by
$\alpha$. We call $\alpha$ the transition relation of the statement
$\alpha$, referring to both simply as $\alpha$.
We also use several simple statements from programming
languages, such as \texttt{C}, as actions.

The semantics of a concurrent program consisting of a finite number of threads, each with CFG $G_i = (V_i, \delta_i)$, is a 
transition system with data (TS) $C = (S, \to)$ with $S = \locs\times\data$, $\locs = \prod_i V_i$ and $\to = \bigcup_i \to_i$ where $\to_i$ is given by
$(\pc,d) \to_i (\pc',d')$ for $\exists \alpha\colon
\pc_i = l
 \land  (l,\alpha,l')\in\delta_i \land (d,d') \in \alpha \land \pc'=\pc[i := l']$. 
We also write 
$(\pc,d) \stackrel{\alpha}{\to}_i (\pc',d')$ for
$\pc_i = l \land (l,\alpha,l')\in\delta_i \land (d,d') \in \alpha \land \pc'=\pc[i := l']$. 
Hence, the concurrent program is an asynchronous execution of the parallel composition of all its threads. Each step (transition) is a local step of one of the threads.  
Each thread $i$ has a unique initial location $\pc_{0,i}$, and hence the TS has
one initial location $\pc_0$. Moreover, there is an initial data valuation $d_0$ as well.
Hence, the initial state of a TS is 
$\sigma_0\defn (\pc_0, d_0)$.

Since we focus on preserving simple safety properties (e.g. assertions) in our
reduction, w.l.o.g., we require one sink location per thread
\sink to represent errors (it has no outgoing edges, no selfloop).
Correspondingly, error states of a TS are those in which at least one thread is in the error location.

In the following, we introduce additional notation for states and relations.
Let $R \subseteq S \times S$ and $X \subseteq S$. Then
left restriction of $R$ to $X$ is $X \lrestr R \defn R\,\cap\,(X \times S)$ and right restriction
is $R\rrestr X \defn R\,\cap\,(S \times X)$.
Finally, the complement of $X$ is denoted
$\overline X \defn S  \setminus X$ (the universe of all states remains implicit in this notation).

\paragraph{Commutativity}
We let $R\circ Q$ denote the \concept{sequential composition}
of two binary relations $R$ and $Q$, defined~as:
$
\{ (x,z)\,\vert\,\exists y\colon (x,y)\in R
\wedge (y,z)\in Q \}\,
$.
Moreover,~let:
\begin{equation*}
  \begin{aligned}
R \comm Q &\defn  R \circ Q = Q \circ R
	&\hfill\text{(both-commute)}\\
R \rcomm Q &\defn R \circ Q \subseteq Q \circ R
	&\hfill\text{($R$ right commutes with $Q$)}\\
R \lcomm Q &\defn R \circ Q \supseteq Q \circ R
	&\hfill\text{($R$ left commutes with $Q$)}\\
  \end{aligned}
\end{equation*}
Illustrated graphically for transition relations, $\rightarrow_i$
right commutes with $\rightarrow_j$ iff
\begin{equation}\label{eq:commute}
\begin{aligned}
\text{

\begin{tikzpicture}\centering

  \node (s0) {\small $\sigma$};

  
  \node (s1) [below of=s0] {\small $\sigma'$};
  \node (s2) [node distance=.5cm,right of=s1, xshift=.9cm] {\small $\sigma''$};
  \path (s0) -- node(m)[midway,sloped]{$\rightarrow_i$} (s1);
  \path (s1) -- node[midway]{$\rightarrow_j$} (s2);

  \node [right of=s2,yshift=.5cm,xshift=-.5cm]{$\Rightarrow$};

  \node [left of=m,xshift=-.2cm]
   {\small $\forall \sigma,\sigma',\sigma'':$};

  \node (s3) [node distance=.5cm,right of=s0, xshift=3cm]{\small $\sigma$};
  \node (s4) [node distance=.5cm,right of=s3, xshift=.9cm] {\small $\sigma'''$};
  \node (s5) [below of=s4] {\small $\sigma''$};
  \path (s4) -- node[midway,sloped]{$\rightarrow_i$} (s5);
  \path (s3) -- node[pos=.45]{$\rightarrow_j$} (s4);

  \node (s6) [gray,below of=s3] {\small $\sigma'$};
  \path (s3) -- node(m)[gray,midway,sloped]{$\rightarrow_i$} (s6);
  \path (s6) -- node[gray,pos=.48]{$\rightarrow_j$} (s5);
        
  \node [left of=m,xshift=.3cm] {\small $\exists \sigma''':$};

\end{tikzpicture}}
\end{aligned}
\end{equation}
%
%
Conversely, $\rightarrow_j$ \emph{left commutes} with $\rightarrow_i$.
The typical example of (both) commuting operations
$\tr{\alpha}{i}$ and $\tr{\beta}{i}$ is when $\alpha$
and $\beta$ access a disjoint set of variables.
Two operations may commute even if both access the same variables,
e.g., if both only read or
both (atomically) increment/decrement the same variable.

\begin{figure}[b]
\centering
\scalebox{.8}{

\begin{tikzpicture}[node distance=1.7cm]
\newcommand\xstate[3]{(\tuple{#1,#2},#3)}
\renewcommand\xstate[3]{\tuple{#3}}

\tikzstyle{e}=[line width=2pt]

    \node (s0) 	 					{};
    \node (s1) [below left	of=s0] 	{};
    \node (s2) [below right of=s0] 	{};
    \node (s3) [below left 	of=s1] 	{};
    \node (s4) [below right of=s1]	{};
    \node (s5) [below right of=s2] 	{};
    \node (s6) [below right	of=s3] 	{};
    \node (s7) [below right	of=s4] 	{};
    \node (s8) [below left	of=s7] 	{};

    \path (s0) edge[->] node(t1')[midway,above,sloped] {\ccode{a=0;}}
    					node(t1')[midway,below,sloped] {$1$}  (s1);
    \path (s0) edge[->] node(t1')[midway,above,sloped] {\ccode{x=1;}}
    					node(t1')[midway,below,sloped] {$2$} (s2);
    \path (s1) edge[->] node(t1')[midway,above,sloped] {\ccode{b=2;}}
    					node(t1')[midway,below,sloped] {$1$} (s3);
    \path (s1) edge[->] node(t1')[midway,above,sloped] {\ccode{x=1;}}
					    node(t1')[midway,below,sloped] {$2$} (s4);
    \path (s2) edge[->] node(t1')[midway,above,sloped] {\ccode{a=0;}}
					    node(t1')[midway,below,sloped] {$1$} (s4);
    \path (s2) edge[->] node(t1')[midway,above,sloped] {\ccode{y=2;}}
					    node(t1')[midway,below,sloped] {$2$} (s5);
    \path (s4) edge[->] node(t1')[midway,above,sloped] {\ccode{b=2;}}
					    node(t1')[midway,below,sloped] {$1$} (s6);
    \path (s4) edge[->] node(t1')[midway,above,sloped] {\ccode{y=2;}}
    					node(t1')[midway,below,sloped] {$2$} (s7);
    \path (s3) edge[->] node(t1')[midway,above,sloped] {\ccode{x=1;}}
					    node(t1')[midway,below,sloped] {$2$} (s6);
    \path (s5) edge[->] node(t1')[midway,above,sloped] {\ccode{a=0;}}
    					node(t1')[midway,below,sloped] {$1$} (s7);
    \path (s7) edge[->] node(t1')[midway,above,sloped] {\ccode{b=2;}}
					    node(t1')[midway,below,sloped] {$1$} (s8);
    \path (s6) edge[->] node(t1')[midway,above,sloped] {\ccode{y=2;}}
    					node(t1')[midway,below,sloped] {$2$} (s8);

    \path (s0) edge[->,e,bend right=45] node(n1)[midway,above,sloped] {\ccode{a=0;b=2;}} (s3);
    \path (s0) edge[->,e,bend left=45] node(n2)[midway,above,sloped] {\ccode{x=1;y=2;}} (s5);
    \path (s5) edge[->,e,bend left=45] node(t1')[midway,below,sloped] {\ccode{a=0;b=2;}} (s8);
    \path (s3) edge[->,e,bend right=45] node(t1')[midway,below,sloped] {\ccode{x=1;y=2;}} (s8);

    \node (s0p) 	[node distance=5.5cm,right of=s0] {};
    \node (s1) [below left	of=s0p] 	{};
    \node (s2) [below right of=s0p] 	{};
    \node (s3) [below left 	of=s1] 	{};
    \node (s4) [below right of=s1]	{};
    \node (s5) [below right of=s2] 	{};
    \node (s6) [below right	of=s3] 	{};
    \node (s7) [below right	of=s4] 	{};
    \node (s8) [below left	of=s7] 	{};

    \path (s0p) edge[e,->] node(t1')[midway,above,sloped] {\ccode{a=0;}}
    					node(t1')[midway,below,sloped] {$1$}  (s1);
    \path (s0p) edge[->] node(t1')[midway,above,sloped] {\ccode{x=1;}}
    					node(t1')[midway,below,sloped] {$2$} (s2);
    \path (s1) edge[->] node(t1')[midway,above,sloped] {\ccode{b=2;}}
    					node(t1')[midway,below,sloped] {$1$} (s3);
    \path (s1) edge[e,->] node(t1')[midway,above,sloped] {\ccode{x=1;}}
					    node(t1')[midway,below,sloped] {$2$} (s4);
    \path (s2) edge[->] node(t1')[midway,above,sloped] {\ccode{a=0;}}
					    node(t1')[midway,below,sloped] {$1$} (s4);
    \path (s2) edge[->] node(t1')[midway,above,sloped] {\ccode{y=2;}}
					    node(t1')[midway,below,sloped] {$2$} (s5);
    \path (s4) edge[e,->] node(t1')[midway,above,sloped] {\ccode{b=2;}}
					    node(t1')[midway,below,sloped] {$1$} (s6);
    \path (s4) edge[->] node(t1')[midway,above,sloped] {\ccode{y=2;}}
    					node(t1')[midway,below,sloped] {$2$} (s7);
    \path (s3) edge[->] node(t1')[midway,above,sloped] {\ccode{x=1;}}
					    node(t1')[midway,below,sloped] {$2$} (s6);
    \path (s5) edge[->] node(t1')[midway,above,sloped] {\ccode{a=0;}}
    					node(t1')[midway,below,sloped] {$1$} (s7);
    \path (s7) edge[->] node(t1')[midway,above,sloped] {\ccode{b=2;}}
					    node(t1')[midway,below,sloped] {$1$} (s8);
    \path (s6) edge[e,->] node(t1')[midway,above,sloped] {\ccode{y=2;}}
    					node(t1')[midway,below,sloped] {$2$} (s8);

\end{tikzpicture}
}
\caption{Example transition system composed of two independent threads (twice).
Thick lines show a Lipton reduced system (left) and a partial-order~reduction
(right).
}
\label{f:lipton}
\end{figure}

\paragraph{Lipton Reduction}
Lipton~\cite{lipton} devised a method
that merges multiple sequential
statements into one atomic operation,
and thereby radically reducing the
number of states reachable from the initial state as \autoref{f:lipton} shows 
for a transition system composed of two (independent, thus commuting) threads.

Lipton called a transition $\stackrel{\alpha}{\to}_i$ a right/left mover
if and only if it satisfies:

\begin{equation*}
  \label{eq:leftrightmovers}
  \begin{aligned}
	\stackrel{\alpha}{\to}_i \rcomm \bigcup_{j\neq i} \to_j &~\text{(right mover)}
	& \stackrel{\alpha}{\to}_i \lcomm \bigcup_{j\neq i} \to_j  &~\text{(left mover)} \\
  \end{aligned}
\end{equation*}

\noindent
\concept{Both-movers} are transitions that are both left and right movers,
whereas \concept{non-movers} are neither.
The sequential composition of two movers is also a corresponding mover,
and vice versa.
Moreover, one may always safely classify an action as a non-mover, although
having more movers yields better reductions.

Lipton reduction only preserves halting.
We present Lamport's~\cite{lamport-lipton} version, which preserves
safety properties such as $\Box \varphi$:
Any sequence $\tr{\alpha_1}{i}\circ \tr{\alpha_2}{i}\circ
\dots \circ \tr{\alpha_{n-1}}{i} \circ \tr{\alpha_n}{i}$ can be
\concept{reduced} to a single \concept{transaction} 
\tr{\alpha}{i} where $\alpha = \alpha_1 ;\dots ;\alpha_n$
(i.e. a compound statement with the same local behavior),
if for some $1 \le k < n$:
\begin{enumerate}[parsep=0pt]
\item[L1.] statements before $\alpha_k$ are right movers, i.e.:
	$\tr{\alpha_1}{i}\circ \dots\circ \tr{\alpha_{k-1}}{i} \rcomm \bigcup_{j\neq i} \to_j$,
\item[L2.] statements after $\alpha_{k}$ are left movers, i.e.:
	$\tr{\alpha_{k+1}}{i}\circ \dots\circ \tr{\alpha_{n}}{i}  \lcomm \bigcup_{j\neq i} \to_j$,
\item[L3.] statements after $\alpha_1$ do not block, i.e.:
	$\forall 1< x \le n, d\colon \exists d' \colon (d,d') \in \alpha_x$, and
	\vphantom{$\lcomm\bigcup_{j\neq i}$}
\item[L4.] 
	$\varphi$ is not disabled by $\tr{\alpha_1}{i}\circ \dots\circ \tr{\alpha_{k-1}}{i}$, nor enabled by
	 $\tr{\alpha_{k+1}}{i}\circ \dots\circ \tr{\alpha_{n}}{i}$.\parbox{0pt}{\vphantom{$\lcomm\bigcup_{j\neq i}$}}
\end{enumerate}
The action $\alpha_k$ might interact with other threads and therefore
is called the \emph{commit} in the
database terminology~\cite{papadimitriou}.
Actions preceding it are called \concept{pre-commit} actions and gather resources, such as locks.
The remaining  actions are \concept{post-commit} actions that (should) release these resources.
We refer to pre(/post)-commit transitions including source and target states
as the \concept{pre(/post) phase}.

\section{Dynamic Reduction}
\label{sec:dynamic}

The reduction outlined above depends on the identification of 
movers. And to determine whether a statement is a mover,
the analysis has to consider \emph{\underline{all} other statements in \underline{all} other threads}.
Why is the definition of movers so strong?
The answer is that `movability' has to be preserved in all future
computations for the reduction not to miss any relevant behavior.

\begin{wrapfigure}[14]{r}{.34\textwidth}
\vspace{-3.2em}
\scalebox{.7}{

\begin{tikzpicture}[node distance=1.7cm]
\newcommand\xstate[3]{(\tuple{#1,#2},#3)}
\renewcommand\xstate[3]{\tuple{#3}}

\tikzstyle{e}=[line width=2pt]

    \node (s0) 	 					{\parbox{.5cm}{$\xstate{1}{a}{x,y}$\\$\xstate{1}{a}{0,0}$}};
    \node (s1) [below left	of=s0] 	{$\xstate{2}{a}{0,0}$};
    \node (s2) [below right of=s0] 	{$\xstate{1}{b}{0,1}$};
    \node (s3) [below left 	of=s1] 	{$\xstate{3}{a}{0,2}$};
    \node (s4) [below right of=s1]	{$\xstate{2}{b}{0,1}$};
    \node (s5) [below right of=s2] 	{$\xstate{1}{c}{1,1}$};
    \node (s6) [below right	of=s3] 	{$\xstate{3}{b}{0,2}$};
    \node (s7) [below right	of=s4] 	{$\xstate{2}{c}{1,1}$};
    \node (s7p)[below	of=s5] 		{$\xstate{2}{c}{0,1}$};
    \node (s7pp) [below	of=s7] 	{$\xstate{2}{c}{1,2}$};
    \node (s8pp) [below of=s7p] 	{$\xstate{3}{b}{0,2}$};
    \node (s6pp) [below	of=s6] 	{$\xstate{3}{b}{2,2}$};
    \node (s6p) [below of=s3] 	{$\xstate{3}{b}{0,1}$};
    \node (s8p) [below of=s6p] 	{$\xstate{3}{b}{1,1}$};

    \path (s0) edge[->]    node[below,sloped]{1} node(t1')[midway,above,sloped] {\ccode{x=0;}} (s1);
    \path (s0) edge[->]    node[below,sloped]{2} node(t1')[midway,above,sloped] {\ccode{y=1;}} (s2);
    \path (s1) edge[->]    node[below,sloped]{1} node(t1')[midway,above,sloped] {\ccode{y=2;}} (s3);
    \path (s1) edge[->]    node[below,sloped]{2} node(t1')[midway,above,sloped] {\ccode{y=1;}} (s4);
    \path (s2) edge[->]    node[below,sloped]{1} node(t1')[midway,above,sloped] {\ccode{x=0;}} (s4);
    \path (s2) edge[->]    node[below,sloped]{2} node(t1')[midway,above,sloped] {\ccode{x=y;}} (s5);
    \path (s4) edge[->]    node[below,sloped]{1} node(t1')[midway,above,sloped] {\ccode{y=2;}} (s6);
    \path (s4) edge[->]    node[below,sloped]{2} node(t1')[midway,above,sloped] {\ccode{x=y;}} (s7);
    \path (s5) edge[->,e]  node[below,sloped]{1} node(t1')[midway,above,sloped] {\ccode{x=0;}} (s7p);
    \path (s3) edge[->,e]  node[below,sloped]{2} node(t1')[midway,above,sloped] {\ccode{y=1;}} (s6p);
    \path (s6p) edge[->,e] node[below,sloped]{2} node(t1')[midway,above,sloped] {\ccode{x=y;}} (s8p);
    \path (s7p) edge[->,e] node[below,sloped]{1} node(t1')[midway,above,sloped] {\ccode{y=2;}} (s8pp);
    \path (s6) edge[->]    node[below,sloped]{2} node(t1')[midway,above,sloped] {\ccode{x=y;}} (s6pp);
    \path (s7) edge[->]    node[below,sloped]{1} node(t1')[midway,above,sloped] {\ccode{y=2;}} (s7pp);

    \path (s0) edge[->,e,bend right=45] node(t1')[midway,above,sloped] {\ccode{x=0;y=2;}} (s3);
    \path (s0) edge[->,e,bend left=45] node(t1')[midway,above,sloped] {\ccode{y=1;x=y;}} (s5);

\end{tikzpicture}
}
\caption{Transition system of $\protect\tr{x:=0}{1}\protect\tr{y:=2}{1}\|\protect\tr{y:=1}{2}\protect\tr{x:=y}{2}$. Thick lines show an incorrect~reduction,
missing $(2,2)$~and~$(1,2)$.}
\label{f:composition}
\end{wrapfigure}
For instance,
consider the system composed of \ccode{x=0; y=2} and
\ccode{y=1; x=y} with
initial state $\sigma_0=(\pc_0,d_0)$, $d_0=\tuple{x=0,y=0}$ and $\pc_0=\tuple{1,1}$
using line numbers as program counters.
\autoref{f:composition} shows the TS of this system, from which we
can derive that \ccode{x:=0} and \ccode{y:=1} do not commute except
in the initial state
(see the diamond structure of the top 3 and the middle state).
Now assume, we have a dynamic version of Lipton reduction that
allows us to apply the reduction 
\ccode{atomic\{x=0; y=2;\}} and
\ccode{atomic\{y=1; x=y;\}}, but only in the initial state where
both \ccode{x=0} and \ccode{y=1} commute.
The resulting reduced system, as shown with bold arrows, now 
discards various states.
Clearly, a safety property such as $\Box \neg(x=1 \land y = 2)$
is not preserved anymore by such a reduction, even though 
\ccode{x=0} and \ccode{y=1} never disable the property (L4 in \autoref{sec:prelim} holds).

The mover definition comparing all behaviors of all other threads is thus merely 
a way to (over)estimate the computational future. 
But we can do better, without precisely calculating the future computations
(which would indulge in a task that the reduction is trying to avoid in the
first place).
For example, unreachable code should not negatively
impact movability of statements in the entire program.
By the same line of reasoning, we can conclude that lazy initialization
procedures (e.g. \autoref{fig:lazy}) should not eliminate movers in the
remainder of the program.
Intuitively, one can run the program until after initialization, then remove the
initialization procedure and restart the verification using that state as the 
new initial state. 
Similarly, reading unchanging buckets in the hash table of
\autoref{ex:ht} should not cause interference and
dynamically assigned, yet disjoint, pointers still do not overlap,
so these bucket reads and pointer dereferences could also become movers
after initialization.
The current section provides dynamic notion of movability and a generalized
reduction theorem that can use this new notion.
Proofs of all lemmas and theorems can be found in
\autoref{app:proofs}.

\subsection{Dynamic Movers}
\label{s:dyn-movers}


Recall from the example of \autoref{fig:illustration} that we introduce
branches in order to guide the dynamic reductions.
This section formalizes the concept of a dynamic both-moving condition,
guarding these branches.
We only consider both movers for ease of explanation. Nonetheless,
\autoref{app:proofs} considers left and right movers.

\begin{definition}[Dynamic both-moving conditions]\label{def:dynamicboth}~\\
A state predicate (a subset of states)
$c_\alpha$ is a dynamic both-moving
condition for a CFG edge $\tuple{l,\alpha,l'}\in\delta_i$,
if for all $j\neq i$: 
$(c_\alpha \lrestr \tr{\alpha}i) \bowtie (c_\alpha \lrestr \tr{}j)$
and both $\tr\alpha{i}$,$\tr{}j$ preserve $c_\alpha=\true$, i.e.
$c_\alpha\lrestr \tr{}j \rrestr \overline{c_\alpha}= 
c_\alpha\lrestr \tr{\alpha}i \rrestr \overline{c_\alpha}= \emptyset$.
\end{definition}

One key property of a dynamic both-moving condition for $\alpha\in\Delta_i$ is its monotonicity:
In the transition system,
the condition $c_\alpha$ can be enabled by remote threads ($j\neq i$), but never disabled.
While the definition allows us to define many practical heuristics,
we have identified the following both-moving conditions as useful.
When these fail, $c_\alpha := \false$ can be taken to designate
$\alpha$ as a non-mover statically.
Although our heuristics still rely on static analysis,
the required information is easier to establish
(e.g. with basic control-flow analysis and
the identification of CAS instructions)
than for the global mover condition.

\begin{description}
\item[Reachability] 
As in \autoref{fig:lazy}, interfering actions, such as the write at label
\ccode{W}, may become unreachable
once a certain program location has been reached.
The dynamic condition for the read 
$\alpha\defn $\,\ccode{process(data[i + tid * 512])}$_i$ therefore becomes:
$c_\alpha := \bigwedge_{j\neq i} \bigwedge_{l\in L(j)} \pc_j \neq l$,
where $L(j)$ is the set of all locations in $V_j$ that can reach the location with label \ccode W
in $V_j$. For example, for thread T1 we obtain $c_\alpha :=$ \ccode{pc\_T2 != a,b,c,d,W}
(abbreviated).

Deriving this condition merely requires a simple reachability check on the~CFG.


\item[Static pointer dereference] If pointers are not modified in the future,
then their dereferences commute 
if they point to different memory locations.

For thread T1 in the pointer example in \autoref{ex:pointers},
we obtain $c_\alpha := \ccode{p1 != p2 \&\&}$ $\ccode{pc\_T2 != a,b,c}$
(here \ccode{*p++} is the pointer dereference with \ccode{p = p1}).

\item[Monotonic atomic] A CAS instruction \ccode{CAS(p, a, b)} is monotonic,
if its expected value \ccode{a} is never equal to the value \ccode{b}
that it tries to write.
Assuming that no other instructions write to the location where \ccode{p}
refers to, this means that once it is set to \ccode{b}, it never changes again.

In the hash table example in \autoref{ex:ht},
there is only a CAS instruction writing to the
array \ccode T. The dynamic moving condition is:
$c_\alpha := \ccode{T[index] != E}$.

\end{description}

\begin{lemma}\label{lem:dyncond}
The above conditions 
are dynamic both-moving conditions.
\end{lemma}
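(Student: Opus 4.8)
The plan is to verify, for each of the three heuristics, the two clauses of \autoref{def:dynamicboth} in turn: first the \emph{monotonicity} clause (that neither $\tr{\alpha}{i}$ nor any $\tr{}{j}$ can falsify $c_\alpha$), and then the \emph{commutativity} clause $(c_\alpha\lrestr\tr{\alpha}{i})\comm(c_\alpha\lrestr\tr{}{j})$. I would prove monotonicity first, because once it holds the left-restriction on the \emph{second} factor of each composition becomes automatic: starting from a $c_\alpha$-state, both $\tr{\alpha}{i}$ and $\tr{}{j}$ land again in a $c_\alpha$-state, so the intermediate state of either composition order always satisfies the guard. This reduces the commutativity clause to ordinary commutativity of the underlying transitions issuing from $c_\alpha$-states.

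For monotonicity I would argue as follows. A thread-$i$ step $\tr{\alpha}{i}$ rewrites only $\pc_i$ and the data via $\alpha$, leaving every $\pc_j$ with $j\neq i$ untouched. Hence for \textbf{Reachability}, where $c_\alpha=\bigwedge_{j\neq i}\bigwedge_{l\in L(j)}\pc_j\neq l$, the step $\tr{\alpha}{i}$ preserves $c_\alpha$ trivially; for a remote step $\tr{}{j}$ I would use the defining property that $L(j)$ is \emph{backward-closed} under the edges of $V_j$ (if $(l',\beta,l'')\in\delta_j$ and $l''$ reaches \ccode{W} then $l'$ reaches \ccode{W}), so a step out of any $l'\notin L(j)$ can only land in some $l''\notin L(j)$ and $c_\alpha$ cannot be disabled. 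For \textbf{Static pointer dereference} the part $\ccode{p1}\neq\ccode{p2}$ is stable because, by assumption, \ccode{p1} and \ccode{p2} are not written after \ccode{main}'s branch (dereferences modify the pointed-to cell, not the pointer), and the $\pc_{T2}$ part is backward-closed exactly as above. For \textbf{Monotonic atomic}, $c_\alpha=(\ccode{T[index]}\neq E)$, whose only writer is the monotonic \ccode{CAS}; since it writes a value $b\neq E$ and no other instruction touches \ccode{T}, once the cell leaves $E$ it never returns, so $c_\alpha$ is monotone.

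For commutativity the common idea is that, restricted to $c_\alpha$-states, the conflicting data access is neutralised, while the $\pc$-coordinates never clash because threads $i$ and $j$ write disjoint components. Concretely: under \textbf{Reachability} no remote thread can be poised to execute \ccode{W}, the unique writer of \ccode{data}, so every enabled $\tr{}{j}$ touches memory disjoint from the read $\alpha$, and a read commutes with any non-conflicting step. Under \textbf{Static pointer dereference}, $c_\alpha$ forces $\ccode{p1}\neq\ccode{p2}$, so the two dereferences address distinct cells and commute. Under \textbf{Monotonic atomic}, $c_\alpha$ forces $\ccode{T[index]}\neq E$, so any remote \ccode{CAS} on that cell finds its expected value $E$ absent, fails, and writes nothing --- it acts as a pure read, and two reads commute (and a \ccode{CAS} on a different cell is trivially disjoint). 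In each case both inclusions of the equality $R\circ Q=Q\circ R$ hold, because neither the data nor the disjoint $\pc$-coordinates are affected by reordering.

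The main obstacle I anticipate is making these commutativity arguments airtight under the \emph{left}-restriction rather than reasoning informally about ``enabled'' transitions: one must check that for every pair of composable $c_\alpha$-restricted steps the two composition orders yield \emph{exactly} the same relation, not merely the same data outcome, which is precisely where the guard-invariance from monotonicity and the disjointness of the $\pc$-coordinates are needed. The single most delicate point is \textbf{Monotonic atomic}, where the claim ``the \ccode{CAS} behaves as a read'' rests critically on the modelling assumption $a\neq b$ (so no write can reinstate $E$) together with the assumption that no other instruction writes \ccode{T}; without these the cell could oscillate and both monotonicity and the read-as-commuting argument would break.
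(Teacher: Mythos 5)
Your proposal is correct and takes essentially the same route as the paper's proof of \autoref{lem:dyncond}: a heuristic-by-heuristic verification of the clauses of \autoref{def:dynamicboth}, using backward closure of the guarded location sets for monotonicity (preservation under remote steps) and, for commutativity, the dichotomy that a potentially conflicting remote action is either excluded by the guard (so the restricted relation is empty) or rendered non-conflicting (disjoint accesses, or a CAS that fails and acts as a read). Your up-front observation that monotonicity makes the left-restriction on the second factor of each composition automatic is a slightly cleaner packaging of what the paper does implicitly, not a genuinely different argument.
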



\subsection{Instrumentation}\label{sec:instrument}

\renewcommand\colorbox[2]{{\adjustbox{margin=1pt,bgcolor=#1,cframe=#1 0pt 0pt -1.5pt}{#2}}}

\autoref{fig:illustration} demonstrated how our 
instrumentation adds branches to dynamically implement
the basic single-action rule.
Lipton reduction however distinguishes between pre- and post-commit phases. 
Here, we provide an instrumentation
that satisfies the constraints on these phases
(see  L1--L4 in \autoref{sec:prelim}).
Roughly, 
we transform each CFG $G_i = (V_i, \delta_i)$ into an instrumented
$G'_i\defn (V_i', \delta'_i)$ as follows:
\begin{enumerate}[noitemsep, topsep=0pt, parsep=0pt, partopsep=0pt]
\item
Replicate all $l_a\in V_i$ to new locations in
$V_i' = \set{
\colorbox{green!30}{$l_a^N$},
\colorbox{orange!30}{$l_a^R$},
\colorbox{red!30}{$l_a^L$},
\colorbox{orange!30}{$l_a^{\Rn}$},
\colorbox{red!30}{$l_a^{\Ln}$} \mid l_a \in V_i
}$: Respectively, there are
\colorbox{green!30}{external},
\colorbox{orange!30}{pre-\vphantom{l}}, and
\colorbox{red!30}{post-} locations,
plus two auxiliary pre- and post- locations for along branches.

\item
Add edges/branches with dynamic moving conditions according to~\autoref{t:instrument}.
\end{enumerate}



\begin{table}[b]
\centering
\caption{The CFG instrumentation
}
\label{t:instrument}
\begin{tabular}{l|p{5cm}|p{6.5cm}}
\hline
&
$G_i\defn (V_i,\delta_i) $&
$V_i',\delta'$ in $G_i'$ (pictured)
\\\hline

R1& 
$\forall \tuple{l_a, \alpha, l_b} \in \delta_i
\colon$ &
\raisebox{-.5\height}{
\begin{tikzpicture}\centering
  \tikzstyle{e}=[->]
  \tikzstyle{every node}=[inner sep=0,outer sep=1.5pt,font=\small, node distance=1.5cm]
  \node (x) at (-2.4,0) [anchor=east] {\hphantom{\colorbox{red!30}{$l_a^L$}}}; 
  \node (r) at (0,0) [anchor=east] {\colorbox{green!30}{$l_a^{N}$}};
  \node (rr) at (2.5, .3) {\colorbox{orange!30}{$l_b^{R}$}};
  \node (rl) at (2.5,-.3) {\colorbox{red!30}{$l_b^{L}$}};
  \path (r) edge[e,bend right=5] node[above,sloped] {$c_\alpha\lrestr\alpha$} (rr);
  \path (r) edge[e,bend left =5] node[below,sloped] {$\lnot c_\alpha\lrestr\alpha$} (rl);
\end{tikzpicture}  \hspace{-1em}
}\\\hline

R2&
$\forall \tuple{l_a, \alpha, l_b} \in \delta_i
\colon$ &
\raisebox{-.5\height}{
\begin{tikzpicture}\centering
  \tikzstyle{e}=[->]
  \tikzstyle{every node}=[inner sep=0,outer sep=1.5pt,font=\small, node distance=1.5cm]
  \node (x) at (-2.4,0) [anchor=east] {\hphantom{\colorbox{red!30}{$l_a^L$}}}; 
  \node (r) at (0,0) [anchor=east] {\colorbox{orange!30}{$l_a^{\Rn}$}};
  \node (rr) at (2.5, .3) {\colorbox{orange!30}{$l_b^{R}$}};
  \node (rl) at (2.5,-.3) {\colorbox{red!30}{$l_b^{L}$}};
  \path (r) edge[e,bend right=5] node[above,sloped] {$c_\alpha\lrestr\alpha$} (rr);
  \path (r) edge[e,bend left =5] node[below,sloped] {$\lnot c_\alpha\lrestr\alpha$} (rl);
\end{tikzpicture}  \hspace{-1em}
}\\\hline

R3&
$\forall l_a \in V_i \colon$
& \hspace{-1em}
\raisebox{-.5\height}{
\begin{tikzpicture}\centering
  \tikzstyle{e}=[->]
  \tikzstyle{every node}=[font=\small, node distance=1.5cm]

  \node (l) at (0,0) [anchor=east] {\colorbox{orange!30}{$l_a^R$}};
  \node (n) at (2.2, 0) {\colorbox{orange!30}{$l_a^{\Rn}$}};
  \path (l) edge[e] node[above,sloped] {\true} (n);
\end{tikzpicture}
}
\\\hline

R4&
$\forall l_a \in V_i \setminus \LFS_i \colon$
& \hspace{-1em}
\raisebox{-.5\height}{
\begin{tikzpicture}\centering
  \tikzstyle{e}=[->]
  \tikzstyle{every node}=[font=\small, node distance=1.5cm]

  \node (l) at (0,0) [anchor=east] {\colorbox{red!30}{$l_a^L$}};
  \node (ll) at (2.2, .3) {\colorbox{red!30}{$l_a^{\Ln}$}};
  \node (ln) at (2.2,-.3) {\colorbox{green!30}{$l_a^{N}$}};
  \path (l) edge[e,bend left=-5] node[above left,sloped] {$c({l_a})$} (ll);
  \path (l) edge[e,bend right=-5] node[below left,sloped] {$\lnot c({l_a})$} (ln);

  \node (l) at (4.1, 0) {\parbox{2.5cm}{
	with~$c({l_a})\defn $\\
	$\bigwedge_{(l_a,\alpha,l_{b})\in \delta_i} c_{\alpha}$
}};
\end{tikzpicture} \hspace{-3em}
}
\\\hline

R5&
$\forall \tuple{l_a, \alpha, l_b} \in \delta_i,\allowbreak l_a\in V_i \setminus \LFS_i\colon$
&  \hspace{-1em}
\raisebox{-.5\height}{
\begin{tikzpicture}\centering
  \tikzstyle{e}=[->]
  \tikzstyle{every node}=[font=\small, node distance=1.5cm]

  \node (x) at (0,0) [anchor=east] {\hphantom{\colorbox{red!30}{$l_a^L$}}}; 
  \node (lp) at (2.2,0)  {\colorbox{red!30}{$l_a^{\Ln}$}};
  \node (l) at (4.95, 0) {\colorbox{red!30}{$l_b^{L}$}};
  \path (lp) edge[e] node[above,sloped] {$\alpha$} (l);
\end{tikzpicture}  \hspace{-1em}
}
\\\hline

R6&
$\forall l_a \in \LFS_i \colon$
& \hspace{-1em}
\raisebox{-.5\height}{
\begin{tikzpicture}\centering
  \tikzstyle{e}=[->]
  \tikzstyle{every node}=[font=\small, node distance=1.5cm]

  \node (l) at (0,0) [anchor=east] {\colorbox{red!30}{$l_a^L$}};
  \node (n) at (2.2, 0) {\colorbox{green!30}{$l_a^{N}$}};
  \path (l) edge[e] node[above,sloped] {\true} (n);
\end{tikzpicture}
}
\\\hline
\end{tabular}
\end{table}

The rules in \autoref{t:instrument} precisely describe the
instrumented edges in  $G'_i$:
for each graph part in the original $G_i$ (middle column),
the resulting parts of $G'_i$ are shown (right column).
As no non-movers are allowed in the post phase, R4 only checks the dynamic moving condition for all outgoing 
transitions of a post-location \colorbox{red!30}{$l_a^{L}$}.
If it fails, the branch goes to an external location \colorbox{green!30}{$l_a^{N}$}
from where the actual action can be executed (R1).
If it succeeds, then the action commutes and can safely
be executed while remaining in the post phase (R5).
We do this from an intermediary post location \colorbox{red!30}{$l_a^{\Ln}$}.
Since transitions $\alpha$ thus need to be split up into two steps in the post 
phase, dummy steps need to be introduced in the pre phase (R1 and R2)
to match this (R3),
otherwise we lose bisimilarity (see subsequent subsection).
As an intermediary pre location, we use \colorbox{orange!30}{$l_a^{\Rn}$}.

All new paths in the instrumented $G_i'$ adhere to the pattern:\\
$\colorbox{green!30}{$l^N_{1}$}\tr{\alpha_1}{}
 \colorbox{orange!30}{$l^R_{2}$}\dots
 \colorbox{orange!30}{$l^R_{k}$}\tr{\alpha_k}{}
 \colorbox{red!30}{$l^L_{k+1}$} \dots
 \colorbox{red!30}{$l^L_{n}$} \tr{\alpha_n}{}
 \colorbox{green!30}{$l^N_{n+1}$}$.
Moreover, using
the notion of \concept{location feedback sets} (LFS) defined in~\autoref{def:lfs},
R4 and R6 ensure that
all cycles in the post phase contain an external state.
This is because our reduction theorem (introduced later) allows
non-terminating transactions as long as they remain in the pre-commit phase
(it thus generalizes~L3).
\autoref{ex:instrument} shows a simple example CFG with its instrumentation.
The subsequent reduction will completely hide the internal states,
avoiding exponential blowup in the TS (see \autoref{sec:reduction}).

\begin{figure}[b]
\hspace{-1em}
\adjustbox{width=1.03\textwidth,keepaspectratio}{

%
%
\tikzstyle{every node}=[inner sep=0,outer sep=1,font=\normalsize, node distance=1.5cm]
\tikzstyle{proc}=[inner sep=2pt]
\tikzstyle{e}=[->,line width=.75pt]
 
\scriptsize
\begin{tikzpicture}

\newcommand\lfslocs{(0,0)/1/LFS}
\newcommand\nlfslocs{(8,0)/2/N}
\xdef\cfglocs{\nlfslocs,\lfslocs}

\newcommand\redge{1/2//\alpha/B}
\newcommand\ledge{2/1//\beta/L}

\xdef\cfgedge{\redge,\ledge}

    \foreach \pos/\name/\t in \cfglocs
        \node[proc] (\name) at \pos  {};

    \foreach \pos/\name/\t in \cfglocs {
       \ifthenelse{\equal{\name}{1}}{
        \node[proc] (X\name) at (0,-1.5)  {$l_\name$};
        }{
        \node[proc] (X\name) at (0, 1.5) {$l_\name$};
        }
       }

	\foreach \s/\d/\p/\a/\t in \cfgedge {
    	\draw[e] (X\s) to[bend left=10] node[midway,sloped,above,outer sep=4pt] (x\s) {} (X\d);
		\node at (x\s) {$\a$};
	}

    \foreach \pos/\name/\t in \cfglocs
        \node[proc] (N\name) at  ($ (\name)+(1.1,0) $) {\colorbox{green!30}{$l_\name^N$}};

    \foreach \pos/\name/\t in \cfglocs
        \node[proc] (L\name) at  ($ (\name)+(2.6,0) $) {\colorbox{red!30}{$l_\name^L$}};

    \foreach \pos/\name/\t in \cfglocs
        \node[proc] (Ln\name) at  ($ (\name)+(4.1,0) $) {\colorbox{red!30}{$l_\name^{\Ln}$}};

    \foreach \pos/\name/\t in \cfglocs
        \node[proc] (R\name) at  ($  (\name)+(7.1,0) $) {\colorbox{orange!30}{$l_\name^R$}};

    \foreach \pos/\name/\t in \cfglocs
        \node[proc] (Rn\name) at  ($ (\name)+(5.6,0) $) {\colorbox{orange!30}{$l_\name^{\Rn}$}};

	\foreach \s/\d/\p/\a/\t in \cfgedge {
       \ifthenelse{\equal{\s}{1}}{
       \draw[e] (N\s.south) to[\p,bend left=-35] node[pos=.06,sloped,below] (x\s) {$c_\a\lrestr\a$} (R\d.south);
       }{
    	\draw[e] (N\s.north west) to[\p,bend left=-55] node[pos=.5,sloped,below,yshift=-3] (x\s) {$c_\a\lrestr\a$} (R\d.north east);
	   }
	  \ifthenelse{\equal{\s}{1}}{
    	\draw[e] (N\s.north) to[\p,bend left=45] node[pos=.09,sloped,above] (x\s) {$\neg c_\a\lrestr\a$} (L\d.north east);
       }{
    	\draw[e] (N\s.south west) to[\p,bend left=35] node[pos=.3,sloped,above] (x\s) {$\neg c_\a\lrestr\a$} (L\d.south east);
	   }
	  \ifthenelse{\equal{\s}{1}}{
    	\draw[e] (Rn\s.north east) to[\p,bend left=45] node[pos=.05,sloped,above] (x\s) {$c_\a\lrestr\a$} (R\d.north);
       }{
    	\draw[e] (Rn\s.south west) to[\p,bend left=35] node[pos=.08,sloped,above] (x\s) {$c_\a\lrestr\a$} (R\d.south east);
	   }
	  \ifthenelse{\equal{\s}{1}}{
    	\draw[e] (Rn\s.north east) to[\p,bend left=45] node[pos=.1,sloped,below] (x\s) {$\neg c_\a\lrestr\a$} (L\d.north west);
		}{
    	\draw[e] (Rn\s.south west) to[\p,bend left=35] node[pos=.05,sloped,below] (x\s) {$\neg c_\a\lrestr\a$} (L\d.south east);
		}
	}

    \foreach \pos/\name/\t in \cfglocs
        \draw[e] (R\name) edge[bend left=0] node[midway,auto,yshift=-4] {\true} (Rn\name);

    \foreach \pos/\name/\t in \nlfslocs { 
   		\draw[e] (L\name) edge[bend right=0] node[pos=.5,sloped,below,yshift=-3] {$c(l_\name)$} (Ln\name);
		\draw[e] (L\name) edge[bend left=0] node[pos=.5,sloped,below,yshift=-3] {$\neg c(l_\name)$} (N\name);
	}

	\foreach \s/\d/\p/\a/\t in \cfgedge {
       \ifthenelse{\equal{\t}{LFS}}{}{ 
       \ifthenelse{\equal{\s}{1}}{
	       	\draw[e] (Ln\s.north east) to[\p,bend left=40] node[pos=.03,sloped,above] (x\s) {$\a$} (L\d.north);      
       	}{
    		\draw[e] (Ln\s.north west) to[\p,bend left=-40] node[pos=.03,sloped,above] (x\s) {$\a$} (L\d.north);
		}
		}
	}

    \foreach \pos/\name/\t in \lfslocs 
        \draw[e] (L\name) edge[bend left=0] node[midway,auto,yshift=-4] {\true} (N\name);

\end{tikzpicture}

\caption{Instrumentation (right) of a 2-location CFG (left) with $\LFS=\set{l_1}$.}
\label{ex:instrument}
\end{figure}
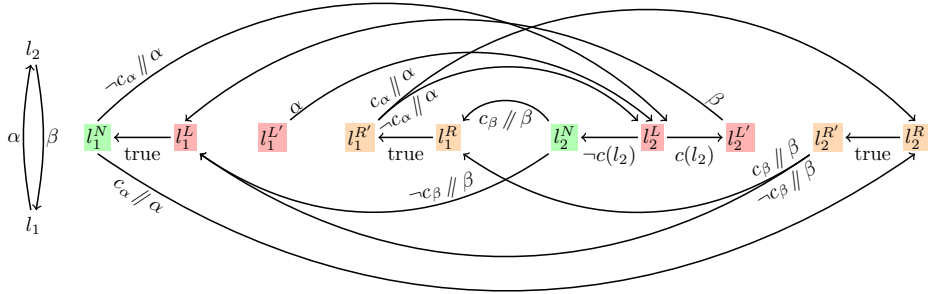

\begin{definition}[LFS]\label{def:lfs}
A location feedback set (LFS) for thread $i$ is a subset $\LFS_i\subseteq V_i$
such that for each cycle $C = l_1,..,l_n,l_1$ in $G_i$ it holds that
\mbox{$\LFS_i\cap C \neq \emptyset$}.
The corresponding (state) feedback set (FS) is:
$\CC_i\defn \{ \tuple{\pc, d} \mid \pc_i\in \LFS_i) \}$.
\end{definition}

\begin{corollary}[\cite{kurshan1998static}]\label{cor:sfs}
$\bigcup_i \CC_i$ is a feedback set in the TS.
\end{corollary}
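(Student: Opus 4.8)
The plan is to unfold what it means for $\bigcup_i \CC_i$ to be a feedback set of the TS $C=(S,\to)$: namely, that every cycle of $C$ contains at least one state lying in $\bigcup_i \CC_i$. So I would fix an arbitrary cycle $\sigma_0 \to \sigma_1 \to \cdots \to \sigma_m = \sigma_0$ of positive length ($m\ge 1$) in the TS, writing $\sigma_t=(\pc^{(t)},d^{(t)})$, and aim to exhibit an index $t$ and a thread $i$ with $\pc^{(t)}_i\in\LFS_i$, which is exactly $\sigma_t\in\CC_i$.

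First I would record the two structural facts that drive the argument. Because the cycle is closed, $\pc^{(m)}=\pc^{(0)}$, and since states are compared componentwise this forces $\pc^{(m)}_i=\pc^{(0)}_i$ for \emph{every} thread $i$. Second, each transition $\sigma_t\to\sigma_{t+1}$ is a local step $\to_{i(t)}$ of a single thread $i(t)$: it advances $\pc_{i(t)}$ along some edge of $\delta_{i(t)}$ and leaves $\pc_j$ untouched for all $j\ne i(t)$. Consequently, for a fixed thread $i$, its program-counter component changes only on the $i$-steps, and every such change follows an edge of $G_i$; reading off the successive distinct values of $\pc_i$ therefore yields a closed walk $W_i$ in $G_i$ that starts and ends at $\pc^{(0)}_i$.

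Next I would pick a thread that actually moves. As $m\ge 1$, the first transition is an $i$-step for some $i$, so that $W_i$ has positive length. A closed walk of positive length in a directed graph contains a simple directed cycle (take a shortest closed subwalk of positive length; it cannot repeat a vertex), and by the defining property of $\LFS_i$ this simple cycle meets $\LFS_i$. Hence $W_i$ visits some location $l\in\LFS_i$. But every location visited by $W_i$ equals $\pc^{(t)}_i$ for some state $\sigma_t$ of the TS cycle; for that $t$ we get $\pc^{(t)}_i=l\in\LFS_i$, i.e.\ $\sigma_t\in\CC_i\subseteq\bigcup_i\CC_i$, as required. (Equivalently, one may invoke the cited result of \cite{kurshan1998static} directly, since $\bigcup_i\CC_i$ is precisely the set of states whose projection to thread $i$ lands in $\LFS_i$, and the TS is a subgraph of the asynchronous product $\prod_i G_i$.)

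The delicate points are bookkeeping rather than genuine obstacles. The main one is justifying that interleaving with the other threads does not spoil the projection: steps of $j\ne i$ keep $\pc_i$ fixed (while possibly changing the data $d$), so collapsing them still produces a legitimate walk in $G_i$ — this is exactly where the asynchrony of the composition is used. The second is the elementary graph fact that a positive-length closed walk contains a simple cycle, which is what lets me apply the $\LFS$ hypothesis whose quantifier ranges over cycles $l_1,\dots,l_n,l_1$; if one instead reads that definition as ranging over arbitrary closed walks, this step is unnecessary and $W_i$ applies directly. Edge cases are absorbed automatically: a self-loop in $G_i$ is already a simple cycle, and it suffices that a single moving thread yields a hit even when several threads move.
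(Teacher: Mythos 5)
Your proof is correct. The paper itself gives no proof of this corollary --- it is stated with a citation to Kurshan et al.\ and the justification is delegated entirely to that reference --- and your argument is precisely the standard one that citation stands for: a TS cycle returns every component of $\pc$ to its initial value; each transition is a local step of a single thread that moves only that thread's program counter along an edge of its CFG, so projecting the cycle onto any thread $i$ that takes a step yields a positive-length closed walk in $G_i$; such a walk contains a simple directed cycle, which by the defining property of $\LFS_i$ meets $\LFS_i$, and the corresponding TS state lies in $\CC_i$. You also correctly flag the two points where the bookkeeping matters: steps of threads $j\neq i$ fix $\pc_i$ (this is where asynchrony of the composition is used, and why the argument would fail for a synchronous product), and the reduction from closed walks to simple cycles, with self-loops covered as length-one cycles. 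One cosmetic remark: ``reading off the successive distinct values of $\pc_i$'' should be ``concatenating the edges of $\delta_i$ used by the $i$-steps,'' since a self-loop edge traversal changes no value but must still appear in the walk --- your edge-case paragraph shows you are aware of this, but the construction itself should be stated in terms of edges rather than value changes.
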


The instrumentation yields the following 3/4-partition of states
for all threads~$i$:
\begin{align}
\label{def:state-error}
\WW_i
	&\defn\bigr\{ (\pc,d) \mid
				\pc_i\in \{l_{sink}^N,l_{sink}^R,l_{sink}^L \} \bigr\}
	&\text{(\colorbox{green!30}{Error})} \\
\RR_i
	&\defn \bigr\{ \tuple{\pc, d} \mid \pc_i \in
					\{l^R,l^{\Rn}\} \bigr\} \setminus \WW_i
	&\text{(\colorbox{orange!30}{Pre-commit})} \\
\LL_i
	&\defn \bigr\{	
		\tuple{\pc, d} \mid \pc_i \in \{l^{L}, l^{\Ln} \}\bigr\} \setminus \WW_i
	&\text{(\colorbox{red!30}{Post-commit})} \\
\EX_i
	&\defn \bigr\{ \tuple{\pc, d}  \mid \pc_i \in
  				\{l^{N}\}\bigr\}  \setminus \WW_i
	&\text{(\colorbox{green!30}{Ext./non-error})} \\
\label{def:state-external}
\NN_i
	&\defn \EX_i \uplus \WW_i 
	&\text{(\colorbox{green!30}{External})}
\end{align}
\noindent
The new initial state is $(\pc_0', d_0)$, with 
$\forall i \colon \pc_{0,i}' = l_{0,i}^N$.
Let $\locs' \defn \prod_i V_i'$ and $C'\defn (\locs'\times\data, \to')$
be the transition system of the instrumented CFG.
The instrumentation preserves the behavior of the original~system:
\begin{lemma}\label{lem:instrument}
~\hspace{-2.1ex}
An error state is $\to$-reachable in the original~system
iff an error state is $\to'$-reachable
in the instrumented~system.
\end{lemma}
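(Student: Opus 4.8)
The plan is to prove both implications by relating runs of $C$ and $C'$ through the location projection $\pi\colon V_i'\to V_i$ that collapses the five copies of every location, $\pi(l_a^N)=\pi(l_a^R)=\pi(l_a^L)=\pi(l_a^{\Rn})=\pi(l_a^{\Ln})=l_a$, lifted coordinatewise to global locations and acting identically on data, so $\pi(\pc',d)=(\pi\circ\pc',\,d)$. The whole argument rests on one observation about \autoref{t:instrument}: the two branches of R1, R2 and R4 carry complementary guards, so $(c_\alpha\lrestr\alpha)\cup(\overline{c_\alpha}\lrestr\alpha)=\alpha$, and for R4 exactly one of $c(l_a),\lnot c(l_a)$ holds in each data valuation. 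Hence the instrumentation neither adds nor removes the data behaviour of any action $\alpha$; it only records a phase in the local copy. I classify the instrumented edges as \emph{action} steps (R1, R2, R5, which execute some $\alpha$) and \emph{silent} steps (R3, R4, R6, labelled $\true$ or a pure guard). Silent steps leave $d$ unchanged and modify only one thread's copy-location; they therefore interleave freely with the steps of all other threads and may be inserted into, or deleted from, a run without affecting reachability.

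For the forward direction I would show, by induction on the length of a $\to$-run $\sigma_0\to^*\sigma$, that there is a $\to'$-run from $\sigma_0'$ to some $\sigma'$ with $\pi(\sigma')=\sigma$ and with each $\pc_i'$ equal to one of the five copies of $\pc_i$ (note $\pi(\sigma_0')=\sigma_0$ since $\pc_{0,i}'=l_{0,i}^N$). The inductive step simulates one original move $(\pc,d)\stackrel{\alpha}{\to}_i(\pc[i:=l_b],d')$ by a \emph{firing gadget}: a bounded run of data-preserving silent steps of thread $i$ followed by one action step that executes $\alpha$ and lands in a copy of $l_b$. The case analysis over the current copy of $l_a$ is routine: from $l_a^N$ fire R1 directly; from $l_a^R$ take the $\true$-step R3 into $l_a^{\Rn}$ and fire R2; from $l_a^{\Ln}$ fire R5 directly; and from $l_a^L$ use R6 to reach $l_a^N$ (if $l_a\in\LFS_i$) and fire R1, or take the enabled R4 branch (if $l_a\notin\LFS_i$) into $l_a^{\Ln}$ (when $c(l_a)$ holds; then fire R5) or into $l_a^N$ (when $\lnot c(l_a)$; then fire R1). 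In every case exactly one guard is enabled at the current $d$, so the gadget never blocks and reproduces the step $(d,d')\in\alpha$. Since each action step lands in an $R$- or $L$-copy, reaching the original error location $\sink$ yields an instrumented state with $\pc_i'\in\{\sink^R,\sink^L\}\subseteq\WW_i$, i.e.\ an error state by \eqref{def:state-error}.

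For the backward direction I would apply $\pi$ to a given $\to'$-run ending in an instrumented error state and then delete its silent steps. Each surviving action step, being an execution of $\alpha$ (possibly left-restricted), is a valid $\to_i$-step on the projected locations and data, so the de-stuttered image is a genuine $\to$-run of $C$; it starts at $\pi(\sigma_0')=\sigma_0$ and ends at the projection of the error state, whose $i$-th component is $\sink$, hence an error state of $C$. (A sink base-location can only be entered by an action step, since silent steps preserve the base location, so the projected run really does reach $\sink$.)

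The main obstacle — in fact the only place needing care — is the forward firing gadget: one must verify that for each of the five copies, and in both the $l_a\in\LFS_i$ and $l_a\notin\LFS_i$ cases, the complementary guards always expose an enabled, data-preserving path to a location from which $\alpha$ fires, so that the instrumentation introduces neither a deadlock nor a spurious data change. Two bookkeeping points accompany this: lifting the thread-local gadget to the concurrent system — legitimate because silent steps touch a single local copy and fix $d$, hence interleave freely (cf.\ the commutativity discussion in \autoref{sec:prelim}) — and the sink accounting, namely that the action-reachable sink copies $\sink^R,\sink^L$ lie in $\WW_i$ while a sink copy has no further action-successors.
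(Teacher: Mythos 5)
Your proof is correct, and while its forward half coincides with the paper's, its backward half takes a genuinely different and cleaner route. For the left-to-right direction the paper argues exactly as you do: induction on the length of the original run, simulating each original step by at most two instrumented steps (one guard/internal step plus one action step), maintaining the invariant that every local location is a copy of the original one; your five-copy invariant versus the paper's three-copy invariant $X\in\set{R,L,N}$ is immaterial, since your gadget's action steps always land in $R$- or $L$-copies anyway. For the right-to-left direction, however, the paper first normalizes the final state to one without intermediate locations $l^{\Rn},l^{\Ln}$ (so that $\strip$ is defined), then inducts on the \emph{number of actions} and rearranges the run so that each internal/conditional step immediately precedes its action, invoking the thread bisimilarity established for \autoref{lem:instrument2} to replace a disabled conditional branch by the enabled one leading to a bisimilar, error-preserving state. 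Your projection $\pi$ makes all of this unnecessary: every R1/R2/R5 edge carries a restriction of an original action $\alpha$ along an original edge $(l_a,\alpha,l_b)\in\delta_i$, so its $\pi$-image is a legal $\to_i$ step, while R3/R4/R6 steps project to stutters and are simply erased; since $\pi$ also collapses $l^{\Rn}$ and $l^{\Ln}$, no normalization, no reordering, and no bisimulation machinery is needed. What the paper's heavier argument buys is only reuse (the bisimulation is established anyway for \autoref{lem:instrument2}); your homomorphism argument is self-contained and arguably preferable for this lemma. One caveat worth fixing: your aside that silent steps ``interleave freely'' with other threads is both unnecessary and, taken literally, false for the guarded R4 branches, whose guards $c(l_a)$ can be toggled by remote threads (this is precisely the loss of commutativity in \autoref{fig:noncommutativity}); fortunately your proof never relies on it, since in the forward gadget the silent steps of thread $i$ are executed consecutively, immediately before $i$'s own action, and in the backward direction silent steps are deleted after projection rather than commuted past anything.
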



Recall the situation illustrated in \autoref{fig:bisim-T1} within the example in \autoref{fig:illustration}. Rules R1, R2, and R4 of our instrumentation in \autoref{t:instrument} give rise to a similar problem as illustrated in the following.

\begin{equation*}
\begin{aligned}
\text{

\begin{tikzpicture}\centering

   \tikzstyle{e}=[minimum width=1cm]
   \tikzstyle{every node}=[font=\small, node distance=.5cm]

  \node (s1) {$\sigma_1$};
  \node (s2) [node distance=1.2cm,below of=s1] {$\sigma_2$};
  \node (s3) [right of=s2, xshift=.9cm] {$\sigma_3$};
  \path (s2.east) -- node[pos=.45]{$\tatrans_j$} (s3.west);
  \path (s1.south) -- node[midway,sloped]{$\tr{c_\alpha\lrestr \alpha}{}_i$} (s2.north);
  \node (s4) [xshift=.9cm,gray,right of=s1] {$\sigma_4$};
  \node (s3p) [gray,node distance=1.2cm,right of=s3] {$\sigma_3'$};
  \path (s1.east) -- node[gray,midway,sloped]{$\tatrans_j$}
  (s4.west);
  \path (s4.south east) -- node[gray,midway,sloped]{$\tr{\neg c_\alpha\lrestr \alpha}{}_{i}$} (s3p.north);

  \node (S1) [left of=s1,node distance=.5cm,e] {};
  \node (S2) [left of=s2,node distance=.5cm,e] {\scriptsize $l_a^{L}$};

  \node (S4) [right of=s3p,node distance=.5cm,e] {\scriptsize $l_a^{R}$};
  \node (S3) [above left of=s3,node distance=.5cm,e] {\scriptsize $l_a^{L}$};


  \path (s3) -- node[gray,sloped,pos=.48]{$\neq$} (s3p);
%
%
%

\end{tikzpicture}

\begin{tikzpicture}\centering

   \tikzstyle{e}=[minimum width=1cm]
   \tikzstyle{every node}=[font=\small, node distance=1cm]

  \node (s1) {$\sigma_1$};
  \node (s2) [right of=s1, xshift=.9cm] {$\sigma_2$};
  \path (s1) -- node[pos=.45]{$\tatrans_j$} (s2);

  \node (s4) [node distance=1.2cm,gray,below of=s1] {$\sigma_4$};
  \path (s1) -- node[gray,midway,sloped]{$\tr{\neg c(l_a)}{\hspace{-.5ex}i}$} (s4);
  \node (s3p) [gray,node distance=1.2cm,below of=s2] {$\sigma_3'$};
  \path (s4) -- node[gray,pos=.48]{$\tatrans_j$} (s3p);
  
  \node (s3) [node distance=1.5cm,right of=s3p] {$\sigma_3$};
  \path (s2) -- node[midway,sloped]{$\tr{c(l_a)}{\hspace{-.5ex}i}$} (s3.north);

  \node (S1) [left of=s1,node distance=.5cm,e] {\scriptsize $l_a^L$ };
  \node (S4) [left of=s4,node distance=.5cm,gray,e] {\scriptsize $l_a^{N}$};
  \node (S2) [right of=s2, node distance=.5cm,e] {\scriptsize $l_a^L$};
  \node (S3) [right of=s3,node distance=.5cm,e] {\scriptsize $l_a^{L'}$};
  \node (S4) [gray,above left of=s3p,node distance=.4cm,e] {\scriptsize $l_a^{N}$};
  \path (s3p) -- node[sloped,pos=.48]{$\neq$} (s3);

\end{tikzpicture}}
\end{aligned}
\end{equation*}

Hence, our instrumentation introduces non-movers.
Nevertheless, we can prove that the target states are bisimilar. 
This enables us to introduce a weaker notion of commutativity up to bisimilarity
which effectively will enable a reduction along one branch
(where reduction was not originally possible).
The details of the reduction are presented in the following section.
We emphasize that our implementation does not introduce any unnecessary non-movers.


\subsection{Reduction}\label{sec:reduction}

We now formally define the notion of thread bisimulation required for the reduction, as well as commutativity up to bisimilarity.


\begin{definition}[thread bisimulation]\label{def:bisim}
An equivalence relation $R$ on the states of a TS $(S, \to)$ is a thread bisimulation iff 

\centering
\begin{tikzpicture}

   \tikzstyle{e}=[minimum width=1cm]
   \tikzstyle{every node}=[font=\small, node distance=1.2cm]

  \node (s1) {$\sigma$};
  \node (s2) [below of=s1] {$\sigma'$};
  \node (s3) [right of=s1] {$\sigma_1$};

  \path (s1) edge[-] node[pos=.5,sloped,above] (nn) {$R$} (s2);
  \path (s1) -- node(m)[midway,sloped]{$\tatrans_i$} (s3);


  \node (s1p) [gray,right of=s3,xshift=1cm] {$\sigma$};
  \node (s2p) [below of=s1p] {$\sigma'$};
  \node (s3p) [right of=s1p] {$\sigma_1$};

  \path (s1p) edge[-] node[gray,pos=.5,sloped,above] (nnn) {$R$} (s2p);
  \path (s1p) -- node(m)[gray,midway,sloped]{$\tatrans_i$} (s3p);

  \node (s4p) [right of=s2p] {$\sigma'_1$};
  \path (s2p) -- node [midway,sloped]{$\tatrans_i$} (s4p);
  \path (s3p) edge[-] node[pos=.5,sloped,above]{$R$} (s4p);

  \node (n) [left of=nnn] {$\implies \exists \sigma_1'\colon$};

  \node (n) [left of=nn] {$\forall \sigma,\sigma',\sigma_1,i\colon$};
  
\end{tikzpicture}
\end{definition}

Standard bisimulation~\cite{Milner,Park} is an equivalence relation $R$ which satisfies the property 
from \autoref{def:bisim} when the indexes $i$ of the transitions are removed. 
Hence, in a thread bisimulation, in contrast to standard bisimulation, the transitions performed by 
thread $i$ will be matched by transitions performed by the same thread $i$.
As we only make use of thread bisimulations, we will often refer to them simply as
bisimulations.

\begin{definition}[commutativity up to bisimulation] \label{def:comm-bisim}
Let $R$ be a thread bisimulation on a TS $(S, \to)$.
The right and left commutativity up to $R$ of the transition relation $\to_i$
with $\to_j$, notation
$\to_i\, \rcomm_{R} \to_j$ /$\to_i\, \lcomm_{R} \to_j$ are defined as follows.

\noindent
$\to_i\, \rcomm_{R} \to_j\,\,\,\, \Longleftrightarrow$~~~~~~~~~~~~~~~~~~~~~~~~~~~~
$\to_i\, \lcomm_{R} \to_j\,\,\,\, \Longleftrightarrow$\\
\begin{tikzpicture}
   \tikzstyle{e}=[minimum width=1cm]
   \tikzstyle{every node}=[font=\small, node distance=.45cm]

  \node (s1) {$\sigma_1$};
  \node (s2) [node distance=1.2cm,below of=s1] {$\sigma_2$};
  \node (s3) [right of=s2, xshift=.6cm] {$\sigma_3$};
  \path (s2) -- node[pos=.45]{$\tatrans_j$} (s3);
  \path (s1) -- node(m)[midway,sloped]{$\tatrans_i$} (s2);

  \node (s1n) [xshift=2.5cm,right of=s1] {$\sigma_1$};
  \node (s2n) [gray,node distance=1.2cm,below of=s1n] {$\sigma_2$};
  \node (s3n) [gray,right of=s2n, xshift=.6cm] {$\sigma_3$};
  \path (s2n) -- node[gray,pos=.45]{$\tatrans_j$} (s3n);
  \path (s1n) -- node(m)[gray,midway,sloped]{$\tatrans_i$} (s2n);

  \node (s0n) [left of=m, xshift=-.4cm]  {$\implies \exists \sigma_3',\sigma_4\colon$};

  \node (s4n) [xshift=.9cm,right of=s1n] {$\sigma_4$};
  \path (s1n) -- node [midway,sloped]{$\tatrans_j$} (s4n);

  \node (s3pn) [node distance=1.1cm,right of=s3n] {$\sigma_3'$};
  \path (s4n) -- node[midway,sloped]{$\tatrans_i$} (s3pn);

  \path (s3pn) -- node(AA)[sloped,pos=.1]{} (s3n);

  \node (s3pn) [node distance=1em,below of=AA,xshift=-.3cm] {$\tuple{\sigma_3,\sigma_3'}\in R$};
  
\end{tikzpicture}
~~
\begin{tikzpicture}

   \tikzstyle{e}=[minimum width=1cm]
   \tikzstyle{every node}=[font=\small, node distance=.45cm]

  \node (s1) {$\sigma_1$};
  \node (s2) [right of=s1, xshift=.6cm] {$\sigma_2$};
  \node (s3) [node distance=1.2cm,below of=s2, xshift=.9cm] {$\sigma_3$};
  \path (s2) -- node(m)[midway,sloped]{$\tatrans_j$} (s3);
  \path (s1) -- node[midway,sloped]{$\tatrans_i$} (s2);

  \node (s1n) [xshift=2.2cm,right of=s2] {$\sigma_1$};
  \node (s2n) [gray,right of=s1n, xshift=.6cm] {$\sigma_2$};
  \node (s3n) [gray,node distance=1.2cm,below of=s2n, xshift=.9cm] {$\sigma_3$};
  \path (s2n) -- node[gray,midway,sloped]{$\tatrans_j$} (s3n);
  \path (s1n) -- node[gray,midway,sloped]{$\tatrans_i$} (s2n);

  \node (s4n) [node distance=1.2cm,below of=s1n] {$\sigma_4$};
  \path (s1n) -- node(m) [midway,sloped]{$\tatrans_j$} (s4n);

  \node (s3pn) [node distance=.9cm,right of=s4n] {$\sigma_3'$};
  \path (s4n) -- node[midway,sloped]{$\tatrans_i$} (s3pn);

  \node (s0n) [left of=m, xshift=-.4cm]  {$\implies \exists \sigma_3',\sigma_4\colon$};
  
  \path (s3pn) -- node(AA)[sloped,pos=.1]{} (s3n);

  \node (s3pn) [node distance=1em,below of=AA] {$\tuple{\sigma_3,\sigma_3'}\in R$};
  
\end{tikzpicture}
%
%
%
%
%
\end{definition}

Our reduction works on parallel transaction systems (\PAS), a specialized TS.
While its definition (\autoref{def:pas2}) looks complicated, most rules
are concerned with ensuring that all paths in the underlying TS
form transactions, i.e. that they conform to the pattern
$\colorbox{green!30}{$\sigma_{1}$}\tr{\alpha_1}{}\colorbox{orange!30}{$\sigma_{2}$}\dots\colorbox{orange!30}{$\sigma_{k}$}\tr{\alpha_k}{}\colorbox{red!30}{$\sigma_{k+1}$} \dots\colorbox{red!30}{$\sigma_{n}$} \tr{\alpha_n}{}\colorbox{green!30}{$\sigma_{n+1}$}$, where  $\alpha_k$ is the non-mover, etc. 
We have from the perspective of thread $i$ that:
$\sigma_1$ and $\sigma_{n+1}$ are \colorbox{green!30}{\concept{external}}, 
$\forall 1< x \le k \colon \sigma_{x}$ \colorbox{orange!30}{\concept{pre-commit}}, 
and $\forall k< x \le n \colon \sigma_{x}$ \colorbox{red!30}{\concept{post-commit}} states. The rest of the conditions ensure bisimilarity and constrain error locations.

The reduction theorem, \autoref{th:blockred}, then tells us that reachability of error states is preserved (and reflected)
if we consider only \PAS-paths between globally external states \NN.
The reduction thus removes all internal states $\II$ where $\II \defn \bigcup_i \II_i$
and
$\II_i \defn \LL_i\cup \RR_i$
(at least one internal phase). 

\begin{definition}[transaction system]
\label{def:pas2}
A parallel transaction system $\PAS$
is a transition system $\TS=\tuple{S, \tatrans}$
whose states are partitioned in three sets of phases and error states in one of
the phases, for each thread $i$. For each thread $i$, there exists a thread bisimulation 
relation $\cong_i$.
Additionally, the following properties hold
(for all $i$, all $j \neq i$):
\begin{enumerate}
\item\label{item2:part}
	$S = \RR_i \uplus \LL_i \uplus \NN_i$ and $\NN_i = \WW_i \uplus \EX_i$
	\hfill
	(the 3/4-partition)

\item \label{item2:postterm}
	$\forall \sigma \in \LL_i \colon \exists \sigma'\in\NN_i \colon
	\sigma \tatrans_i^+ \sigma'$
	 \hfill
	\hspace{-1em}(post phases terminate)


\item\label{item2:invar}
	$\tatrans_i\,\subseteq \LL_j^2\cup{\RR}_j^2\cup{\WW}_j^2
	\cup \EX_j^2 $
	\hfill
	($i$ preserves $j$'s phase)

\item\label{item2:errors}
	$\WW_i \lrestr \tatrans_i \rrestr \overline{\WW_i}=\emptyset$
	\hfill
	(local transitions preserve errors)

\item \label{item2:post}
	${\LL}_i\lrestr \tatrans_i\rrestr {\RR}_i=\emptyset$
\hfill
	((locally) post does not reach pre)

\item\label{item2:errors2}
	$\cong_{i}\,\subseteq \WW_i^2\cup \overline{\WW_i}^2$
	\hfill
	(bisimulation preserves (non)errors)

\item\label{item2:bisim-threads}
	$\cong_i\,\subseteq {\LL}_j^2\cup{\RR}_j^2\cup{\WW}_j^2\cup \EX_j^2 $
	\hfill
	($\cong_i$ entails $j$-phase-equality)
	

\item \label{item2:rmover}
	$(\tatrans_i \rrestr \RR_{i})  \rcomm_{\set j}\, \tatrans_j$
		 \hfill
	($i$ to pre right commutes up to $\cong_j$ with~$j$)

\item \label{item2:lmover}\hspace{.0mm}$
	  (\LL_i \lrestr \tatrans_i) \lcomm_{\set{i,j}}\, \tatrans_j$
		 \hfill
	($i$ from post left~commutes up to $\cong_{\{i,j\}}$~with~$j$)	
	
\end{enumerate}
In \autoref{item2:rmover} and \autoref{item2:lmover}, $\rcomm_{Z}$ and $\lcomm_{Z}$ (for a set of 
threads $Z$) are short notations for $\rcomm_{\cong_Z}$ and $\lcomm_{\cong_Z}$, respectively,
with $\cong_Z$ being the transitive closure of the union of all $\cong_i$ for $i \in Z$. 
\end{definition}

%

\begin{theorem}
\label{th:blockred}
The block-reduced transition relation $\brtrans$ of a parallel transaction system
$\PAS=\tuple{S, \tatrans}$ 
is defined in two steps:
\begin{align*}
\trtrans_i &\defn \NN_{j\neq i} \lrestr \tatrans_i
&\text{($i$ only transits when all $j$ are in external)}\\
\brtrans_i &\defn \NN_i\lrestr 
	(\trtrans_i\rrestr \overline{\NN_i})^* \trtrans_i \rrestr \NN_i
&\text{(block steps skip internal states $\overline{\NN_i}$)}
\end{align*}
Let $\brtrans \defn \bigcup_i \brtrans_i$, $\NN \defn \bigcap_{i} \NN_i$ and $\WW \defn \bigcup_{i} \WW_i$.
We have $p\tatrans^{*} q$ for $p\in \NN$ and $q\in \WW$ if and only if
$p \brtrans^{*} q'$ for $q'\in \WW$.
\end{theorem}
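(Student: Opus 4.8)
The two directions have very different characters, so I would treat them separately. The ``if'' direction is immediate: every block step is a nonempty composition of genuine transitions, since $\trtrans_i \subseteq \tatrans_i \subseteq \tatrans$ gives $\brtrans_i \subseteq \tatrans^{+}$ and hence $\brtrans^{*} \subseteq \tatrans^{*}$. Thus from $p \brtrans^{*} q'$ with $q' \in \WW$ we get $p \tatrans^{*} q'$ with the very same error target, and we take $q \defn q'$. The ``only if'' direction carries all the content. Here I would take an arbitrary witnessing path $p = \sigma_0 \tatrans \sigma_1 \tatrans \cdots \tatrans \sigma_m = q$ with $p \in \NN$, $q \in \WW$, and rewrite it into a $\brtrans^{*}$-path whose endpoint is still an error, using the mover and bisimulation properties of \autoref{def:pas2}.

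The argument would be organised around a \emph{normalisation lemma}: if $\sigma \tatrans^{*} \tau$ with $\sigma,\tau \in \NN$ (both globally external), then $\sigma \brtrans^{*} \tau'$ for some $\tau'$ with $\tau \cong \tau'$, where $\cong$ abbreviates $\cong_Z$ for $Z$ the set of all threads. Granting this, error preservation is cheap: by item~\ref{item2:errors2} each $\cong_i$ preserves membership in $\WW_i$, and by item~\ref{item2:bisim-threads} it preserves every \emph{other} thread's phase, in particular membership in each $\WW_j$; hence $\cong_i$, and therefore its transitive closure $\cong$, preserves membership in $\WW = \bigcup_k \WW_k$, so $\tau \in \WW$ forces $\tau' \in \WW$. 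Before invoking the lemma I would massage the given path so that it ends in a \emph{globally} external error state $\tau \in \NN \cap \WW$, which is not guaranteed by $q$ alone (we only know $q \in \WW_{i_0} \subseteq \NN_{i_0}$, while other threads may be mid-transaction). A thread $j$ that is post-commit at $q$ is driven to $\NN_j$ by item~\ref{item2:postterm}; such thread-$j$ steps preserve $i_0$'s error by item~\ref{item2:invar}, so appending them keeps the target an error. A thread $j$ that is pre-commit at $q$ is instead \emph{removed} from the prefix: its into-pre transitions are right movers (item~\ref{item2:rmover}), so by the swap lemma below they commute rightward past the remainder of the path, leaving $j$ in $\NN_j$ throughout, up to a bisimilar (hence still erroneous) endpoint.

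The normalisation lemma itself I would prove by reordering. The two workhorses are swap steps read off \autoref{def:comm-bisim}: a transition into a pre-commit state may be exchanged with a following step of another thread (right commutativity, item~\ref{item2:rmover}), and a transition out of a post-commit state with a preceding step of another thread (left commutativity, item~\ref{item2:lmover}); the $3/4$-partition (item~\ref{item2:part}) is what lets us classify each transition as a right mover, a commit, or a left mover. In the classical Lipton setting a swap preserves the intermediate state exactly; here it preserves it only up to $\cong_j$ (resp.\ $\cong_{\{i,j\}}$), so after a swap the suffix must be \emph{replayed} from a bisimilar state. That this replay exists transition-for-transition and thread-for-thread is exactly what \autoref{def:bisim} guarantees, and that it keeps every step in the same phase—so the mover classification is stable along the replay—follows from items~\ref{item2:errors2} and~\ref{item2:bisim-threads}. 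I would then fix a well-founded measure on paths, e.g.\ the number of pairs of transitions from different threads whose transactions are interleaved, and show each swap strictly decreases it while fixing the endpoints up to $\cong$; items~\ref{item2:postterm} and~\ref{item2:post} (post phases terminate and never re-enter the pre phase) ensure that the gathered blocks really have the shape external--pre${}^{*}$--commit--post${}^{*}$--external demanded by $\brtrans$.

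The step I expect to be genuinely delicate is the interaction between reordering and bisimilarity. Because a swap yields only a bisimilar—not identical—successor, the states ``drift'' through $\cong$-classes as the path is normalised, and one must verify both that the chosen measure still decreases under this drift and that replaying the suffix from a bisimilar state does not reintroduce interleavings faster than they are eliminated. Keeping the bisimulation \emph{per thread} (as in \autoref{def:bisim}) and exploiting that $\cong_i$ fixes every thread's phase (item~\ref{item2:bisim-threads}) is what makes this bookkeeping close, and is precisely the reason \autoref{th:blockred} requires thread bisimulation rather than ordinary bisimulation.
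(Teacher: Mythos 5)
Your proposal is correct and matches the paper's own proof in all essentials: the appendix likewise serializes the path by mover swaps whose suffixes are replayed from bisimilar states (\autoref{lemma:right-move-uts}, \autoref{lem:ots}), completes stuck post phases via \autoref{item2:postterm} of \autoref{def:pas2}, discards the uncommitted pre-commit tail, and transfers the error along $\cong$ using \autoref{item2:errors2} and \autoref{item2:bisim-threads}, before converting the resulting $\trtrans$-path into a $\brtrans$-path (\autoref{th:dynred}, \autoref{th:blockred2}); the paper merely replaces your well-founded measure by induction on path length over an explicit cts/uts normal form. One small repair: a single adjacent swap does not strictly decrease the number of interleaved transaction \emph{pairs}, so use the finer count of crossing pairs of individual transitions (inversions), which remains well-defined under replay because bisimilar states replay the same threads in the same order (\autoref{cor:bisim-threads}).
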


Our instrumentation  from \autoref{t:instrument} in~\autoref{sec:instrument}
indeed gives rise to a \PAS (\autoref{lem:instrument2}) with the state partitioning from (\autoref{def:state-error}--\ref{def:state-external}).
The following equivalence relation $\sim_i$ over locations
becomes the needed bisimulation $\cong_i$ when lifted to states.
(The locations in the rightmost column of \autoref{t:instrument} are intentionally positioned such that vertically aligned locations are bisimilar.)
\begin{align*}
\sim_i\defn&\set{ (l^X,l^Y)\mid l\in V_i\land X,Y\in\set{L,R}}\cup
 			  \set{(l^{X},l^{Y})\mid l\in V_i \land
					X,Y\in\set{N,\Rn,\Ln}}\\
\cong_i \defn&\set{ ((\pc,\hspace{-.3ex}d),\hspace{-.5ex}(\pc',\hspace{-.3ex}d'))\hspace{-.5ex} \mid\hspace{-.5ex}
    	 d\hspace{-.5ex}=\hspace{-.5ex}d'
    \land \pc_i\hspace{-.5ex}\sim_i\pc'_i
    \land \forall j\neq i\colon\hspace{-.5ex} \pc_j\hspace{-.5ex}=\hspace{-.5ex}\pc_j' }
\end{align*}
The dynamic both-moving condition in \autoref{def:dynamicboth} is sufficient to prove (\autoref{item2:rmover}--\ref{item2:lmover}). The LFS notion in
\autoref{def:lfs} is sufficient to prove 
post-phase termination (\autoref{item2:postterm}).

\begin{lemma}\label{lem:instrument2}
The instrumented TS $C' = (\locs'\times\data, \to')$ is a \PAS.
\end{lemma}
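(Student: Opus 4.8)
The plan is to exhibit the ingredients required by \autoref{def:pas2}---the $3/4$-partition together with a per-thread bisimulation $\cong_i$---and then discharge its nine properties, most of which follow structurally from the shape of \autoref{t:instrument}. For the partition I take exactly $\WW_i,\RR_i,\LL_i,\EX_i,\NN_i$ from (\ref{def:state-error}--\ref{def:state-external}); since every location of $V_i'$ carries a unique superscript in $\set{N,R,\Rn,L,\Ln}$, disjointness and $\NN_i=\WW_i\uplus\EX_i$ are immediate, giving \autoref{item2:part}. For the bisimulation I take $\cong_i$ as defined just above the statement, but first verify that it is a thread bisimulation (\autoref{def:bisim}): the equivalence axioms are clear because $\sim_i$ splits each base location's copies into the two classes $\set{L,R}$ and $\set{N,\Rn,\Ln}$. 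A remote step $\to_m$ ($m\neq i$) is matched trivially, as it leaves $\pc_i$ untouched and data is shared. A step $\to_i$ is the phenomenon of \autoref{fig:bisim-T1}, checked by a finite case analysis over the two $\sim_i$-classes and rules R1--R6: at equal data, bisimilar sources carry the same action to again-$\sim_i$-related targets (e.g.\ $l^R$ exits by R3 to $l^{\Rn}$ while $l^L$ exits by R4 to $l^{\Ln}$ or $l^N$, all in class $\set{N,\Rn,\Ln}$; and the unconditional R5 edge $l^{\Ln}\tr{\alpha}{}l_b^L$ is matched by the guarded R1/R2 edges out of $l^N,l^{\Rn}$, whose $c_\alpha$ and $\neg c_\alpha$ targets $l_b^R,l_b^L$ both lie in class $\set{L,R}$).

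Several \PAS-properties are then direct. Because a step of thread $i$ rewrites only $\pc_i$ (and data), it fixes $\pc_j$ and hence $j$'s phase, giving \autoref{item2:invar}; the same observation that $\cong_i$ forces $\pc_j=\pc_j'$ yields \autoref{item2:bisim-threads}. \autoref{item2:post} is read off \autoref{t:instrument}, since the only post-phase out-edges are R4 (to $l^{\Ln}$ or external $l^N$), R5 (to $l^L$) and R6 (to $l^N$), none entering a pre-location. For \autoref{item2:postterm} I invoke \autoref{def:lfs} and \autoref{cor:sfs}: a post-phase walk follows the CFG through R4/R5, but at every $\LFS_i$-location R6 forces an exit to $\NN_i$, while the subgraph on $V_i\setminus\LFS_i$ is acyclic, so no post-phase walk can avoid $\NN_i$ indefinitely.

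The error conditions \autoref{item2:errors} and \autoref{item2:errors2} need care at the sink. Since $\sink$ has no outgoing CFG edge, rules R1, R2 and R5 create no edge leaving any sink copy, so the only sink out-edges come from R3/R4/R6 and keep the same base location $\sink$; once every copy of $\sink$ is counted as an error state, $\WW_i$ is therefore absorbing under $\to_i$ (\autoref{item2:errors}). Because $\sim_i$ relates only copies of the \emph{same} base location, bisimilar states agree on whether that base is $\sink$, giving \autoref{item2:errors2}. (I flag this as the one place where the error predicate must be read as covering all sink copies, not only $l_{sink}^N,l_{sink}^R,l_{sink}^L$, otherwise the R3/R4 edges out of $l_{sink}^R,l_{sink}^L$ would violate both conditions.)

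The crux is the commutativity-up-to-bisimulation requirements \autoref{item2:rmover} and \autoref{item2:lmover}, where I rely on \autoref{lem:dyncond}/\autoref{def:dynamicboth}. A thread-$i$ step into $\RR_i$ (resp.\ out of $\LL_i$) is either a guard-free bookkeeping step (R3, R5) that commutes with any remote step exactly, or a guarded action $c_\alpha\lrestr\alpha$ whose dynamic both-moving condition supplies data-level both-commutativity with every action of thread $j$ together with monotonicity: $\to_j$ cannot falsify $c_\alpha$. Swapping an $i$- and a $j$-step, the shared data recombines by \autoref{def:dynamicboth} and $c_\alpha$ still holds when $i$ moves second. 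The genuine obstacle---the reason plain commutativity fails and the whole ``up to $\cong$'' machinery is needed---is that, evaluated against the swapped data, thread $j$ (and, in the left-mover case, thread $i$ itself) may take the \emph{other} guard branch and land in a different location; but those two targets are $\sim$-related, so the endpoints are related by $\cong_j$ (resp.\ $\cong_{\set{i,j}}$), exactly the conclusion of \autoref{def:comm-bisim}. This reconciliation of the branch mismatch with commutativity is the step I expect to require the most careful bookkeeping; everything else is a finite inspection of \autoref{t:instrument}.
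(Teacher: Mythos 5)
Your route is structurally the paper's own (\autoref{th:instrument} in \autoref{app:proofs}): the same 3/4-partition, the same location equivalence lifted to $\cong_i$, a case analysis over R1--R6 to show $\cong_i$ is a thread bisimulation, the LFS/acyclicity argument for \autoref{item2:postterm}, and the absorption of guard-branch mismatches into $\cong_j$ (resp.\ $\cong_{\set{i,j}}$) for the mover conditions. Your flag about the sink is also legitimate: with $\WW_i$ read literally, the R3 edge $l_{sink}^R \to l_{sink}^{\Rn}$ and the R4 edge $l_{sink}^L \to l_{sink}^{\Ln}$ exit $\WW_i$, so \autoref{item2:errors} and \autoref{item2:errors2} only hold once all five sink copies count as errors --- which is how the paper implicitly argues (``sink locations are only $\cong_i$-equivalent to other sink locations'').

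There is, however, one genuine gap, and it sits at the crux. You classify the R5 edge $l_a^{\Ln} \tr{\alpha}{} l_b^L$ as a ``guard-free bookkeeping step that commutes with any remote step exactly.'' It is guard-free, but it is not bookkeeping: it executes the actual data action $\alpha$, and it is the \emph{only} post-phase edge that touches data, i.e.\ exactly the transition for which \autoref{item2:lmover} is non-trivial. An unguarded $\alpha$ has no reason to commute with remote actions; the paper closes this case by a history argument that your text never supplies: to stand at $l_a^{\Ln}$, thread $i$ must previously have taken the positive R4 branch, so $c(l_a) = \bigwedge_{(l_a,\alpha,l_b)\in\delta_i} c_\alpha$ held at that moment; by the monotonicity clause of \autoref{def:dynamicboth}, no remote step since then (in particular not the $\beta$ being swapped) can have disabled $c_\alpha$, and $i$ itself has not moved; hence when R5 fires the step is effectively $c_\alpha \lrestr \alpha$, the data-level commutativity of \autoref{def:dynamicboth} applies, and the residual location mismatches of $i$ and $j$ are absorbed by $\cong_{\set{i,j}}$. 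Your monotonicity sentence (``$c_\alpha$ still holds when $i$ moves second'') is phrased for the guarded R1/R2 edges in the right-mover configuration and does not carry the guard from R4 forward to R5. As written, \autoref{item2:lmover} therefore remains unproven for R5, and with it the instrumented system's claim to satisfy \autoref{def:pas2}; restoring the R4-to-R5 propagation argument repairs the proof and brings it into line with the paper's.
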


All of the apparent exponential blowup of the added phases
($5^{\sizeof{\threads}}$)
is hidden by the reduction
as \brtrans only reveals external states $\NN\defn \bigcap_{i} \NN_i$
(note that $S = \II \uplus \NN$)
and there is only one instrumented external location
(replicated sinks can be eliminated easily with a more verbose instrumentation).

\section{Block Encoding of Transition Relations}
\label{sec:encoding}

We implement the reduction by encoding a transition relation 
for symbolic model checking.
%
Transitions encoded as SMT formulas may not contain cycles.
Although our instrumentation conservatively eliminates cycles in the post-commit 
phase of transactions with external states,
cycles (without external locations) can still occur in the pre-phase.
To break these remaining cycles, we use a refined location feedback set
$\LFS_i'$ of the 
instrumented CFG without external locations $G_i' \setminus \set{l^N\in V_i'}$
(this also removes edges incident to external locations).

Now, we can construct a new block-reduced relation $\twoheadrightarrow$.
It resembles the definition of
\brtrans in \autoref{th:blockred},
except for the fact that the execution of thread $i$ can be interrupted in an 
internal state $\CC_i'$ ($\LFS_i'$ lifted to states) in order to break the remaining cycles.
\[
\twoheadrightarrow \defn\hspace{-.5ex} \bigcup_i\hspace{-.5ex}\twoheadrightarrow_i\text{, where }
	\twoheadrightarrow_i \defn\hspace{-.5ex} \mathcal{X}_i\lrestr 
	(\trtrans_i\rrestr \overline{\mathcal{X}_i})^* \trtrans_i \rrestr \mathcal{X}_i
	\text{~~with~~}
	\mathcal{X}_i\hspace{-.5ex}\defn\hspace{-.5ex} \NN_i\cup \CC_i'
\]
Here, the use of $\trtrans_i$ (from \autoref{th:blockred})
warrants that only thread $i$ can transit from the newly exposed
internal states $\CC_i' \subseteq \NN_{j\neq i}$.
Therefore, by carefully selecting the exposed locations of $\CC_i'$,
e.g. only \colorbox{orange!30}{$l_a^{R}$}, the overhead is limited to a
factor~two.


To encode $\twoheadrightarrow$,
we identify blocks of paths that start and end in external or LFS locations, but do not traverse external or LFS locations and encode them using large blocks~\cite{BeyerCGKS09}.
This automatically takes care of disallowing intermediate states, except
for the states $\CC_i'$ exposed by the breaking of cycles.
At the corresponding locations, we thus add constraints to the
scheduler encoding to only allow the current thread to execute.
To support \texttt{pthreads} constructs, such as locks and thread management,
we use similar scheduling mechanisms.

\section{Experiments}
\label{sec:experiments}
We implemented the encoding with dynamic reduction in the
Vienna Verification Tool (VVT)~\cite{vvt,vvtwebsite}.
VVT implements CTIGAR~\cite{CTIGAR}, an IC3~\cite{ic3}
algorithm with predicate-abstraction, and bounded model checking (BMC)~\cite{GuntherW14}.
VVT came fourth in the concurrency category of
SVComp 2016~\cite{Beyer2016} the first
year it participated, only surpassed by tools based on BMC or 
symbolic simulation.

We evaluated our dynamic reductions on the running examples and
compared the running time of the following configurations:
\begin{itemize}
\item BMC with all dynamic reductions (\emph{BMC-dyn} in the graphs);
\item BMC with only static reductions and phase variables from~\cite{qadeer-transactions} (\emph{BMC-phase});
\item IC3 with all dynamic reductions (\emph{IC3-dyn}); and
\item IC3 with only static reductions and phase variables from~\cite{qadeer-transactions} (\emph{IC3-phase}).
\end{itemize}

We used a time limit of one hour for each run and ran each
instance 4 times to even out results of non-determinism.
Variation over the 4 runs was insignificant, so we omit plotting it.
Missing values in the graphs indicate a timeout.
\paragraph{Hashtable}
The lockless hash table of \autoref{ex:ht} is used together 
with the monotonic atomic heuristic in three
experiments:
\begin{enumerate}
\item Each thread inserts one element into an empty hash table. The
  verification condition is that all inserted elements are present in
  the hash table after all threads have finished executing. We can see
  in \autoref{fig:hashtable-empty} that the dynamic reduction benefits
  neither BMC nor IC3. This is because every thread changes the hash
  table thus forcing an exploration of all interleavings.

  The overhead of using dynamic reductions, while significant in the
  BMC case, seems to be non-existent in IC3.
\item Every thread attempts to insert an already-present element into
  the hash table. The verification condition is that every
  \emph{find-or-put} operation reports that the element is already
  present.

  Since a successful lookup operation doesn't change the hash table,
  the dynamic reduction now takes full effect: While the static
  reduction can only verify two threads for BMC and four for IC3, the
  dynamic reduction can handle six threads for BMC and more than seven
  for IC3.
\item Since both of the previous cases can be considered corner-cases
  (the hash table being either empty or full), this
  configuration has half of the threads inserting values already present
  while the other half insert new values.

  While the difference between static and dynamic reductions is not as
  extreme as before, we can still see that we profit from dynamic
  reductions, being able to verify two more threads in the IC3 case.
\end{enumerate}

\begin{figure}[t]
  \centering\vspace{-1.7em}
  \subfloat[Table inserts\label{fig:hashtable-empty}]{\includegraphics[width=.5\linewidth]{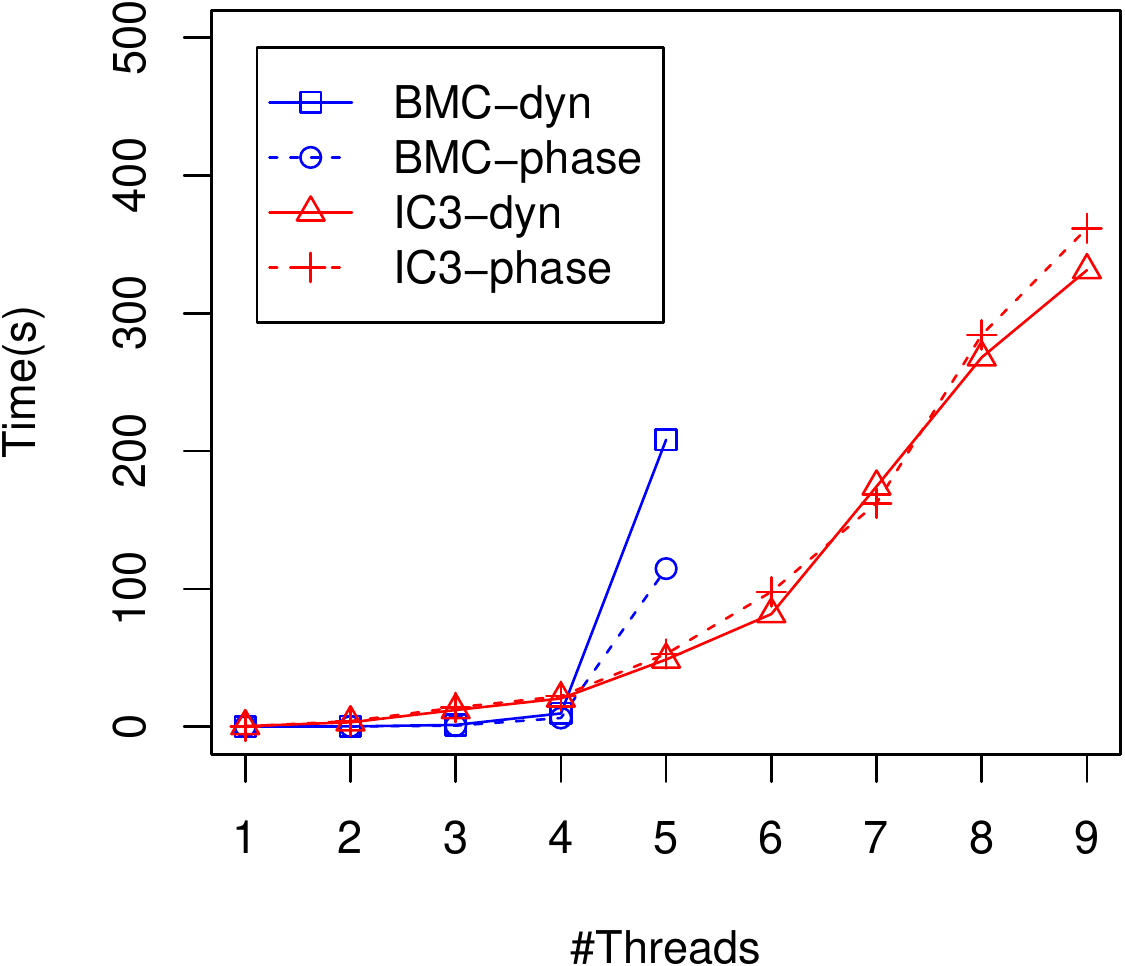}}~
  \subfloat[Table lookups\label{fig:hashtable-full}]{\includegraphics[width=.5\linewidth]{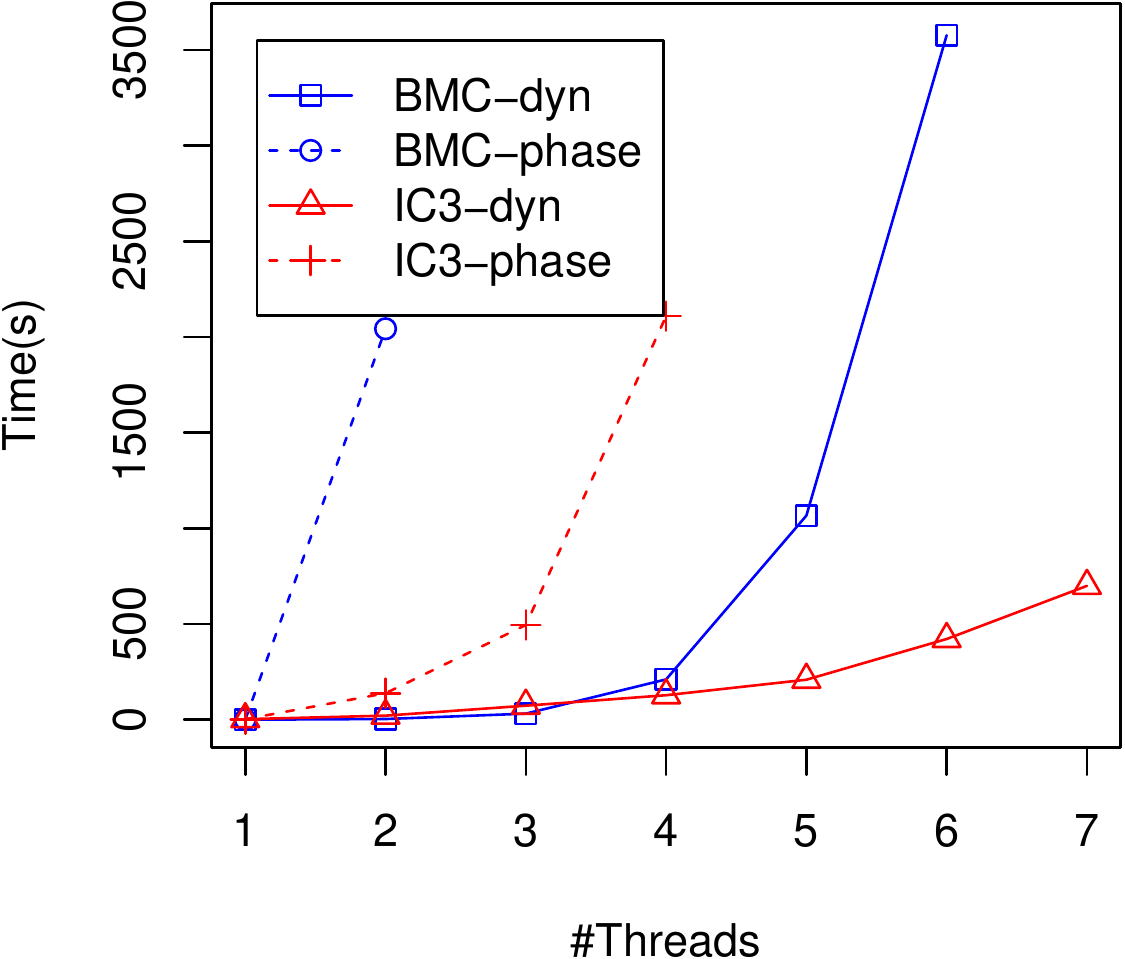}} 

  \subfloat[Table inserts / lookups\label{fig:hashtable-mixed}]{\includegraphics[width=.5\linewidth]{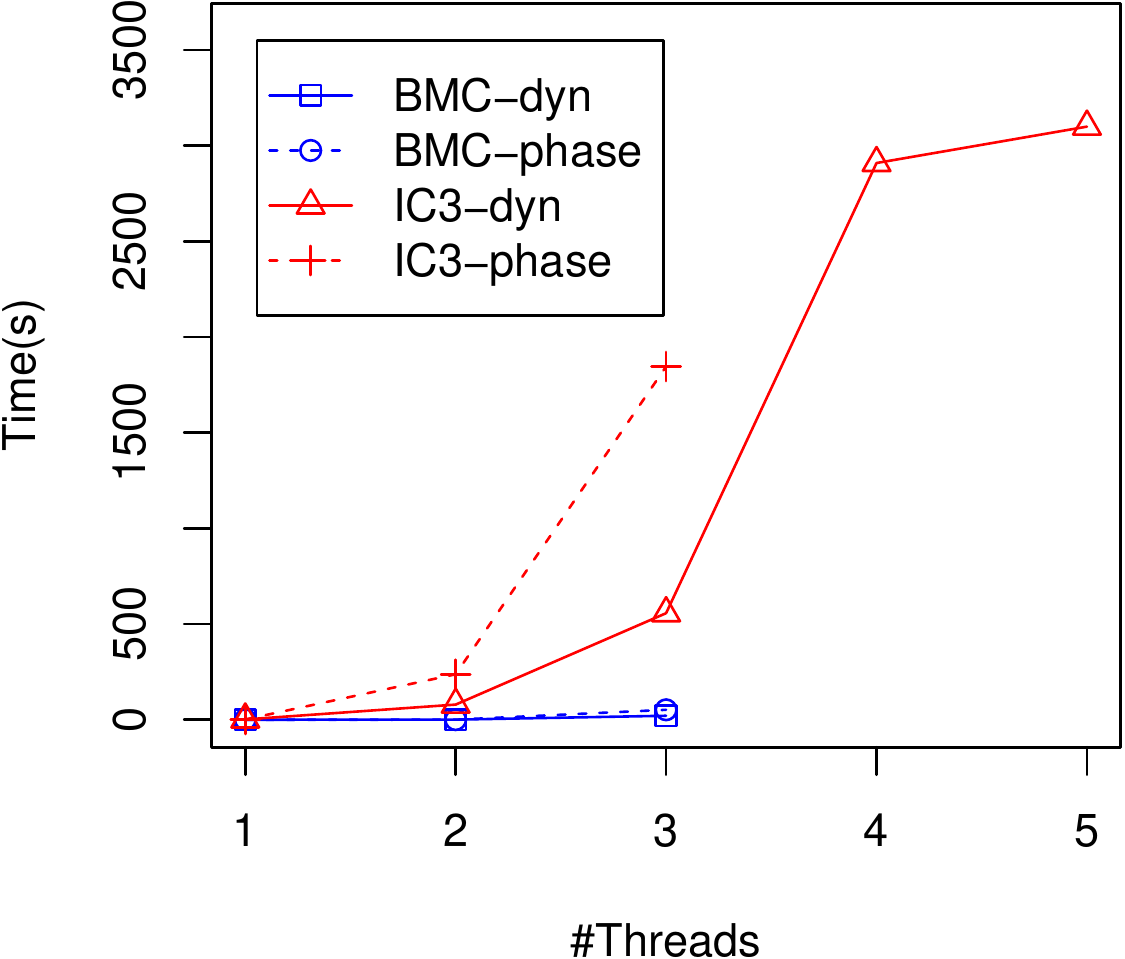}}~
  \subfloat[Dynamic locking\label{fig:dynlock}]{\includegraphics[width=.5\linewidth]{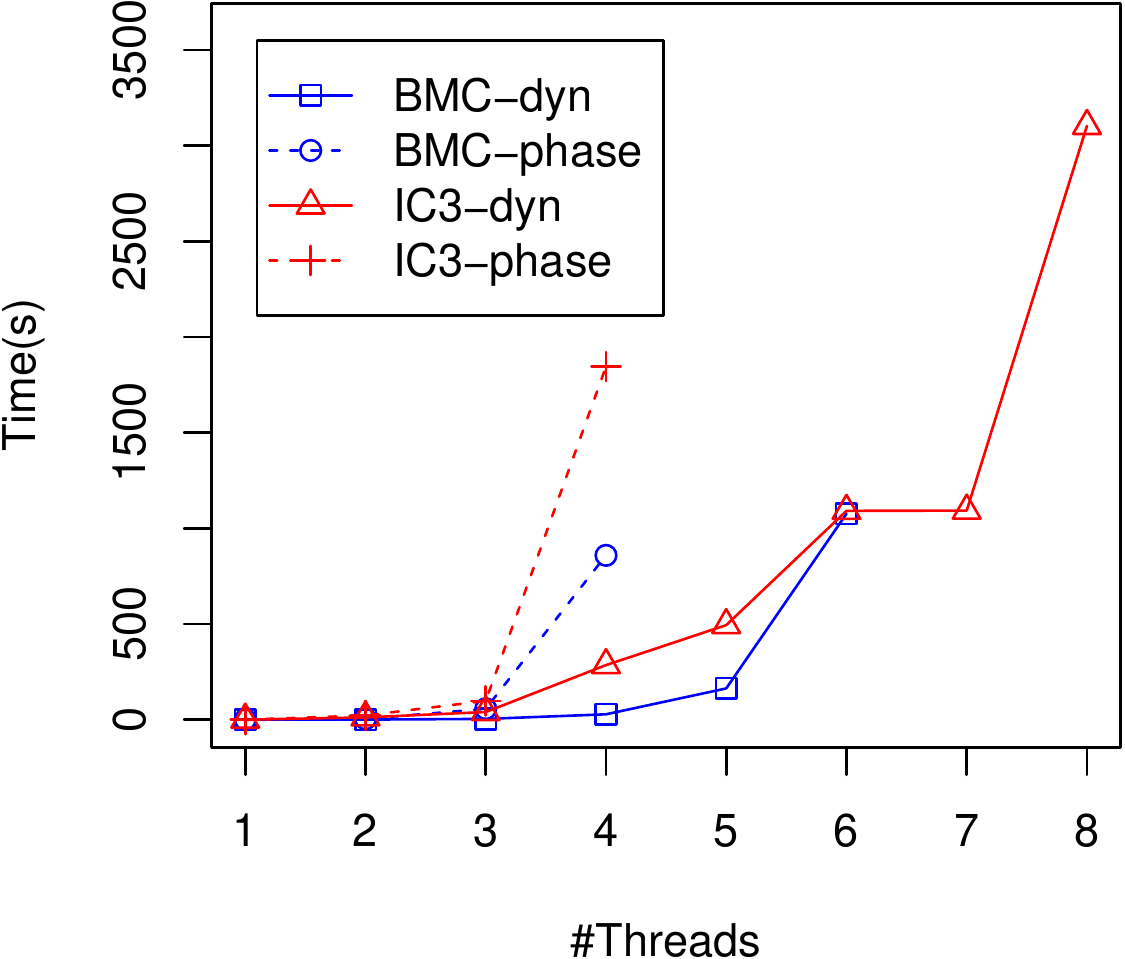}}
  \caption{Hash table and dynamic locking benchmark results}
\end{figure}

\paragraph{Dynamic locking}
To study the effect of lock pointer analysis, we use a benchmark in
which multiple threads use a common lock to access two global
variables. The single lock these threads use is randomly picked
from a set of locks at the beginning of the program.
Each thread writes the same value four times
to both global variables. Because all threads use the same lock, after
all threads terminate, the value of both global variables must be the
same, which is the verification condition.
We use our static pointer heuristic to determine that all global variable
accesses are protected by the same lock, potentially allowing
the critical section to become a single transaction.
\autoref{fig:dynlock}
shows that the dynamic reduction indeed kicks in and benefits both IC3 and BMC.
\paragraph{Load balancing}
We use the load-balancing example (\autoref{ex:pointers}). 
It relies on the static pointer heuristic.
We verified that the computed sum of the counters is indeed the expected result.
Our experiment revealed that dynamic reductions reduce the 
runtime from 15 minutes to 97 seconds for two threads already.


\paragraph{SVComp}
We also ran our IC3 implementation on the \texttt{pthread-ext} and
\texttt{pthread-atomic} categories of the \emph{software verification
  competition} (SVComp) benchmarks~\cite{svcomp16,svcomp}. In instances
with an unbounded number of threads, we introduced a limit of three
threads.
To check the effect of different reduction-strategies on the verification time, we tested the following reductions:
\begin{description}
\item[\emph{dyn}:] Dynamic with all heuristics from \autoref{tab:heuristics}.
\item[\emph{phase}:] Dynamic phases only (equal to \cite{qadeer-transactions}).
\item[\emph{static}:] Static (as in \autoref{sec:prelim}).
\item[\emph{nored}:] No reduction, all interleavings considered.
\end{description}

\autoref{fig:svcomp} shows that static Lipton reduction yields an
average six-fold decrease in runtime when compared to no
reduction. Enabling the various dynamic improvements (\textit{dyn}, \textit{phase}) does not show
improvement over the static case (\textit{static}), since most of the benchmarks are either too small or do
not have opportunities for reductions, but also not
much overhead (up to $7\%$).  Comparing the \textit{nored} case with the other cases shows the
benefit of removing intermediate states. 

\begin{figure}[t]
  \centering\vspace{-0em}
  \includegraphics[width=.75\linewidth]{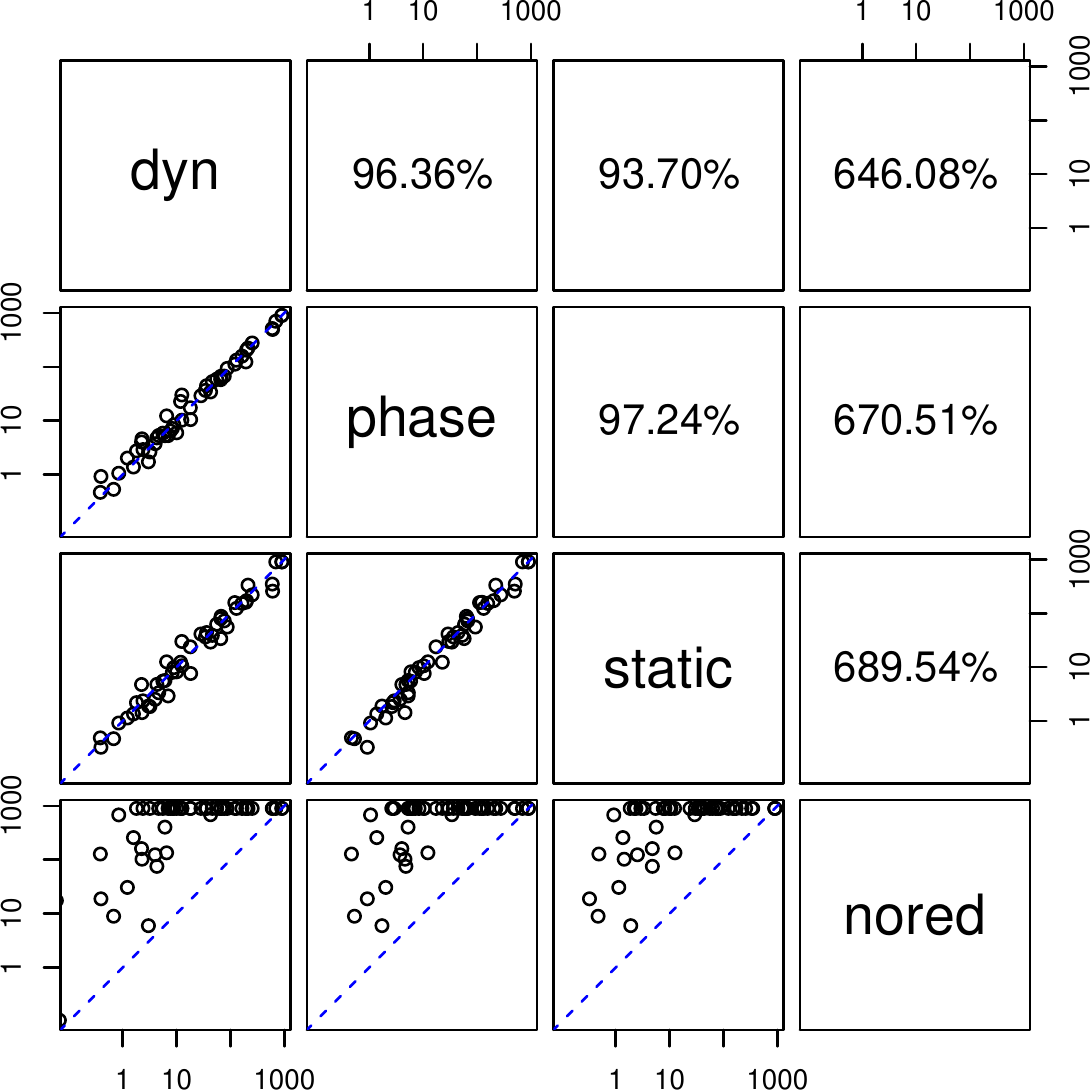}
  \caption{Scatterplots comparing runtimes for all combinations of
  		reduction variants on SVComp benchmarks.
   The upper half shows relative accumulated runtimes for these combinations.}
  \label{fig:svcomp}
\end{figure}



\section{Related Work}
\label{sec:related}

Lipton's reduction was refined multiple times
\cite{lamport-lipton,gribomont,Doeppner:1977:PPC:512950.512965,lamport-tla,Stoller2003}.
It has recently been applied in the context
of compositional verification \cite{PopeeaRW14}.
Qadeer and Flanagan proposed reductions with dynamic
phases~\cite{qadeer-transactions} using phase variables to identify internal and external states
and also provided a dynamic solution for determining locked regions. Their approach, 
however, does not solve the examples featured in the current paper
and also relies on a specialized deductions incompatible with IC3.
Moreover, in \cite{qadeer-transactions}, the phases of
target locations of non-deterministic conditions are required to
agree. This restriction is not present in our encoding.


Grumberg et al.~\cite{Grumberg:2005:PUM:1040305.1040316} present
underapproximation-widening, which iteratively refines an under-approximated
encoding of the system. In their implementation, interleavings are constrained
to achieve the under-approximation. Because refinement is done based on verification
proofs, irrelevant interleavings will never be considered.
The technique currently only supports BMC and the implementation is not available,
so we did not compare against it.

Kahlon et al.~\cite{Kahlon2006} extend the dynamic solution of \cite{qadeer-transactions},
by supporting a strictly more general set of lock patterns.
They incorporate the transactions into the stubborn set POR method~\cite{valmari-91}
and encode these in the transition relation in similar fashion as in~\cite{alur1997partial}.
Unlike ours, their technique does not support other constructs than locks.

While in fact it is sufficient for \autoref{item2:postterm} of
\autoref{def:pas2} to pinpoint a single state in each bottom
SCC of the CFG, we use feedback sets because
the encoding in \autoref{sec:encoding} also requires them.
Moreover, we take a syntactical definition for ease of explanation.
Semantic heuristics for better feedback sets can be found in 
\cite{kurshan1998static} and can easily be supported via state predicates.
(Further optimizations are possible~\cite{kurshan1998static}.)
Obtaining the smallest (vertex) LFS is an \NP-complete problem
well known in graph theory~\cite{bondy}.
As CFGs are simple graphs, estimations via basic DFS suffice.
(In POR, similar approaches are used for
\concept{the ignoring problem}~\cite{Valmari1991,twophase}.)

Elmas et al.~\cite{ElmasQT09} propose
dynamic reductions for type systems, where the 
invariant is used to weaken the mover definition. The
over-approximations performed in IC3, however,
decrease the effectiveness of such approaches.

In POR, similar techniques have been employed in~\cite{dwyer2004exploiting} and
the earlier-mentioned \concept{necessary enabling sets} of \cite{godefroid,parle89}.
Completely dynamic approaches exist~\cite{flanagan2005dynamic},
but symbolic versions remain highly static~\cite{alur1997partial}.
Notable exceptions are peephole and monotonic POR by
Wang et al.~\cite{peephole,kahlon-wang-gupta}. 
Like sleep sets~\cite{godefroid}, however, these only reduce
the number of transitions---not states---which is crucial in e.g. IC3 to cut
counterexamples to induction~\cite{ic3}.
Cartesian POR~\cite{cartesian} is a dynamic form of
Lipton reduction for explicit-state model checking.


\section{Conclusions}
\label{sec:conclusion}

Our work provides a novel dynamic reduction for symbolic software model checking.
To accomplish this, we presented a reduction theorem generalized with
bisimulation, 
facilitating various dynamic instrumentations as our heuristics show. 
We demonstrated its effectiveness with an encoding used by the BMC and
IC3 algorithms in VVT.

\bibliographystyle{plain}
\bibliography{lit}

\clearpage
 
\appendix

\tikzset{
   brace/.style={
     decoration={brace, mirror},
     decorate
   }
}
\tikzset{
   bracea/.style={
     decoration={brace},
     decorate
   }
}

\section{Dynamic Reduction}
\label{app:proofs}

The current appendix provides proofs for \autoref{sec:dynamic}.
Some definitions presented here are more general in that they support both
dynamic right and left movers, whereas \autoref{sec:dynamic} only considers
dynamic both movers for ease of explanation.
The instrumentation is consequently also adapted to support dynamic left and right movers.
Moreover, \autoref{def:pas2} is decomposed into
\autoref{def:pas}, \autoref{def:tas} and \autoref{def:errors} in order to
introduce these concepts in a top-down, incremental fashion. Several
lemmas are proved along the way, to finally conclude
soundness and completeness of the axiomatized reduction in
\autoref{th:soundandcomplete}.

\autoref{lem:instrument} goes on to show that our
instrumentation preserves errors. And, \autoref{lem:instrument2} shows that it
fulfills the reduction axioms. So the instrumentation can be used as a valid basis
for obtaining reductions.

The inspiration for our reduction theorem comes from
\cite{qadeer-transactions}, which in turn is based on a string of earlier
works~(see \autoref{sec:related}).
Its generalization with bisimulation is
necessary to accommodate the dynamic behavior of movers, which causes
entire atomic sections to ``switch phase'' as explained in \autoref{sec:reduction}.

\begin{definition}[phase-annotated transition system]\label{def:pas}~\\
A parallel phase-annotated transition system $\PAS_\TS$
is a transition system $\TS=\tuple{S, \tatrans}$ with a
(parallel) transition relation $\tatrans = \bigcup_i \tatrans_i$ whose states are partitioned in three sets (of phases), for each thread $i$, and for all threads $i$ there exists a thread bisimulation $\cong_i$. Additionally, we require the following properties
(for all $i$ and all $j \neq i$):
\begin{enumerate}
\item\label{item:part}
	$S= \RR_i \uplus \LL_i \uplus \NN_i$
	\hfill
	(the 3-partition)

\item\label{item:disjoint}
	$\tatrans_i\cap\tatrans_j=\emptyset$
	\hfill
	($i$'s and $j$'s transitions are disjoint)

\item\label{item:invar}
	$\tatrans_i\,\subseteq \mathcal{\LL}_j^2\cup\mathcal{\RR}_j^2\cup\mathcal{\NN}_j^2$
	\hfill
	($i$ preserves $j$'s phase)

\item\label{item:bisim-threads}
	$\cong_i\,\subseteq \mathcal{\LL}_j^2\cup\mathcal{\RR}_j^2\cup\mathcal{\NN}_j^2$
	\hfill
	($\cong_i$ preserves phase-equality for $j$)
	
\end{enumerate}
\end{definition}

Note that all transitions in a parallel phase-annotated transition system are distinguishable, i.e., can be assigned to unique threads performing them (due to the parallel transition relation property and~\autoref{item:disjoint}).
We will apply this feature silently and assign threads to steps in
paths whenever needed.

We next define equivalence relations $\cong_X$ for $X \subseteq\threads$ and let  $\cong\defn \cong_\threads$. We put $$\cong_X = \left(\bigcup_{i\in X} \cong_i \right)^*.$$ Hence, $\cong_X$ is the equivalence closure of the union of all $\cong_i$. Note that (1) $\cong_{\{i\}} = \cong_i$, and (2) $\cong_X \subseteq \mathcal{\LL}_j^2\cup\mathcal{\RR}_j^2\cup\mathcal{\NN}_j^2$ for all $j \notin X$. The following properties are immediate.

\begin{corollary}\label{cor:bisim-upto1}
The relation $\cong_X$ is a thread bisimulation. As a consequence, it is also a standard bisimulation of $(S, \rightarrow)$. 
\end{corollary}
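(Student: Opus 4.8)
The plan is to prove the two claims by reducing them to two standard facts about bisimulation: that a (finite) union of thread bisimulations still satisfies the transfer condition, and that the transitive closure of a relation satisfying the transfer condition satisfies it as well. First I would dispatch the equivalence-relation half of Definition~\ref{def:bisim}, which holds essentially by construction. Since $X\subseteq\threads$ is finite and every $\cong_i$ is a thread bisimulation, hence an equivalence, the union $U\defn\bigcup_{i\in X}\cong_i$ is reflexive and symmetric; the closure operation then adds transitivity while preserving reflexivity and symmetry, so $\cong_X=U^*$ is an equivalence relation.

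Next I would establish the transfer (matching) property. The key intermediate observation is that $U$ already satisfies the transfer condition even though it need not be transitive: if $\sigma\mathrel{U}\sigma'$ and $\sigma\tatrans_i\sigma_1$, then $(\sigma,\sigma')\in\cong_k$ for some $k\in X$, and because $\cong_k$ is a thread bisimulation there is a witness $\sigma_1'$ with $\sigma'\tatrans_i\sigma_1'$ and $\sigma_1\cong_k\sigma_1'$, whence $\sigma_1\mathrel{U}\sigma_1'$. I would then lift this to $\cong_X=U^*$ by a zig-zag argument along chains. Given $\sigma\cong_X\sigma'$, pick a finite $U$-chain $\sigma=\tau_0\mathrel{U}\tau_1\mathrel{U}\cdots\mathrel{U}\tau_n=\sigma'$, and given $\sigma\tatrans_i\sigma_1$, propagate the move step by step: setting $\rho_0\defn\sigma_1$, the transfer property of $U$ applied at each link produces $\rho_1,\dots,\rho_n$ with $\tau_m\tatrans_i\rho_m$ and $\rho_{m-1}\mathrel{U}\rho_m$. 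The concatenation $\sigma_1=\rho_0\mathrel{U}\rho_1\mathrel{U}\cdots\mathrel{U}\rho_n$ yields $\sigma_1\cong_X\rho_n$, while the last step gives $\sigma'=\tau_n\tatrans_i\rho_n$; taking $\sigma_1'\defn\rho_n$ is exactly the transfer required for $\cong_X$. Together with the first paragraph this shows $\cong_X$ is a thread bisimulation.

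For the final sentence I would simply note that the thread-indexed transfer property subsumes the index-erased one: if $\sigma\cong_X\sigma'$ and $\sigma\tatrans\sigma_1$, then $\sigma\tatrans_i\sigma_1$ for some $i$ since $\tatrans=\bigcup_i\tatrans_i$, so thread bisimulation gives $\sigma'\tatrans_i\sigma_1'$ with $\sigma_1\cong_X\sigma_1'$, and $\tatrans_i\subseteq\tatrans$ turns this into a witness for the standard (unindexed) bisimulation condition of $(S,\tatrans)$. I expect the only step with genuine content to be the transitive-closure lift: one must be careful that the induction on chain length correctly threads the matching targets $\rho_m$ so that a single transition of $\sigma'$ closes the whole chain. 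Everything else---the equivalence closure and the union lemma---is routine bookkeeping.
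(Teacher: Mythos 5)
Your proof is correct and is exactly what the paper intends: the paper states this corollary without proof (``The following properties are immediate''), and your argument---transfer for the union $\bigcup_{i\in X}\cong_i$ via the individual $\cong_i$, the zig-zag induction along finite chains to lift transfer through the transitive closure, and erasing thread indices using $\tatrans=\bigcup_i\tatrans_i$---is precisely the routine verification the authors leave implicit.
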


\begin{corollary}\label{cor:bisim-threads}
For any path starting in $\sigma$, if $\sigma'$ is such that $\sigma'\cong_X \sigma$ for
$X\subseteq \threads$, then 
there is a (bisimilar) path from $\sigma'$ where the same
threads transit in the same order.
\end{corollary}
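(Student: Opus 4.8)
The plan is to prove the statement by induction on the length $n$ of the given path, using as the single engine the fact, recorded in \autoref{cor:bisim-upto1}, that $\cong_X$ is a thread bisimulation. Since the transition relation is parallel and the threads are distinguishable, I may write the path as $\sigma = \tau_0 \tatrans_{i_1} \tau_1 \tatrans_{i_2} \cdots \tatrans_{i_n} \tau_n$, with each step labelled by the unique thread $i_k$ that performs it. I would actually prove the slightly stronger invariant that there is a matching path $\sigma' = \tau_0' \tatrans_{i_1} \tau_1' \tatrans_{i_2} \cdots \tatrans_{i_n} \tau_n'$ with $\tau_k \cong_X \tau_k'$ for every $0 \le k \le n$; the corollary is then the special case that records only that the threads $i_1,\dots,i_n$ are reused in the same order.

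For the base case $n=0$ the empty path from $\sigma'$ suffices, and $\tau_0' = \sigma' \cong_X \sigma = \tau_0$ holds by assumption. For the inductive step I assume the invariant for the length-$n$ prefix, obtaining $\sigma' = \tau_0' \tatrans_{i_1} \cdots \tatrans_{i_n} \tau_n'$ with $\tau_n \cong_X \tau_n'$. Applying the defining diagram of a thread bisimulation (\autoref{def:bisim}) to the pair $\tau_n \cong_X \tau_n'$ together with the next step $\tau_n \tatrans_{i_{n+1}} \tau_{n+1}$ yields a state $\tau_{n+1}'$ with $\tau_n' \tatrans_{i_{n+1}} \tau_{n+1}'$ and $\tau_{n+1} \cong_X \tau_{n+1}'$. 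Appending this transition extends the matching path by one step and restores the invariant, which closes the induction.

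The only point that requires attention --- and the reason the corollary speaks of \emph{the same} threads transiting in \emph{the same} order rather than invoking ordinary bisimulation --- is that the matched step must carry the identical thread index $i_{n+1}$. This is delivered verbatim by \autoref{def:bisim}, whose matching clause reproduces a $\tatrans_i$-step by a $\tatrans_i$-step of the same thread $i$ (unlike a standard bisimulation, which would only guarantee some matching transition). Consequently I do not expect a genuine obstacle: the thread label is preserved at each step and hence all along the path, and the entire argument reduces to iterating the single-step guarantee of \autoref{cor:bisim-upto1} through the induction.
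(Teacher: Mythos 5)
Your proof is correct and matches the paper's intent exactly: the paper states this corollary as an immediate consequence of \autoref{cor:bisim-upto1} (that $\cong_X$ is a thread bisimulation), and your induction simply spells out the evident step-by-step matching, with the preservation of thread indices delivered by \autoref{def:bisim} just as you observe.
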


\begin{corollary}\label{cor:bisim-upto}
For a path $\sigma_1\tatrans \sigma_2\tatrans\ldots$, let the phase trace for $i$
be a sequence $X_1,X_2,\ldots$ with $X_x = \RR \Leftrightarrow \sigma_x\in \RR_i$,
etc. Bisimilar states $\sigma\cong_{X} \sigma'$ have equal
phase traces for all $i$ such that $i\notin X$.
\end{corollary}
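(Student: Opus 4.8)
The plan is to reduce the statement to two facts already in hand: the step-by-step path correspondence of \autoref{cor:bisim-threads}, and the phase-preservation inclusion $\cong_X \subseteq \LL_j^2 \cup \RR_j^2 \cup \NN_j^2$ noted (for every $j \notin X$) immediately after \autoref{def:pas}. Intuitively, two $\cong_X$-related states agree on the $j$-phase for every thread $j$ not contained in $X$, and \autoref{cor:bisim-threads} propagates this agreement along matching paths, so the two phase traces are forced to coincide entry by entry.

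First I would fix the given path $\sigma_1 \tatrans \sigma_2 \tatrans \ldots$ with $\sigma_1 = \sigma$, together with a state $\sigma' \cong_X \sigma$. By \autoref{cor:bisim-threads} there is a bisimilar path $\sigma' = \sigma_1' \tatrans \sigma_2' \tatrans \ldots$ in which the same thread transits at each position and in the same order, so the two paths are in step-by-step correspondence. I would then record that the matching states stay related, $\sigma_x \cong_X \sigma_x'$ for every index $x$. This is exactly what ``bisimilar path'' delivers; if one prefers to argue it explicitly, it is a one-line induction on $x$ using that $\cong_X$ is a thread bisimulation (\autoref{cor:bisim-upto1}): the base case $\sigma_1 \cong_X \sigma_1'$ is the hypothesis, and each inductive step is the defining diagram of thread bisimulation applied to the $x$-th transition, which produces a $\cong_X$-related successor.

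Next I would fix an arbitrary thread $i \notin X$ and invoke the inclusion $\cong_X \subseteq \LL_i^2 \cup \RR_i^2 \cup \NN_i^2$. Applied to $\sigma_x \cong_X \sigma_x'$, it says that $\sigma_x$ and $\sigma_x'$ lie in the same $i$-phase for each $x$: both in $\RR_i$, or both in $\LL_i$, or both in $\NN_i$, these three being mutually exclusive by the $3$-partition of \autoref{def:pas}. Unfolding the definition of the phase trace for $i$, this is precisely $X_x = X_x'$ for all $x$, i.e. the phase trace of the path from $\sigma$ and that of the path from $\sigma'$ coincide. As $i \notin X$ was arbitrary, the corollary follows.

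The only point that deserves care is the step-wise relatedness $\sigma_x \cong_X \sigma_x'$: I must use that \autoref{cor:bisim-threads} yields a genuinely \emph{bisimilar} path (corresponding states $\cong_X$-related) rather than merely two paths of equal length with the same thread schedule. Once that is secured, everything else is a direct unfolding of the phase-trace definition together with the phase-equality inclusion, so I expect no further obstacle.
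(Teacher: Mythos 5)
Your proof is correct and follows exactly the route the paper intends: the paper states this corollary (together with \autoref{cor:bisim-upto1} and \autoref{cor:bisim-threads}) as ``immediate'' from \autoref{def:pas} and the definition of $\cong_X$, and your argument---matching path via \autoref{cor:bisim-threads}, stepwise $\cong_X$-relatedness by the thread-bisimulation diagram, then the inclusion $\cong_X \subseteq \LL_i^2 \cup \RR_i^2 \cup \NN_i^2$ for $i \notin X$ combined with the $3$-partition---is precisely the omitted justification. Your explicit flagging of the stepwise relatedness (rather than taking ``bisimilar path'' on faith) is the right point of care and is handled correctly.
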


Let $\NN_T \defn \bigcap_{i\in T} \NN_i$ and $\NN = \NN_\threads$.

\begin{definition}[parallel \emph{transaction} system]
\label{def:tas}
A parallel transaction system $\TAS_\TS$ is a phase-annotated
transition system $\PAS_\TS\defn(S, \tatrans)$ s.t.
for all $i$ and all $j\neq i$:
%
%
\begin{enumerate}
\item \label{item:post}
	$\mathcal{\LL}_i\lrestr \tatrans_i\rrestr \mathcal{\RR}_i=\emptyset$
	\hfill
	((locally) post does not reach pre)

\item \label{item:rmover}
	$(\tatrans_i \rrestr \RR_{i}) \rcomm_j \tatrans_j$
	 \hfill
	($i$ to pre right commutes up to $j$ with~$j$)

\item \label{item:lmover}
	$(\LL_i \lrestr \tatrans_i) \lcomm_{\set{i,j}} \tatrans_j$
	 \hfill
	($i$ from post left~commutes up to $\{i,j\}$ with~$j$)

\item \label{item:postterm}
	$\forall \sigma \in \LL_i \colon \exists \sigma'\in\NN_i \colon
	\sigma \tatrans_i^+ \sigma'$
	 \hfill
	\hspace{-1em}(post phases terminate)
\end{enumerate}

\noindent The  transaction relation is $\trtrans = \bigcup_i \trtrans_i$ with

\qquad$\trtrans_i \defn \NN_{j\neq i} \lrestr \tatrans_i$
\qquad ($i$ only transits when all $j$ are in external).\\

\end{definition}

\autoref{th:dynred} shows that for all paths in a transaction system there also
exists a bisimilar \emph{transaction path} ($\trtrans$).
This enables reduction by removing  interleavings.
But first, we prove several intermediate lemmas, before we can introduce
the error states used in the theorem to connect divergent paths between
the two transition relations.

The above 
definition differs on some key points from that of Flanagan and Qadeer:
Most importantly, it lifts the commutativity condition from the phase change.
This means that while transitions should still commute, their phase change
might not, crucially enabling the dynamic instrumentation.
To preserve correctness while allowing phase changes when moving transitions,
the system is constructed in such a 
way that these phase-unequal states are still bisimilar
as the above corollaries demonstrate.

\subsection{Incomplete Transaction Paths}
To reason over (transaction) paths where some threads are still in the middle of their 
transaction (open), we introduce the notion of
\concept{open transaction sequences} (ots).
An ots consists of two concatenated subpaths:
The first (called cts) has only transactions in committed form
(ending in post-commit or external phase),
while the second (called uts) only contains uncommitted transitions
(ending in the pre-commit phase).
For example, the phase trace for 3 different threads of the same path could 
look like:
\begin{verbatim}
1: NNNRRLLLNRRLLLLLLNNNNNNNNNNnRRRRRR
2: NNNNNNNNNNNNNNNNNNRRRRRLLLLnNNNNRR
3: NLLLLLLLLLLLLLLLLLLLLLLLLLLLLLLLLL
\end{verbatim}
Here the small letters indicate phases of states that are part of
both the cts (where it is the last state of the cts) and the uts
(the first state).
We start with a definition of the uts, as it is easier to understand.
We also prove two lemmas that are needed for correctness of the reduction
in \autoref{s:theorem}.

\begin{definition}[Uncommitted transaction sequence
    (uts)]\label{def:uts}~\\
  A {\em uts} under a set of threads $T$ is a sequence
  of transactions
  \[ q = q_1
  \tatrans_{\psi(1)}^+q_2\tatrans_{\psi(2)}^+\ldots\tatrans_{\psi(l)}^+q_{l+1} \]
  (where $l\ge 0$ and $\psi$ is a mapping from indices to threads)
  such that for all $1\leq i\leq l$, we have
  \[ q_i=q_{i,1}\tatrans_{\psi(i)}\ldots\tatrans_{\psi(i)}q_{i,x_i}=q_{i+1} \]
  with
  \begin{enumerate}
  \item $\psi(i)\in T$ and $q_1\in \NN_T$,
  \item $q_{i,1}\in\NN_{\psi(i)}$, and
  \item $q_{i,2},\ldots,q_{i,x_i}\in\RR_{\psi(i)}$.
  \end{enumerate}
\end{definition}

\begin{corollary}\label{cor:uts-threads}
  For every uts as defined in \autoref{def:uts} it holds that
\[\forall i,j\in\{1,\dots,l\}\colon i\neq j \implies \psi(i)\neq\psi(j)\,.\]
\end{corollary}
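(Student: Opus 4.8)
The plan is to prove that $\psi$ is injective on $\{1,\dots,l\}$; equivalently, that each thread performs at most one of the constituent transactions of the uts. The whole argument is a phase-tracking one, resting on just two ingredients from \autoref{def:pas}: the disjointness of the three phases per thread (the 3-partition, property~\ref{item:part}), and the fact that a step of thread $m$ leaves the phase of every other thread $k\neq m$ unchanged (property~\ref{item:invar}).

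First I would record the two phase facts that each transaction supplies. By \autoref{def:uts}, transaction $i$ begins in $q_{i,1}=q_i\in\NN_{\psi(i)}$, and since $\tatrans_{\psi(i)}^+$ denotes a strictly positive number of steps we have $x_i\ge 2$, so its final state $q_{i+1}=q_{i,x_i}$ lies in $\RR_{\psi(i)}$. Hence every transaction takes its own thread from the external phase into the pre-commit phase and (by condition~3 of \autoref{def:uts}) keeps it there; in particular $q_{i+1}\in\RR_{\psi(i)}$.

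Next I would argue by contradiction. Suppose $\psi$ is not injective and choose $i<j$ with $\psi(i)=\psi(j)=k$ and $j-i$ minimal. Minimality forces $\psi(i')\neq k$ for every $i<i'<j$, so every step occurring strictly between the end of transaction $i$ (the state $q_{i+1}$) and the start of transaction $j$ (the state $q_j$) is taken by a thread $\neq k$. By property~\ref{item:invar} of \autoref{def:pas} each such step preserves the phase of thread $k$; combined with $q_{i+1}\in\RR_k$ from the previous step, this yields $q_j\in\RR_k$. But transaction $j$ requires $q_j=q_{j,1}\in\NN_{\psi(j)}=\NN_k$, and $\RR_k\cap\NN_k=\emptyset$ by the 3-partition. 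This contradiction establishes injectivity, which is exactly the claim.

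The argument is essentially routine; the one point that needs care is the selection of the \emph{closest} repeated pair, as this is precisely what guarantees that all intervening transitions belong to other threads, so that the phase-preservation property applies verbatim. (Equivalently, one could phrase it without minimality by observing directly that once $k$ enters $\RR_k$ it can never re-enter $\NN_k$: only $k$'s own steps can alter $k$'s phase, and each of $k$'s transactions moves it from $\NN_k$ into $\RR_k$, never back.) I expect no genuine obstacle here, so the main care is just in citing the correct partition and phase-invariance properties of \autoref{def:pas}.
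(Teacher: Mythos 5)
Your proof is correct and matches the paper's intended justification: the paper states this as an unproved corollary, precisely because the phase-tracking argument you give---each transaction leaves its thread in $\RR_{\psi(i)}$, remote steps preserve that phase by the invariance property of \autoref{def:pas}, and a second transaction by the same thread would need to restart in $\NN_{\psi(i)}$, contradicting the 3-partition---is the immediate reading of \autoref{def:uts}. Your care about $x_i\ge 2$ and about choosing the closest repeated pair (or, equivalently, the once-in-$\RR_k$-never-back observation) is exactly the right level of rigor; nothing is missing.
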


\begin{corollary}\label{cor:split-uts}
  Every uts $q_1
  \tatrans_{\psi(1)}^+q_2\tatrans_{\psi(2)}^+\ldots\tatrans_{\psi(l)}^+q_{l+1}$
  under $T$ 
  can be split at an arbitrary index
  $1\le k \le l+1$ into two uts's:\\
$q_1 \tatrans_{\psi(1)}^+\ldots \tatrans_{\psi(k-1)}^+q_{k}$\,,
	\hfill a uts under $T$, and\\
$q_k \tatrans_{\psi(k)}^+\ldots\tatrans_{\psi(l)}^+q_{l+1}$\,,
	\hfill a uts under $\set{i\in T \mid q_k \in \NN_i}$.
\end{corollary}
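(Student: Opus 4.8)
The plan is to verify the two defining conditions of \autoref{def:uts} separately for the two subsequences, noting that nearly everything is inherited from the original uts and that exactly one point requires a genuine argument.

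For the prefix $q_1 \tatrans_{\psi(1)}^+\cdots\tatrans_{\psi(k-1)}^+ q_k$ there is essentially nothing to prove: its transactions are literally the first $k-1$ transactions of the original sequence, so the per-transaction requirements $q_{i,1}\in\NN_{\psi(i)}$ and $q_{i,2},\dots,q_{i,x_i}\in\RR_{\psi(i)}$ hold verbatim for $1\le i\le k-1$, and the global requirement $\psi(i)\in T$ together with $q_1\in\NN_T$ is inherited unchanged. Hence it is a uts under $T$. For the suffix $q_k \tatrans_{\psi(k)}^+\cdots\tatrans_{\psi(l)}^+ q_{l+1}$ under $T' \defn \set{i\in T\mid q_k\in\NN_i}$, the conditions $q_{i,1}\in\NN_{\psi(i)}$ and $q_{i,2},\dots,q_{i,x_i}\in\RR_{\psi(i)}$ are again inherited transaction-by-transaction, and the base requirement $q_k\in\NN_{T'}$ holds by the very definition of $T'$, since $\NN_{T'}=\bigcap_{i\in T'}\NN_i$. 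The only nontrivial obligation is the remaining half of the first condition: that $\psi(i)\in T'$ for every $k\le i\le l$, i.e. that $q_k\in\NN_{\psi(i)}$.

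I would establish this by transporting externality backwards along the path. Fix $i$ with $k\le i\le l$ and write $t\defn\psi(i)$. By \autoref{def:uts} we have $q_i=q_{i,1}\in\NN_{t}$. The segment from $q_k$ to $q_i$ consists exactly of the transactions indexed $k,\dots,i-1$, executed by threads $\psi(k),\dots,\psi(i-1)$; by \autoref{cor:uts-threads} the thread indices of a uts are pairwise distinct, so none of these threads equals $t$. Property \autoref{item:invar} of the underlying phase-annotated transition system guarantees that a step of any thread $\neq t$ keeps $t$'s phase within one of $\LL_t,\RR_t,\NN_t$; therefore $t$'s phase is constant along the entire segment $q_k\to q_i$. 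Since $q_i\in\NN_t$, we conclude $q_k\in\NN_t$, i.e. $t\in T'$. Applying this to every $i\in\{k,\dots,l\}$ discharges the condition, and the degenerate indices are immediate: for $k=1$ one gets $T'=T$ because $q_1\in\NN_T$, and for $k=l+1$ the suffix is empty.

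The main obstacle is precisely this backward-transport step. It hinges on two facts working in tandem: the distinctness of the acting threads (\autoref{cor:uts-threads}), which ensures that no intervening step is taken by $t$ itself, and the phase-preservation property \autoref{item:invar}, which ensures that steps of the other threads cannot move $t$ out of its external phase. Everything else is routine bookkeeping of conditions inherited from the original uts.
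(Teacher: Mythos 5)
Your proof is correct, and it fills in exactly the justification the paper leaves implicit by labelling this a corollary: the prefix and the per-transaction conditions of the suffix are inherited verbatim, and the only real obligation, $q_k\in\NN_{\psi(i)}$ for $k\le i\le l$, follows by transporting $q_{i,1}\in\NN_{\psi(i)}$ backwards along the intervening transactions using the injectivity of $\psi$ (\autoref{cor:uts-threads}) together with phase preservation by other threads (\autoref{item:invar} of \autoref{def:pas}). This matches the paper's intended reasoning, including the degenerate cases $k=1$ and $k=l+1$.
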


\begin{lemma}
  \label{lemma:right-move-uts}
  Given a uts
   $q_1\tatrans_{\psi(1)}^+q_2\tatrans_{\psi(2)}^+\ldots\tatrans_{\psi(l)}^+q_{l+1}$ under $T_u\defn \mathit{range}(\psi)$ and a transition $q_{l+1}\tatrans_i q$ with $i\not\in T_u$, then there exists a transition $q_1\tatrans_i q'$ and a uts $q' = q_1'\tatrans_{\psi(1)}^+q_2'\allowbreak\tatrans_{\psi(2)}^+\ldots\tatrans_{\psi(l)}^+q_{l+1}'$ under $T_u$ such that $q_{l+1}'\cong_{\set i} q$. 
Illustratively:
\begin{center}
\begin{tikzpicture}
   \tikzstyle{e}=[minimum width=1cm]
   \tikzstyle{every node}=[font=\small, node distance=.5cm]

  \node (s1) {$q_1$};
    
  \node (s2) [gray,node distance=1.2cm,below of=s1] {$q_1'$};
  \node (m1) [gray,right of=s2, xshift=.9cm] {$q_2'$};
  \node (m2) [gray,right of=m1, xshift=.9cm] {$q_l'$};
  \node (s3) [gray,right of=m2, xshift=.9cm] {$q_{l+1}'$};
  \path (s1) -- node[gray,midway,sloped]{$\tatrans_i$} (s2);
  \path (s2) -- node[gray,pos=.45]{$\tatrans_{\psi(1)}^+$} (m1);
  \path (m1) -- node(d1)[gray,pos=.45]{$\ldots$} (m2.west);
  \path (m2) -- node[gray,pos=.45]{$\tatrans_{\psi(l)}^+$} (s3);

  \node (n1) [right of=s1, xshift=.9cm] {$q_2$};
  \node (n2) [right of=n1, xshift=.9cm] {$q_l$};
  \node (s4) [xshift=.9cm,right of=n2] {$q_{l+1}$};
  \path (s1) -- node[midway,sloped]{$\tatrans_{\psi(1)}^+$} (n1);
  \path (n1) -- node(d2) [midway,sloped]{$\ldots$} (n2);
  \path (n2) -- node[midway,sloped]{$\tatrans_{\psi(l)}^+$} (s4);

  \path (d1.south) -- node[midway,sloped]{$\Leftarrow$} (d2.north);

  \node (s3p) [node distance=.8cm,below right of=s4] {$q$};
  \path (s4.south) -- node[midway,sloped]{$\tatrans_i$} (s3p.north);

  \path (s3p) -- node[gray,sloped,pos=.48]{$\cong_{\set i}$} (s3);

\draw [bracea] (s1.north) -- node [above, pos=0.5] {uts} (s4.north);
\draw [brace,gray] (s2.south) -- node [below, pos=0.5,gray] {uts} (s3.south);
\end{tikzpicture}
\end{center}
\end{lemma}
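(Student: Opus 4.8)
The plan is to take the single thread-$i$ transition $q_{l+1}\tatrans_i q$ and \emph{bubble} it leftward through the entire uts, one elementary step at a time, each swap being justified by the right-mover axiom \autoref{item:rmover}. The point that makes this work is that thread $i$ is foreign to the uts: since $i\notin T_u=\mathit{range}(\psi)$, every acting thread is distinct from $i$, and \autoref{item:rmover} tells us exactly that an $i$-foreign step that ends in a \emph{pre-commit} state right commutes with $\tatrans_i$ up to $\cong_i$ (recalling $\cong_{\set i}=\cong_i$).

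First I would unfold the uts into a path of single steps $q_1=p_0\tatrans_{a_1}p_1\tatrans_{a_2}\cdots\tatrans_{a_m}p_m=q_{l+1}$. By \autoref{def:uts} the first step of each transaction goes from $\NN_{a_k}$ into $\RR_{a_k}$ and every later step stays inside $\RR_{a_k}$, so every state $p_k$ with $k\ge 1$ lies in $\RR_{a_k}$ and every $a_k\neq i$. Hence each elementary step is of the restricted form $\tatrans_{a_k}\rrestr\RR_{a_k}$, and \autoref{item:rmover} applies to it: $(\tatrans_{a_k}\rrestr\RR_{a_k})\rcomm_i\tatrans_i$. It is cleanest to prove the slightly more general claim in which the transaction grouping is forgotten and the hypothesis is only ``every step ends in the pre-commit phase of its (non-$i$) acting thread''; then the induction below need not respect transaction boundaries, and the uts shape of the output is recovered afterwards.

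The core is an induction on the number $m$ of elementary steps. The base case $m=0$ is immediate by reflexivity of the equivalence $\cong_i$. For the inductive step I would apply the hypothesis to the suffix $p_1\tatrans_{a_2}\cdots\tatrans_{a_m}p_m\tatrans_i q$, obtaining a reordered path $p_1\tatrans_i p_1^{*}\tatrans_{a_2}\cdots\tatrans_{a_m}p_m^{*}$ with $p_m^{*}\cong_i q$. I then prepend $p_0\tatrans_{a_1}p_1$ and perform one right-commutativity swap on the window $p_0\tatrans_{a_1}p_1\tatrans_i p_1^{*}$ via \autoref{item:rmover} (legal since $p_1\in\RR_{a_1}$ and $a_1\neq i$), yielding $p_0\tatrans_i q'\tatrans_{a_1}r_1$ with $r_1\cong_i p_1^{*}$. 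Finally, because $r_1\cong_i p_1^{*}$, \autoref{cor:bisim-threads} lets me replay the remaining suffix $p_1^{*}\tatrans_{a_2}\cdots\tatrans_{a_m}p_m^{*}$ from $r_1$, producing a bisimilar continuation whose final state is $\cong_i p_m^{*}\cong_i q$; transitivity of $\cong_i$ then delivers the required $\cong_i q$. Two structural checks close the lemma: the new first state $q'$ lies in $\NN_{T_u}$ because $q_1\in\NN_{T_u}$ and, by \autoref{item:invar}, the step $q_1\tatrans_i q'$ preserves the phase of every thread other than $i$; and the reconstructed path is again a uts under $T_u$ because the replay keeps the same acting threads in the same order and \autoref{item:bisim-threads} guarantees that $\cong_i$ preserves the phase of every $j\neq i$, so each reconstructed state sits in the same phase ($\NN_{a_k}$ or $\RR_{a_k}$) as its counterpart.

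The main obstacle is precisely this bisimilarity bookkeeping. Each swap only equates states up to $\cong_i$, so I cannot commute steps in place; after every swap the tail of the path must be regenerated from a merely bisimilar state through \autoref{cor:bisim-threads}, and I must verify both that all the resulting $\cong_i$-relations compose by transitivity into the single final relation $q_{l+1}'\cong_{\set i}q$, and that this repeated regeneration never disturbs the external/pre-commit shape of the transactions, which is exactly where the phase-equality property \autoref{item:bisim-threads} is indispensable. By comparison, the repeated invocation of right-commutativity itself is routine.
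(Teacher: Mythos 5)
Your proof is correct and rests on exactly the same two ingredients as the paper's own argument: iterated use of the right-mover axiom (\autoref{item:rmover} of \autoref{def:tas}) to bubble the foreign $i$-step leftward, and regeneration of the disturbed tail through the thread bisimulation (\autoref{cor:bisim-threads}), with \autoref{item:invar} of \autoref{def:pas} and \autoref{item:bisim-threads} supplying the phase bookkeeping. The difference is purely organizational. The paper runs a nested induction that respects the transaction structure --- an outer induction over the $l$ transaction blocks and an inner one over the steps of the last block --- swapping at the rear first and replaying a single step (resp.\ one whole transaction, in the outer case) after each bisimilarity substitution, so the uts invariant is maintained explicitly at every stage. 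You instead flatten the uts into elementary steps, observe that \autoref{def:uts} places every intermediate state $p_k$ in $\RR_{a_k}$ so that each step lies in $\tatrans_{a_k}\rrestr\RR_{a_k}$, and run a single suffix-first induction on step count, recovering the transaction shape only at the end from pointwise bisimilarity and phase preservation. This buys a shorter, more uniform argument, and your generalized hypothesis correctly drops the start-state condition $q_1\in\NN_{T_u}$, which is indeed needed only for the final shape recovery (via \autoref{item:invar}, since $i\notin T_u$). The one obligation --- which you yourself flag --- is that the induction hypothesis must be strengthened to record either the phases $s_k\in\RR_{a_k}$ of the reconstructed states or their pointwise $\cong_i$-relation to the original states; the paper's inner induction does precisely this by carrying $q_{l,2}',\ldots,q_{l,m}'\in\RR_j$ along, and without it the final appeal to the uts shape would not type-check. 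With that strengthening made explicit, your argument is airtight and matches the paper's in all essentials.
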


\begin{proof}
	The proof proceeds by nested induction over the transaction blocks in
	the uts and the steps inside each block.

  The outer induction is performed over the length $l$ (i.e., the number of
  transactions) of the uts.
  \begin{itemize}
  \item Base case ($l=0$):
    Choose $q_1'=q_{l+1}'=q'$, which is trivially a uts of length 0.
    Furthermore, $q_{l+1}'\cong q$.
  \item Induction hypothesis (IH):
    Suppose that the lemma holds for the sequence
    \[
    q_1\tatrans_{\psi(1)}^+q_2\tatrans_{\psi(2)}^+\ldots\tatrans_{\psi(l-1)}^+q_{l}\tatrans_i q
    \]
    (or $q_1(\tatrans^+)^{l-1} q_{l} \tatrans_i q$, for short).
    
  \item Inductive case:
    Using the induction hypothesis for the uts
    $q_1(\tatrans^+)^{l-1} q_{l}$ of length $l-1$,
    we show that the lemma must hold equally for a sequence
    \[
    \underbrace{q_1(\tatrans^+)^{l-1} q_{l} \tatrans_{\psi(l)}^+ q_{l+1}}_{\text{uts}}
    \tatrans_i q'
    \]
    (where also $\psi(l) \neq i$).

    Let $j\defn \psi(l)$ be the last thread executing before $i$.
    We thus have (gray parts are the proof obligation):\\
\begin{tikzpicture}
   \tikzstyle{e}=[minimum width=1cm]
   \tikzstyle{every node}=[font=\small, node distance=.75cm]

  \node (s1) {$q_1$};

  \node (s2) [node distance=1.2cm,below of=s1] {$q_1'$};
  \node (m1) [right of=s2, xshift=.9cm] {$q_2'$};
  \node (m2) [right of=m1, xshift=.9cm] {$q_{l-1}'$};
  \node (m3) [right of=m2, xshift=.9cm] {$q_l'$};
  \node (s3) [gray,right of=m3, xshift=.9cm] {$q_{l+1}'$};
  \path (s1) -- node[midway,sloped]{$\tatrans_i$} (s2);
  \path (s2) -- node[pos=.45]{$\tatrans_{\psi(1)}^+$} (m1);
  \path (m1) -- node(d1)[pos=.45]{$\ldots$} (m2);
  \path (m2) -- node[pos=.45]{$\tatrans_{\psi(l-1)}^+$} (m3);
  \path (m3) -- node[gray,pos=.45]{$\tatrans_j^+$} (s3);

  \node (n1) [right of=s1, xshift=.9cm] {$q_2$};
  \node (n2) [right of=n1, xshift=.9cm] {$q_{l-1}$};
  \node (n3) [right of=n2, xshift=.9cm] {$q_l$};
  \node (s4) [xshift=.9cm,right of=n3] {$q_{l+1}$};
  \path (s1) -- node[midway,sloped]{$\tatrans_{\psi(1)}^+$} (n1);
  \path (n1) -- node(d2) [midway,sloped]{$\ldots$} (n2);
  \path (n2) -- node[midway,sloped]{$\tatrans_{\psi(l-1)}^+$} (n3);
  \path (n3) -- node[midway,sloped]{$\tatrans_j^+$} (s4);

  \path (d1) -- node[midway,sloped]{$\Leftarrow$} (d2);

  \node (n3p) [node distance=.8cm,xshift=.5cm,below right of=n3] {$q$};
  \path (n3) -- node[midway,sloped]{$\tatrans_i$} (n3p);
  \path (m3) -- node[sloped,pos=.48]{$\cong_{\set i}$} (n3p);

  \node (s3p) [gray,node distance=.8cm,below right of=s4,xshift=.5cm] {$q'$};

  \path (s3p) -- node[gray,sloped,pos=.48]{$\cong_{\set i}$} (s3);
  \path (s4) -- node[gray,midway,sloped]{$\tatrans_i$} (s3p);

\draw [bracea] (s1.north) -- node [above, pos=0.5] {uts} (s4.north);
\draw [brace,gray] (s2.south) -- node [below, pos=0.5,gray] {uts} (s3.south);

\end{tikzpicture}

The nested induction is over the steps in the transaction $\tatrans_j^+$.
By induction over the length $m$ of the path 
$q_{l} = q_{l,1}\tatrans_{j} \ldots \tatrans_{j} q_{l,m} = q_{l+1}$
with $q_{l,2}, \ldots, q_{l,m} \in \RR_{j}$, we show that there is 
a path
$q_{l,1} \tatrans_i q_{l,1}'\tatrans_{j} \ldots \tatrans_{j} q_{l,m}'$
with $q_{l,2}', \ldots, q_{l,m}' \in \RR_{j}$ and $q_{l,m}'\cong_{i} q'$, if there is a transition
$q_{l,m} \tatrans_i q'$:
	
\begin{center}
  \begin{tikzpicture}
    \tikzstyle{e}=[minimum width=1cm]
    \tikzstyle{every node}=[font=\small, node distance=.5cm]
    
    \node (s1) {$q_{l,1}$};
    
    \node (s2) [gray,node distance=1.2cm,below of=s1] {$q_{l,1}'$};
    \node (m1) [gray,right of=s2, xshift=.9cm] {$q_{l,2}'$};
    \node (m2) [gray,right of=m1, xshift=.9cm] {$q_{l,m-1}'$};
    \node (s3) [gray,right of=m2, xshift=.9cm] {$q_{l,m}'$};
    \path (s1) -- node[gray,midway,sloped]{$\tatrans_i$} (s2);
    \path (s2) -- node[gray,pos=.45]{$\tatrans_{j}$} (m1);
    \path (m1) -- node(d1)[gray,pos=.45]{$\ldots$} (m2);
    \path (m2) -- node[gray,pos=.45]{$\tatrans_{j}$} (s3);
    
    \node (n1) [right of=s1, xshift=.9cm] {$q_{l,2}$};
    \node (n2) [right of=n1, xshift=.9cm] {$q_{l,m-1}$};
    \node (s4) [xshift=.9cm,right of=n2] {$q_{l,m}$};
    \path (s1) -- node[midway,sloped]{$\tatrans_{j}$} (n1);
    \path (n1) -- node(d2) [midway,sloped]{$\ldots$} (n2);
    \path (n2) -- node[midway,sloped]{$\tatrans_{j}$} (s4);
    
    \path (d1) -- node[midway,sloped]{$\Leftarrow$} (d2);
    
    \node (s3p) [node distance=.8cm,xshift=.5cm,below right of=s4] {$q'$};
    \path (s4) -- node[midway,sloped]{$\tatrans_i$} (s3p);
    
    \path (s3p) -- node[gray,sloped,pos=.48]{$\cong_{\set i}$} (s3);
    \draw [bracea] (n1.north) -- node [above, pos=0.5] {$\in\RR_j$} (s4.north);
    \draw [brace,gray] (m1.south) -- node [below, pos=0.5,gray] {$\in\RR_j$} (s3.south);
  \end{tikzpicture}
\end{center}

\begin{itemize}
  \item Base case ($m=2$ and $q_{l,1}\tatrans_{j}q_{l,m}\tatrans_{i}q'$):
    By \autoref{item:rmover} of the definition of transaction systems
    (\autoref{def:tas}), we obtain
    $q_{l,1}\tatrans_{i} q_{l,1}' \tatrans_{j} q_{l,2}'$
    with $q_{l,2}' \cong_{\set i} q'$ and $q_{l,2}'\in \RR_j$.
    
  \item Induction hypothesis (IH2):
    Assume that the hypothesis holds for $(m - 1) > 0$.

  \item Induction step:
	We show that our claim holds for a path up to $m$ ending with:
	$q_{l,m-1}\tatrans_{j} q_{l,m} \tatrans_{i} q'$ with $q_{l,m}\in\RR_j$.
	By \autoref{item:rmover} of \autoref{def:tas}, we obtain
	$q_{l,m-1}\tatrans_{i} q_{l,m-1}' \tatrans_{j} q_{l,m}'$
	with $q_{l,m}' \cong_{\set i} q'$ and $q_{l,m}'\in \RR_j$.
	We then apply IH2 on the path of length $m-1$ to obtain a path
	$q_{l,1} \tatrans_i q_{l,1}'\tatrans_{j} \ldots \tatrans_{j} q_{l,m-2}'\tatrans_{j} q_{l,m-1}''$
	with $q_{l,2}', \ldots, q_{l,m-1}' \in \RR_{j}$ and  $q_{l,m-1}'' \cong_{\set i} q_{l,m-1}'$. This implies
	$q_{l,m-1}'' \tatrans_j q_{l,m}''$ with $q_{l,m}'\cong_{\set i} q_{l,m}''$
        (since $\cong_{\set i}$ is a bisimulation relation).
        Since $\cong_{\set i}$ preserves the phase of other
        threads, it follows that 
        $q_{l,m}''\in R_j$ from $q_{l,m}'\in R_j$.

\begin{center}
\begin{tikzpicture}
   \tikzstyle{e}=[minimum width=1cm]
   \tikzstyle{every node}=[font=\small, node distance=.5cm]

  \node (s1) {};
  \node (n1) [right of=s1, xshift=.9cm] {$q_{l,m-2}$};
  \node (n2) [right of=n1, xshift=.9cm] {$q_{l,m-1}$};
  \node (s4) [xshift=.9cm,right of=n2] {$q_{l,m}$};
  \path (s1.east) -- node(d1)[midway,sloped]{\vphantom{R}$\ldots$} (n1.west);
  \path (n1.east) -- node[midway,sloped]{$\tatrans_{j}$} (n2.west);
  \path (n2.east) -- node[midway,sloped]{$\tatrans_{j}$} (s4.west);
    
  \node (s2) [gray,node distance=1.2cm,below of=s1] {};
  \node (m1) [gray,right of=s2, xshift=.9cm] {$q_{l,m-2}'$};
  \node (m2) [gray,right of=m1, xshift=.9cm] {$q_{l,m-1}'$};
  \node (m2p) [gray,node distance=1.6cm,below right of=m1] {$q_{l,m-1}''$};
  \node (m1p) [gray,node distance=1.6cm,right of=m2p] {$q_{l,m}''$};
  \node (s3) [gray,right of=m2, xshift=.9cm] {$q_{l,m}'$};
  \path (s2.east) -- node [gray,pos=.45]{$\ldots$} (m1.west);
  \path (m1) -- node[gray,midway,sloped]{$\tatrans_{j}$} (m2p);
  \path (m2p.east) -- node[gray,midway]{$\tatrans_{j}$} (m1p.west);
  \path (m2.east) -- node[gray,pos=.45]{$\tatrans_{j}$} (s3.west);

  \path (n1) -- node[gray,midway,sloped]{$\tatrans_i$} (m1);
  \path (n2) -- node[gray,midway,sloped]{$\tatrans_i$} (m2);

  \node (s3p) [node distance=1.2cm,below right of=s4] {$q$};
  \path (s4) -- node[midway,sloped]{$\tatrans_i$} (s3p);

  \path (m2p) -- node[gray,sloped,pos=.48]{$\cong_{\set i}$} (m2);
  \path (m1p) -- node[gray,sloped,pos=.48]{$\cong_{\set i}$} (s3);
  \path (s3p) -- node[gray,sloped,pos=.48]{$\cong_{\set i}$} (s3);
\draw [bracea] (d1.north) -- node [above, pos=0.5] {$\in\RR_j$} (s4.north);
\node [right of=s3,gray, xshift=.3cm,yshift=-.1cm] {$\in\RR_j$};
\node [below of=m1,gray,xshift=-.4cm] {$\in\RR_j$};
\node [left of=m2p,gray, yshift=-.1cm,xshift=-.5cm] {$\RR_j\ni$};
\node [right of=m1p,gray, xshift=.3cm,yshift=.1cm] {$\stackrel{}{\in}\RR_j$};
\end{tikzpicture}
\end{center}
	The resulting path satisfies the required property.
	Note that $q_{l,1}'\tatrans_{j} \ldots \tatrans_{j} q_{l,m}'$ is a uts.

\end{itemize}
	We thus obtain a transition sequence:
    \[ \underbrace{q_1\tatrans_{\psi(1)}^+\ldots\tatrans_{\psi(l-1)}^+q_l \tatrans_i q_l' }_{A}{\tatrans_{\psi(l)^+}q_{l+1}'}
    \]
    with $\psi(1),\ldots,\psi(l-1)\neq i$.
    Applying the IH to $A$ yields a path:
    \[ q_1\tatrans_i
    	\underbrace{q_1'\tatrans_{\psi(1)}^+q_2'\tatrans_{\psi(2)}^+\ldots q_{l-1}'\tatrans_{\psi(l-1)}^+q_l''}_{B}
    \]
    Since, $q_l' \cong_{\set i} q_l''$, we also have
    $q_l''\tatrans_{j}q_{l+1}''$, by
    \autoref{cor:bisim-threads} and $i\neq j$.
    This is a uts under $\set j$.
    Since $B$ remains a uts under $T_u$,
    the path  $q_1'(\tatrans^+)^nq_{l+1}'$
     satisfies the needed property (it is a uts under $T_u\uplus \set j$).
     
\begin{tikzpicture}
   \tikzstyle{e}=[minimum width=1cm]
   \tikzstyle{every node}=[font=\small, node distance=.8cm]

  \node (s1) {$q_1$};
    
  \node (s2) [node distance=1.4cm,below of=s1] {$q_1'$};
  \node (m1) [right of=s2, xshift=.9cm] {$q_2'$};
  \node (m2) [right of=m1, xshift=.9cm] {$q_{l-1}'$};
  \node (m3) [right of=m2, xshift=.9cm] {$q_l''$};
  \node (s3) [gray,right of=m3, xshift=.9cm] {$q_{l+1}''$};
  \path (s1) -- node[midway,sloped]{$\tatrans_i$} (s2);
  \path (s2) -- node[pos=.45]{$\tatrans_{\psi(1)}^+$} (m1);
  \path (m1) -- node(d1)[pos=.45]{$\ldots$} (m2);
  \path (m2) -- node[pos=.45]{$\tatrans_{\psi(l-1)}^+$} (m3);
  \path (m3) -- node(e1)[gray,pos=.45]{$\tatrans_{j}^+$} (s3);

  \node (n1) [right of=s1, xshift=.9cm] {$q_2$};
  \node (n2) [right of=n1, xshift=.9cm] {$q_{l-1}$};
  \node (n3) [right of=n2, xshift=.9cm] {$q_l$};
  \node (s4) [xshift=.9cm,right of=n3] {$q_{l+1}$};
  \path (s1) -- node[midway,sloped]{$\tatrans_{\psi(1)}^+$} (n1);
  \path (n1) -- node(d2) [midway,sloped]{$\ldots$} (n2);
  \path (n2) -- node[midway]{$\tatrans_{\psi(l-1)}^+$} (n3);
  \path (n3) -- node[midway,sloped]{$\tatrans_{j}^+$} (s4);

  \path (d1) -- node[midway,sloped]{$\Leftarrow$} (d2);

  \node (s3p) [node distance=.8cm,xshift=.3cm,below right of=s4] {$q_{l+1}'$};
  \path (s4) -- node[midway,sloped]{$\tatrans_i$} (s3p);
  
  \node (s3pp) [node distance=.8cm,xshift=.3cm,below right of=n3] {$q_l'$};
  \path (n3) -- node[midway,sloped]{$\tatrans_i$} (s3pp);
  \path (m3) -- node[sloped,pos=.48]{$\cong_{\set i}$} (s3pp);

  \path (s3p) -- node(e2)[midway]{$\tatrans_{j}^+$} (s3pp);

  \path (s3p) -- node[gray,sloped,pos=.48]{$\cong_{\set i}$} (s3);

  \path (e1) -- node[gray,sloped,pos=.48]{$\Leftarrow$} (e2);

\draw [bracea] (s1.north) -- node [above, pos=0.5] {uts} (n3.north);
\draw [brace,gray] (s2.south) -- node [below, pos=0.5,gray] {uts} (s3.south);

\end{tikzpicture}

  \end{itemize}
\end{proof}

\begin{definition}[Committed transaction sequence (cts)]\label{def:cts}~\\
  A cts under $T\subseteq\threads$ is a sequence of transactions with $k\ge 0$:
  \[ p = p_1 \tatrans_{\varphi(1)}^+p_2\tatrans_{\varphi(2)}^+\ldots\tatrans_{\varphi(k)}^+p_{k+1} \]
  such that for all $1\leq i\leq k$, we have $p_i=p_{i,1}\tatrans_{\varphi(i)}\ldots\tatrans_{\varphi(i)}p_{i,x_i}=p_{i+1}$ with
  \begin{enumerate}
  \item $\varphi(i)\in T$ and $p_1\in N_T$,
  \item $p_{i,1}\in\NN_{\varphi(i)}$,
  \item $p_{i,2},\ldots,p_{i,x_i-1}\in\RR_{\varphi(i)}\cup\LL_{\varphi(i)}$, and
  \item $p_{i,x_i}\in\LL_{\varphi(i)}\cup\NN_{\varphi(i)}$.
  \end{enumerate}
\end{definition}


\begin{corollary}\label{cor:split-cts}
An analogous ``splitting'' property to the one from \autoref{cor:split-uts} also holds for cts's.
\end{corollary}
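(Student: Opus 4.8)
The plan is to mirror the proof of \autoref{cor:split-uts}, spelling out the precise analogue for \autoref{def:cts}: a cts $p_1\tatrans_{\varphi(1)}^+\ldots\tatrans_{\varphi(k)}^+p_{k+1}$ under $T$ splits at any index $1\le m\le k+1$ into a prefix $p_1\tatrans_{\varphi(1)}^+\ldots\tatrans_{\varphi(m-1)}^+p_{m}$, which is a cts under $T$, and a suffix $p_m\tatrans_{\varphi(m)}^+\ldots\tatrans_{\varphi(k)}^+p_{k+1}$, which is a cts under $T'\defn\set{i\in T\mid p_m\in\NN_i}$. I would verify the four conditions of \autoref{def:cts} for each of the two pieces.

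First I would dispatch the prefix. Conditions (2)--(4) of \autoref{def:cts} only constrain states and the thread internal to a single block $p_i=p_{i,1}\tatrans_{\varphi(i)}\ldots p_{i,x_i}$, so for every block with index $i\le m-1$ they are literally the same statements as in the original cts and hold unchanged. The remaining requirements, $\varphi(i)\in T$ for $i<m$ and $p_1\in\NN_T$, are inherited directly from the original sequence. Hence the prefix is a cts under $T$ with no further work.

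For the suffix I would set $T'\defn\set{i\in T\mid p_m\in\NN_i}$; then $p_m\in\NN_{T'}$ by construction, giving the external-start half of condition (1). Again the per-block conditions (2)--(4) restrict to index range $m\le i\le k$ and carry over verbatim, so the only genuine obligation is $\varphi(i)\in T'$ for each suffix block. To establish this, fix such an $i$, write $t\defn\varphi(i)$, and let $i_0\defn\min\set{i'\ge m\mid\varphi(i')=t}$ be the first block in the suffix scheduled to $t$. Between state $p_m$ and the start state $p_{i_0,1}=p_{i_0}$ of block $i_0$, thread $t$ performs no transition, since by minimality all intervening steps belong to threads other than $t$. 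By the phase-preservation property (\autoref{item:invar} of \autoref{def:pas}), transitions of other threads keep $t$ within the same phase class, so $t$'s phase at $p_m$ equals its phase at $p_{i_0,1}$; and condition (2) of \autoref{def:cts} gives $p_{i_0,1}\in\NN_t$. Therefore $p_m\in\NN_t$, i.e. $t\in T'$, as required.

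The main obstacle is precisely this last step. In the uts case a thread appears at most once (\autoref{cor:uts-threads}) and never leaves the pre-commit phase after starting, so externality at the cut point is immediate; in a cts a thread may complete a transaction (ending in $\LL_i\cup\NN_i$ by condition (4)) and be rescheduled, so it can occur in several blocks. The ``first appearance plus phase preservation'' argument above is what bridges this gap cleanly: it shows that whatever a suffix thread does later, it must still be external at $p_m$ because its earliest post-cut block begins in $\NN_t$ and nothing in between can move it out of that class. With all conditions verified for both pieces, the splitting property for cts's follows, completing the analogue of \autoref{cor:split-uts}.
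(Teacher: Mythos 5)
Your proof is correct and follows the route the paper intends: \autoref{cor:split-cts} is stated without an explicit proof as immediate from \autoref{def:cts}, and your verification---in particular using phase preservation (\autoref{item:invar} of \autoref{def:pas}) at the first post-cut block of each suffix thread to conclude that the thread is still external at the cut state---is precisely the argument needed. You also correctly isolate the one point where the analogy with \autoref{cor:split-uts} is not verbatim (a thread may recur in several cts blocks, since \autoref{cor:uts-threads} has no cts counterpart), which the paper leaves implicit.
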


\begin{definition}[Open transaction sequence (ots)]~\\
If $A$ is a cts under $T_A$ and $B$ is a uts under $T_B$ such that $A$ ends in the same state in which $B$ starts, then the sequence of transactions $A\circ B$ obtained by appending $B$ to $A$ is an ots under $T_A\cup T_B$.
\end{definition}

We sometimes write $AB$ for $A \circ B$. We refer to the cts part of $AB$ by using states $p_x, p_{x+1}$ (transactions start/end states), $p_{x,y}, p_{x,y+1}$ (internal transitions)
and threads $\varphi(x)$ for a transaction index $x$ and an internal transition index $y$ (as in \autoref{def:cts}). 
Similarly, we refer to the uts part by using states $q_x, q_{x+1}$ (transactions start/end states), $q_{x,y}, q_{x,y+1}$ (internal transitions)
and threads $\psi(x)$ for a transaction index $x$ and an internal transition index $y$ (as in \autoref{def:uts}).

\begin{corollary}\label{cor:suffix}
For an ots containing a sub-trace $\ldots r \tatrans_{i} \ldots$ with
$r\notin \NN_j$ for $j\neq i$, the suffix from $r$ does not contain
$j$ transitions.
\end{corollary}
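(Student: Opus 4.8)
The plan is to argue by contradiction. Suppose the suffix of the ots starting at $r$ does contain at least one $j$-transition, and consider the \emph{first} such step, say $s \tatrans_j s'$. Since an ots is a concatenation of a cts and a uts (\autoref{def:cts}, \autoref{def:uts}), it decomposes into transaction blocks, each block being a maximal run of consecutive steps of a single thread. The hypothesis tells us that the step immediately following $r$ is the $i$-step $r \tatrans_i \cdots$ with $i \neq j$, so $r$ lies at the start of or inside an $i$-block and cannot sit strictly inside a $j$-block (inside a $j$-block every successor step is a $j$-step). Consequently the $j$-block carrying $s \tatrans_j s'$ begins strictly after $r$, and $s$ is its first state.

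First I would apply the phase-preservation property \autoref{item:invar}, which (after relabelling indices) says $\tatrans_k \subseteq \LL_j^2 \cup \RR_j^2 \cup \NN_j^2$ for every $k \neq j$: a step by any thread other than $j$ leaves thread $j$'s phase class unchanged. By minimality of $s \tatrans_j s'$, every transition strictly between $r$ and $s$ is a non-$j$ transition, so thread $j$ remains in one fixed phase class throughout. Because $r \notin \NN_j$ means $r \in \RR_j \cup \LL_j$, this preserved phase yields $s \in \RR_j \cup \LL_j$, i.e. $s \notin \NN_j$.

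Then I would invoke the second clause of the block definitions — clause 2 of \autoref{def:uts} ($q_{i,1}\in\NN_{\psi(i)}$) for a uts block and clause 2 of \autoref{def:cts} ($p_{i,1}\in\NN_{\varphi(i)}$) for a cts block — which together state that the first state of any thread-$j$ block lies in $\NN_j$. Since $s$ was identified as the first state of its $j$-block, we obtain $s \in \NN_j$, contradicting $s \notin \NN_j$. Hence no $j$-transition occurs after $r$, which is exactly the claim.

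The only point needing genuine care — the main obstacle — is the bookkeeping that establishes $s$ as a true block-start rather than an interior step of a $j$-block. This rests on the two facts already noted: that $r$ cannot lie strictly inside a $j$-block (its successor is an $i$-step), and that the boundary state $r$ cannot coincide with the first state of the next $j$-block (that state is in $\NN_j$ whereas $r\notin\NN_j$). Once these are pinned down, the precise interleaving of the remaining blocks is irrelevant, and the phase-preservation step closes the argument uniformly across both the cts and the uts parts of the ots.
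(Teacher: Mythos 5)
Your proof is correct and is essentially the argument the paper intends: it leaves \autoref{cor:suffix} as an immediate consequence of clause 2 of \autoref{def:cts}/\autoref{def:uts} (every thread-$j$ transaction begins in \NN$_j$) combined with phase preservation under non-$j$ steps (\autoref{item:invar} of \autoref{def:pas}), which is exactly the inductive argument hinted at in the proof of \autoref{lem:ots}. You merely phrase that induction contrapositively, via a first offending $j$-step $s \tatrans_j s'$ whose source must simultaneously lie in \NN$_j$ (as a transaction start) and outside \NN$_j$ (by phase preservation from $r$), and your care in ruling out that $s$ sits strictly inside a $j$-block is sound.
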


\begin{corollary}\label{cor:split-ots}
An analogous ``splitting'' property to the one from \autoref{cor:split-uts} also holds for ots's.
\end{corollary}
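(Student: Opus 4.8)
The plan is to reduce \autoref{cor:split-ots} to the splitting properties already recorded for committed and uncommitted sequences, \autoref{cor:split-cts} and \autoref{cor:split-uts}, exploiting that every ots is by definition a cts followed by a uts and that a cts (resp.\ uts) is the degenerate ots with empty uts (resp.\ cts) part. First I would fix the precise analogue to be proved: an ots $A\circ B$ under $T$, whose cts part $A$ is under $T_A$ and whose uts part $B$ is under $T_B$ (so $T=T_A\cup T_B$), can be split at an arbitrary transaction index into a prefix that is again an ots under $T$ and a suffix that is an ots under the restricted set $\set{i\in T \mid \sigma\in\NN_i}$, where $\sigma$ is the state at the split point.

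The argument is then a case analysis on whether the chosen index falls in the cts part $A$ or the uts part $B$, the boundary index being covered consistently by either case. If it lies in $A$, I would apply \autoref{cor:split-cts} to split $A$ there into $A_1$, a cts under $T_A$, and $A_2$, a cts under $\set{i\in T_A \mid \sigma\in\NN_i}$. The prefix is $A_1$, which is an ots under $T$ (its start state, shared with the whole ots, lies in $\NN_T$); the suffix is $A_2\circ B$, a cts followed by a uts that connect at $\sigma$, hence an ots by definition. If instead the index lies in $B$, I would symmetrically apply \autoref{cor:split-uts}, splitting $B$ into $B_1$, a uts under $T_B$, and $B_2$, a uts under $\set{i\in T_B\mid \sigma\in\NN_i}$; the prefix $A\circ B_1$ is an ots under $T$ and the suffix $B_2$ is a uts, hence an ots.

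The step I expect to require the most care — the main obstacle — is identifying the suffix's thread set as exactly $\set{i\in T\mid\sigma\in\NN_i}$, reconciling it with the sets handed back by the two corollaries. In the first case the suffix $A_2\circ B$ is literally an ots under $\set{i\in T_A\mid\sigma\in\NN_i}\cup T_B$; to see this equals $\set{i\in T\mid\sigma\in\NN_i}$ I would argue $T_B\subseteq\set{i\mid\sigma\in\NN_i}$, which holds because every thread of $T_B$ is external at the start of $B$ by \autoref{def:uts}, while $\sigma$ precedes $B$ with only $T_A$-threads transiting in between, so phase preservation (\autoref{item:invar}) keeps those threads external already at $\sigma$. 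In the second case $B_2$ is returned only under the smaller set $\set{i\in T_B\mid\sigma\in\NN_i}$, and here I would invoke the elementary observation that a uts (or cts) under a set $T'$ is also one under any superset $T''\supseteq T'$ whose extra threads are external at its start state — taking $T''=\set{i\in T\mid\sigma\in\NN_i}$ — since enlarging the thread set only strengthens the start-state requirement $\sigma\in\NN_{T''}$, which holds by construction of $T''$. Finally, the degenerate length-zero pieces and the exact boundary index need no separate treatment, being absorbed by the conventions of \autoref{cor:split-cts} and \autoref{cor:split-uts}.
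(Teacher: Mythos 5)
Your proof is correct and takes essentially the same route as the paper, which states \autoref{cor:split-ots} without proof: the intended justification is exactly your case split on whether the index falls in the cts or the uts part, applying \autoref{cor:split-cts} and \autoref{cor:split-uts} respectively and recombining via the definition of an ots. Your explicit bookkeeping for the suffix thread set (using phase preservation, \autoref{item:invar} of \autoref{def:pas}, to show the uts threads are already external at the split state) correctly fills in the detail the paper treats as immediate.
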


\begin{corollary}
  \label{cor:always_ots}
  Every path $p\tatrans_i^*q$ with $p \in \NN_i$
  in a transaction system is an ots under~$\{i\}$.
\end{corollary}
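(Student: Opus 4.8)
The plan is to cut the single-thread path at its external states and present the initial part as a cts and the trailing part as a uts, both under $\{i\}$; note that no thread other than $i$ can occur, since every step is a $\tatrans_i$-step, so whatever decomposition we find is automatically under $\{i\}$. Write the path as $p = s_0 \tatrans_i s_1 \tatrans_i \cdots \tatrans_i s_m = q$. By the $3$-partition (\autoref{item:part} of \autoref{def:pas}) each $s_k$ lies in exactly one of $\RR_i$, $\LL_i$, $\NN_i$, and $s_0 = p \in \NN_i$ by hypothesis. Set $E \defn \{\, k \mid s_k \in \NN_i \,\}$; since $0 \in E$, the last external index $t \defn \max E$ exists.

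First I would turn the prefix $s_0 \tatrans_i^* s_t$ into a cts. Cutting it at its external indices $0 = t_0 < t_1 < \cdots < t_r = t$ produces segments $s_{t_a} \tatrans_i^+ s_{t_{a+1}}$, and each is a legal transaction in the sense of \autoref{def:cts}: its first state $s_{t_a}$ is external (so it lies in $\NN_{\{i\}} = \NN_i$), its last state $s_{t_{a+1}}$ is external and hence in $\LL_i \cup \NN_i$, and every strictly internal state, being non-external, lies in $\RR_i \cup \LL_i$. A degenerate one-step segment $s_{t_a} \tatrans_i s_{t_{a+1}}$ between adjacent external states causes no trouble, since the range of internal states is then empty. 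Hence the prefix is a cts under $\{i\}$.

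It remains to treat the tail $s_t \tatrans_i^* s_m$, and this is where the only real work lies. If $t = m$ then $q \in \NN_i$, the uts is the empty sequence ($l = 0$), and the whole path is the cts above. If $t < m$, then $s_{t+1}, \dots, s_m$ are all non-external, i.e.\ in $\RR_i \cup \LL_i$. When $q = s_m \in \LL_i$, I would fold the final open-but-committed segment $s_t \tatrans_i^+ s_m$ into the cts instead; it is again a legal cts-transaction, its endpoint now lying in $\LL_i \subseteq \LL_i \cup \NN_i$, while the uts stays empty. The interesting case is $q = s_m \in \RR_i$, where I claim the entire tail avoids the post phase, so that $s_{t+1}, \dots, s_m \in \RR_i$ and $s_t \tatrans_i^+ s_m$ is a single uts-transaction in the sense of \autoref{def:uts} (external start $s_t$, all further states in $\RR_i$).

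The main obstacle is exactly this last claim: ruling out a post-commit state in the tail. I would argue by contradiction. Suppose $s_k \in \LL_i$ for some $t < k \le m$ and take the least such $k$. Because the tail contains no external state, \autoref{item:post} of \autoref{def:tas} --- i.e.\ $\LL_i \lrestr \tatrans_i \rrestr \RR_i = \emptyset$, so thread $i$ can never step from a post state back to a pre state --- forces $s_{k'} \in \LL_i$ for all $k' \ge k$, and in particular $s_m \in \LL_i$, contradicting $s_m \in \RR_i$. Hence the tail remains inside $\RR_i$ and is a valid uts, and $p \tatrans_i^* q$ decomposes as $A \circ B$, an ots under $\{i\}$.
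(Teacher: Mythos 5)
Your proof is correct and matches the paper's intended justification: \autoref{cor:always_ots} is stated without explicit proof precisely because it follows from the definitions in the way you spell out --- cutting the single-thread path at its external states yields the cts segments, and \autoref{item:post} of \autoref{def:tas} together with the absence of external states after the last external index forces the tail into the shape $\RR_i^{*}\LL_i^{*}$, so it is either foldable into the cts (when $q\in\LL_i$ or $q\in\NN_i$) or a single uts block (when $q\in\RR_i$). Your contradiction argument for ruling out a post-commit state before a pre-commit endpoint is exactly the step the paper leaves implicit, and your handling of the degenerate cases (empty uts, one-step segments) is consistent with how length-zero sequences are used elsewhere, e.g.\ in the base case of \autoref{lemma:right-move-uts}.
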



\begin{lemma}[From transaction system path to ots]\label{lem:ots}~\\
Let $\TAS$ be a transaction system and
$\tatrans_T \defn \bigcup_{i\in T}\tatrans_i$.

Suppose that $ p\tatrans^*_T q$ is a \TAS path with $p\in \NN_{T}$,
then there exists an ots $p\tatrans^* q'$ under $T$ s.t. $q'\cong q$.
\end{lemma}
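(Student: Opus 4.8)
The plan is to induct on the length $n$ of the path $p \tatrans^*_T q$. For $n = 0$ we have $q = p \in \NN_{T}$, and the empty sequence is vacuously an ots under $T$ with $q' = q$.

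For the inductive step I would write the path as $p \tatrans^n_T \hat q \tatrans_i q$ with $i \in T$, and apply the induction hypothesis to $p \tatrans^n_T \hat q$ to obtain an ots $\mathcal{O}\colon p \tatrans^* \bar q$ under $T$ with $\bar q \cong \hat q$. Since $\cong = \cong_{\threads}$ is a thread bisimulation (\autoref{cor:bisim-upto1}), the step $\hat q \tatrans_i q$ is matched by a step $\bar q \tatrans_i r$ with $r \cong q$. By transitivity of $\cong$ it then suffices to transform $\mathcal{O}$ extended by the single transition $\bar q \tatrans_i r$ into a bisimilar ots under $T$; the final state $q'$ of that ots satisfies $q' \cong r \cong q$. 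Thus the whole argument reduces to a single-transition extension claim: an ots followed by one transition is bisimilar to an ots.

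To prove that claim I would decompose $\mathcal{O} = A \circ B$, with $A$ a cts under $T_A$ and $B$ a uts under $T_B$, where $B$ ends in $\bar q$. Because both local transitions and bisimilarity preserve the phases of the other threads (condition \ref{item:invar} and \autoref{cor:bisim-upto}) and $p \in \NN_{T}$, thread $i$ is at $\bar q$ in exactly one of three phases: external ($\bar q \in \NN_i$), pre-commit ($\bar q \in \RR_i$, equivalently $i \in T_B$, its open transaction being the unique $i$-block of $B$ by \autoref{cor:uts-threads}), or post-commit ($\bar q \in \LL_i$, its committed transaction lying inside $A$). I would then case on this phase and on the target phase of $r$. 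When $i$ is external and $r \in \RR_i$, I simply append a fresh one-step uts block to $B$. When $i$ is external and the step commits or completes ($r \in \LL_i \cup \NN_i$), I move the $i$-step to the front of $B$ with \autoref{lemma:right-move-uts} (applicable since $i \notin T_B$) and append the resulting committed transaction to $A$. When $i \in T_B$, I split $B = B_1 \circ B_2$ at the $i$-block via \autoref{cor:split-uts} (so $B_2$ is over threads $\neq i$), slide the new $i$-step in front of $B_2$ by \autoref{lemma:right-move-uts} so it abuts the $i$-block, and then either merge it into that block (if $r \in \RR_i$) or, if it commits, migrate the now-committed $i$-transaction into $A$. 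When $i$ is post-commit, the step continues $i$'s transaction inside $A$, which I rejoin by sliding the step leftward.

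The main obstacles are the two migration cases. For the commit case, the newly committed $i$-transaction has to cross the still-open pre-commit blocks of $B_1$ that precede it; since those blocks are right movers (condition \ref{item:rmover}), I would push each of them rightward past the $i$-transaction --- the same commutation engine that drives \autoref{lemma:right-move-uts} --- so that the $i$-transaction lands immediately after $A$ and can be appended to the cts. For the post-commit continuation, I need the mirror image: a transition out of $\LL_i$ is a left mover (condition \ref{item:lmover}), so I would slide it leftward through $B$ and through the cts blocks following the $i$-block, which requires establishing a left-moving dual of \autoref{lemma:right-move-uts} by the same nested induction. The delicate bookkeeping throughout is that each elementary commutation only yields equality up to $\cong_{\set i}$ or $\cong_{\set{i,j}}$, so I must repeatedly use that these are phase-preserving bisimulations (\autoref{cor:bisim-threads}, \autoref{cor:bisim-upto}) to certify that every reordered sequence is still a genuine cts/uts and that the accumulated bisimilarities compose to one relation $\cong$ linking the constructed final state to $r$.
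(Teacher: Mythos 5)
Your overall architecture coincides with the paper's: induction on path length, matching the appended step across the bisimulation $\cong$, and a case split on the phase of thread $i$ at the end of the ots. Your external and pre-commit cases are essentially the paper's (appending or front-inserting a uts block via \autoref{lemma:right-move-uts}, merging into the open $i$-block, and migrating a newly committed $i$-block leftward through the earlier pre-commit blocks by repeated application of the same lemma --- all sound, because each such swap is guaranteed up to $\cong_{\set i}$, which by \autoref{cor:bisim-upto} preserves the phases of every thread other than $i$, so the swapped material provably remains a uts). The genuine gap is in the post-commit case. There you rely on ``a left-moving dual of \autoref{lemma:right-move-uts} by the same nested induction,'' together with the closing claim that the commutations let you ``certify that every reordered sequence is still a genuine cts/uts.'' That certification is exactly what fails for left moves: \autoref{item:lmover} of \autoref{def:tas} only yields commutation up to $\cong_{\set{i,j}}$, and $\cong_{\set{i,j}}$ preserves phase traces only for threads \emph{outside} $\set{i,j}$ (\autoref{cor:bisim-upto}). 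So each time you slide the from-$\LL_i$ step leftward past a $j$-transition, the replacement $j$-transition may leave $j$ in a different phase --- the paper's \autoref{app:dynmovers} documents precisely this (``when $i$ moves a transition to the left over $j$, both $i$ and $j$ could change their phase''). The nested induction in \autoref{lemma:right-move-uts} hinges on the invariant $q_{l,m}'' \in \RR_j$, recovered there because $\cong_{\set i}$ preserves $j$'s phase; the dual invariant is simply unavailable, so after your left-moves the suffix need not decompose as a cts followed by a uts at all, and the dual lemma cannot be proved by the same argument.

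The paper escapes this not by preserving structure through the left-moves but by re-deriving it: its induction is \emph{strong} induction on path length. In the $q \in \LL_i$ case it inserts the new $i$-step immediately after $i$'s last committed transaction at position $x$ in the cts (using \autoref{item:lmover} and \autoref{cor:bisim-threads}), notes that the disturbed suffix from $p_{x+1}'$ is a strictly shorter path starting in $\NN_{T'}$ for some $T' \subseteq T \setminus \set i$ (via \autoref{cor:suffix}, \autoref{cor:split-ots} and \autoref{item:invar} of \autoref{def:pas}), and then applies the strong induction hypothesis to that suffix to rebuild an ots under $T'$ from scratch. Your induction, as set up, is weak --- the hypothesis is invoked only on the one-step-shorter prefix, after which the extension claim is supposed to stand alone --- so this repair is out of reach in your formulation. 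The fix is to abandon the dual lemma, strengthen the induction hypothesis to all strictly shorter paths, and in the post-commit case re-apply it to the suffix after the single left-insertion.
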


\begin{proof}
We prove the hypothesis by induction on the length of the path
$p\tatrans^* q$. The induction hypothesis (IH) is:

For every path $p\tatrans_T^* q$ with $p\in\NN_T$ there is a path
\[
  p=\underbrace{p_1\tatrans_{\varphi(1)}^+p_2\ldots p_{k+1}}_A=\underbrace{q_1\tatrans_{\psi(1)}^+q_2\ldots q_l\tatrans_{\psi(l)}^+q_{l+1}}_B
\]
such that $q_{l+1}\cong q$, $A$ is a cts under $T_C\defn\mathit{range}(\varphi)$, and $B$ a uts under $T_U\defn\mathit{range}(\psi)$ (so that $AB$ is an ots under $T_C\cup T_U \subseteq T$).

For the base case, $k=l=0$, take $q'\defn p = q$ and the conclusion holds
trivially.
Let $p\tatrans^*q \tatrans_i r$ be the path extension with $i$,
so the IH holds for $p\tatrans^*q$.
We show that there is an
$r' \cong r$ 
and a path $p\tatrans^+ r'$ that is an ots (under $T_C\cup T_U\cup\{i\}$).

We do case analysis over the $i$-phase of $q$:

\begin{description}
\item[$q\in\NN_i$:]
  It follows that $i\not\in T_U$, because otherwise we would have $q\in\RR_i$
  (by an inductive argument using the definition of the uts and \autoref{item:invar} of \autoref{def:pas}, also needed in the proof of \autoref{cor:suffix}).
  Using \autoref{lemma:right-move-uts}, one can move the transition to the end of the cts phase:
  \[ \underbrace{p_1\tatrans_{\varphi(1)}^+\ldots\tatrans_{\varphi(k)}p_{k+1}}_{\mathrm{cts}}\tatrans_i\underbrace{q_1'\tatrans_{\psi(1)}^+\ldots\tatrans_{\psi(l)}^+q_{l+1}'}_{\mathrm{uts}} \]
  such that $q_{l+1}'\cong_{\set i} q_{l+1}$.
  Depending on the phase of transition $i$'s target state $q_1'$, the transition can either be part of the cts ($q_1'\in\NN_i$ or $q_1'\in\LL_i$) or of the uts ($q_1'\in\RR_i$). We pick $r'\defn q'_{l+1}$. By \autoref{lemma:right-move-uts}, we have $r'\cong_{\set i} r$.
  The new path has the same transitions as the original path plus $i$
  (if $i$ was not in the path of the induction hypothesis).
  Hence, the property is satisfied.
\item[$q\in\LL_i$:]
  As in the previous case, we have $i\not\in T_U$.
  There is at least one $x$ such that  $\varphi(x)=i$, 
  otherwise we would have $q\in\NN_i$ (again applying the same inductive
  reasoning over the cts prefix path).
  Let $x$ be largest index for which $\varphi(x)=i$,  so that
  $\tatrans_{\varphi(x)}$ is the rightmost occurrence of $i$ in the ots.
  Therefore, the path has the form:
  \begin{align*}
    p_1&\tatrans_{\varphi(1)}^+\ldots p_x\tatrans_{\varphi(x)}^+p_{x+1}\ldots
    \tatrans_{\varphi(k)}^+p_{k+1}=q_1\\
    &\tatrans_{\psi(1)}^+
      \ldots \tatrans_{\psi(l)}^+q_{l+1}=q\tatrans_i r.
  \end{align*}
  Using \autoref{item:lmover} from \autoref{def:tas}, one can construct a path
  \[ p_1\tatrans_{\varphi(1)}^+\ldots\tatrans_{\varphi(x-1)}^+p_x\tatrans_{\varphi(x)}^+p_{x+1}\tatrans_i 
   \underbrace{p_{x+1}'\tatrans^* q'}_A \]
  with $q'\cong r$.
  By \autoref{cor:split-cts}, the prefix path including $p_{x+1}$ is a cts, so
  the prefix path including $p_{x+1}'$ is also a cts.
  Moreover, since by \autoref{cor:split-ots} 
  the original suffix path $p_{x+1}\tatrans^* q$ is an ots
  under $T'$ for some $T'\subseteq T \setminus \set i$,
  the threads that transit in $A$
  are still those in $T'$ by virtue of \autoref{item:lmover}
  and \autoref{cor:bisim-threads}.
  By contraposition of \autoref{cor:suffix},
  we also have $p_{x+1}\in \NN_{T'}$ and
  therefore also  $p_{x+1}'\in \NN_{T'}$
  (by \autoref{item:invar} of \autoref{def:pas} and $i\notin T'$).
  
  Since $1\le x \le k$, the suffix path $A$ is shorter than the original path.
  Hence one can apply the (strong) induction hypothesis to $A$ with $T'$
  to obtain an ots under  $T'$ starting in $p'_{x+1}\in \NN_{T'}$.
  This yields an ots under $T$:
  \[\overbrace{
  	p_1\tatrans_{\varphi(1)}^+\ldots\tatrans_{\varphi(x-1)}^+p_t
	\tatrans_{\varphi(t)}^+p_{x+1}\tatrans_i 
  	\underbrace{p_{x+1}'\tatrans^*_{T'} q''}_{\text{ots under}~T'}
  }^{\text{ots under}~T} \]
  with $q'' \cong q'$.
  Let $r'\defn q''$. By transitivity of $\cong$, we have $r' \cong r$.
  Thus the property is satisfied.
\item[$q\in\RR_i$:]
  There is at least one $x$ such that  $\psi(x)=i$, 
  otherwise we would have $q\in\NN_i\cup \LL_i$
  (again applying the same inductive
  reasoning over the ots).
  Let $x$ be the largest index for which $\psi(x)=i$,
  so that $\tatrans_{\psi(x)}$ is the rightmost occurrence of $i$ in the ots.
   
  One can then split the path into a cts $A$ under $T$
  (\autoref{cor:split-uts}),
  	a uts $B$ under $T_B\subseteq T$ and a uts $C$ under
	$T_C = T \setminus T_B$ by \autoref{cor:uts-threads}:

  \begin{align*}
    \underbrace{p_1\tatrans_{\varphi(1)}^+\ldots\tatrans_{\varphi(k)}p_{k+1}}_{A}\\
    =\underbrace{q_1\tatrans_{\psi(1)}^+\ldots q_x\tatrans_{\psi(x)}^+q}_{B}
    \underbrace{_{x+1}\ldots\tatrans_{\psi(l)}^+ q_{l+1}}_{C}
    \tatrans_i r
  \end{align*}
  From the above, we have  $i\in T_B$, but $i\notin T_C$.
  Hence, one can apply \autoref{lemma:right-move-uts} to $C$ and $i$
  to obtain a uts $C'$ under $T_C$ (ignore $D$ for now):
\begin{align*}
  \underbrace{p_1\tatrans_{\varphi(1)}^+\ldots\tatrans_{\varphi(k)}p_{k+1}}_{A}=
  \underbrace{\overbrace{q_1\tatrans_{\psi(1)}^+\ldots q_x}^{D}
  \tatrans_{\psi(x)}^+q_{x+1}}_{B}\\
  \tatrans_i 
  \underbrace{q_{x+1}'\ldots\tatrans_{\psi(l+1)}^+ q_{l+1}'}_{C'}=r'\cong_i r
\end{align*}

  If $q_{x+1}'\in\RR_i$, we are done as this path is an ots (under $T\ni i$).
  Otherwise,
  by repeatedly applying \autoref{lemma:right-move-uts} to
  the each of the $i$-transitions $q_x \tatrans_i^+ q'_{x+1}$ and $D$
  (a path without $i$),
  we right-commute all $i$-transitions to the end of the cts:
\[
	\overbrace{\underbrace{p_1\tatrans_{\varphi(1)}^+\ldots\tatrans_{\varphi(k)}p_{k+1}
		  }_{A} \tatrans_{i}^+ p_{k+1}'}^{
		  \text{cts under~}T\cup \set i}
		  =
  \underbrace{q_1'\tatrans_{\psi(1)}^+\ldots q_{x}'
  		}_{B'}
\]
with $q_{x}'=q_{x+1}''\cong_{\set i} q_{x+1}'$.
By \autoref{cor:always_ots}, we obtain a new cts $B'$ under
$T\cup \set i$.
Moreover, by bisimilarity up to $i$ (\autoref{cor:bisim-upto}), we
also have a path:
\[
  \underbrace{q_{x+1}''\ldots\tatrans_{\psi(l+1)}^+ q_{l+1}''}_{C''}=r'', \quad\text{~with~} r'' \cong_i r'
\]
This path is still a uts under $T_C$, because $i \notin T_C$.
Therefore, the resulting path 
$p_1\tatrans_T^+ r''$
 is an ots under $T$ with $r'' \cong_i r$
by transitivity of $\cong_i$.
This path satisfies the property.
\end{description}
Since $q\in\RR_i\uplus \LL_i\uplus\NN_i$, our case analysis and proof is 
complete.
\end{proof}

\subsection{Dynamic Reduction Theorem}\label{s:theorem}

We aim at preserving reachability of error states under the reduction.
We require the following of error states, without loss of generality
(e.g. error states can be implemented using sinks or a monitor process).

\begin{definition}\label{def:errors}
Let $\WW \defn \bigcup_t \WW_i$ be error states in a transaction system,
then for all $j\neq i$ and all $k$:

\begin{enumerate}
\item  $\WW_i\subseteq \NN_i$
	\hfill
	(errors are external states)
\item\label{item:errors}
	$\WW_i \lrestr \tatrans_i \rrestr \overline{\WW_i}=\emptyset$
	\hfill
	(local transitions preserve errors)
\item\label{item:errors2}
	$\tatrans_i \, \subseteq \WW_j^2 \cup \overline{\WW_j}^2$
	\hfill
	(transitions preserves remote (non)errors)
\item\label{item:errors3}
	$\cong_i\, \subseteq \WW_k^2 \cup \overline{\WW_k}^2$
	\hfill
	(bisimulations preserve (non)errors)
\end{enumerate}
\end{definition}

\begin{corollary}
If $q\in \WW_i$ for some $i$ in \autoref{lem:ots}, then
also $q'\in\WW_i$, owing to
\autoref{def:errors} \autoref{item:errors3}.
\end{corollary}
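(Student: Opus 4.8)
The plan is to read off the conclusion from the single containment $\cong\,\subseteq \WW_i^2\cup\overline{\WW_i}^2$, which says that the bisimulation $\cong$ can never relate an $i$-error state to an $i$-non-error state. Recall from \autoref{lem:ots} that the state $q'$ produced there satisfies $q'\cong q$, where $\cong\,=\,\cong_\threads$ is by definition the equivalence closure $(\bigcup_{j}\cong_j)^*$ of the per-thread bisimulations. So once the displayed containment is established, the argument is immediate: from $(q',q)\in\,\cong$ and $q\in\WW_i$, membership of $q'$ in $\overline{\WW_i}$ is impossible (it would put the pair $(q',q)$ outside $\WW_i^2\cup\overline{\WW_i}^2$), hence $q'\in\WW_i$.

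The only real work is lifting \autoref{def:errors}\,\autoref{item:errors3} from the individual $\cong_j$ to their closure $\cong$. First I would isolate the abstract fact being used: a relation $R$ is contained in $\WW_i^2\cup\overline{\WW_i}^2$ precisely when $R$ \emph{respects} the two-block partition $\set{\WW_i,\overline{\WW_i}}$, i.e. relates only states lying in the same block. This property is trivially preserved under arbitrary unions, and it is preserved under transitive closure: if $(x,z)\in R^*$ via a chain $x=x_0\,R\,x_1\,R\cdots R\,x_n=z$, then each link keeps the two endpoints in a common block, so $x$ and $z$ end up in the same block as well. Since taking the equivalence closure adds only reflexive and symmetric pairs on top of the transitive closure (both of which stay within a block), the partition-respecting property survives.

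Concretely, \autoref{def:errors}\,\autoref{item:errors3} gives $\cong_j\,\subseteq \WW_k^2\cup\overline{\WW_k}^2$ for \emph{every} $j$ and every $k$; specializing to $k=i$ and taking the union over all $j$ yields $\bigcup_j\cong_j\,\subseteq\WW_i^2\cup\overline{\WW_i}^2$, and the closure remark then gives $\cong\,=\,(\bigcup_j\cong_j)^*\subseteq\WW_i^2\cup\overline{\WW_i}^2$, which is exactly what the final step needs. There is no genuine obstacle here; if anything, the one point worth stating explicitly is that it is precisely the \emph{uniform} quantification over $k$ in \autoref{item:errors3} (rather than just $k$ matching the thread index of the relation) that lets the union of the $\cong_j$ respect the single partition $\set{\WW_i,\overline{\WW_i}}$ simultaneously for the target thread $i$.
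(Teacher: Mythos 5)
Your proposal is correct and matches the paper's intended justification exactly: the paper states this corollary without a written proof, treating it as immediate from \autoref{def:errors}\,\autoref{item:errors3}, and your spelled-out step---specializing the uniform quantification to $k=i$, taking the union over all threads, and observing that respecting the partition $\set{\WW_i,\overline{\WW_i}}$ survives union and equivalence closure so that $\cong\,\subseteq\WW_i^2\cup\overline{\WW_i}^2$---is precisely the routine argument being elided. There is no gap and no divergence from the paper's approach.
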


\begin{theorem}[reduction of interleavings]\label{th:dynred}~\\
Let $\TAS=\tuple{S, \to}$ be a parallel transaction system as in \autoref{def:tas}
with errors as in \autoref{def:errors} and
$\LL \defn \bigcup_t \LL_i$.

\noindent
Suppose that $p\tatrans^{*} q$, $p\in \NN$ and $q\in \WW$, then
there is
$p \trtrans^{*} q'$ s.t. $q\cong q'\in \WW$.
\end{theorem}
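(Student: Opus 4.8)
The plan is to let \autoref{lem:ots} perform the combinatorial reordering and then to recognise its output as a reduced ($\trtrans$) path. Since $p\in\NN=\NN_\threads$, I would apply \autoref{lem:ots} with $T=\threads$ to obtain an ots $p\tatrans^{*}q'$ under $\threads$ with $q'\cong q$. Because $q\in\WW_i$ for some thread $i$, the corollary immediately preceding the theorem---an instance of the requirement in \autoref{def:errors} that bisimilar states agree on errors---gives $q'\in\WW_i\subseteq\WW$. Hence the endpoint is already a bisimilar error state, and the whole task reduces to turning the ots into a $\trtrans$-path.

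I would then read the $\trtrans$-path off the cts/uts decomposition. A maximal block of steps of a single thread $i$ taken while every other thread rests in $\NN$ is exactly a run of $\trtrans_i$, so the ots is already a $\trtrans$-path as soon as no thread is stranded in an internal phase while a different thread moves. The ots can strand a thread in two ways: the trailing uts leaves its threads in the pre-commit phase $\RR$, and a cts transaction may stop in the post-commit phase $\LL$. The uts is the easy case: since $q'\in\WW_i\subseteq\NN_i$ keeps the error thread $i$ external, thread $i$ never steps in the uts, so by the requirement of \autoref{def:errors} that a thread's error status is preserved by the steps of other threads, the state at the end of the cts already lies in $\WW_i$ and the uts may be discarded.

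For the cts I would serialise the committed transactions so that each returns its thread to $\NN$ before the next thread is scheduled. The tools are the left and right commutativity up to bisimulation of post- and pre-commit steps (\autoref{def:tas}), which let me push a later thread's transaction past a stranded tail, together with \autoref{cor:bisim-threads}, which transports the rescheduled transactions along bisimilar states; post-phase termination (\autoref{def:tas}) guarantees that every committed transaction can be closed off in $\NN$. All rescheduled steps belong to threads other than $i$, so the error axioms of \autoref{def:errors} keep thread $i$ in $\WW_i$ throughout.

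The step I expect to be the main obstacle is keeping the bisimilarity $q'\cong q$ alive through this serialisation while reconciling it with the shape of a $\trtrans$-path, in which at most one thread can be non-external at a time, whereas $q$ may carry several internal threads at once. The decisive feature is that commutativity up to bisimulation (\autoref{def:comm-bisim}) never returns a rescheduled transition to the same target but always to a $\cong$-equivalent one; I would therefore maintain, as an invariant along the construction, that the current $\trtrans$-reachable state stays bisimilar to $q'$ (hence to $q$) and, by the $\cong$-stable error conditions of \autoref{def:errors}, stays in $\WW$. Matching the internal phases of the stranded threads against external ones through the thread bisimulations---rather than recomputing the data of the program---is precisely what the bisimulation-generalised reduction was designed to license, and is where the argument ultimately closes.
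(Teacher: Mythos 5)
Your proposal is correct and takes essentially the same route as the paper's proof: apply \autoref{lem:ots} to get an ots with a bisimilar error endpoint, discard the uts because $q\in\WW_i\subseteq\NN_i$ forces thread $i$ to take no uts steps and remote transitions preserve $\WW_i$, close off stranded post-commit tails via post-phase termination (\autoref{item:postterm} of \autoref{def:tas}) left-moved into place with \autoref{item:lmover} and \autoref{cor:bisim-threads}, and read the resulting cts off as a $\trtrans$-path. The only differences are cosmetic: you discard the uts before completing the stranded $\LL$-tails whereas the paper extends the full ots from its endpoint first and drops the uts afterwards, and the paper additionally spells out the leftmost-stuck-thread-first termination argument (re-applying the induction hypothesis of \autoref{lem:ots} to renormalize the suffix) that you leave implicit.
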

\begin{proof}
By \autoref{lem:ots},
the hypothesis implies that there is an ots under \threads:\\
$p = p_1 (\tatrans^{+}_{T_c})^* p_{k+1} = q_1 (\tatrans^{+}_{T_u})^* q_{l+1}$ 
with $q_{l+1} \cong q$ and $T_c \cup T_u = \threads$.

First, we show that there also is an ots ending in
$q_{l+1}$ with $q_{l+1}\notin \LL$.
We do this by completing paths that are stuck in the post phase (\LL),
starting with the thread that gets stuck the first (left-most in the path).
Let $x$ be the lowest index, such that there exists a subpath:
\[
p_{x} \tatrans_{\varphi(x) }^+ p_{x+1}
\tatrans_{\varphi(x+1)}^+ p_{x+2}
\text{~~with~~}  p_{x+1} \in \LL_{\varphi(x)}
\]
Let $i \defn \varphi(x)$.
By \autoref{item:invar} of \autoref{def:pas}, this subpath must exists when $q_{l+1} \in \LL_i$.
Hence, by construction, we have $p_x\in \NN$
(trivially from the cts definition, we have $p_x\in \NN_i$ and for
$j\neq i$ anything but $p_x \in \NN_j$ would contradict the assumption of
having chosen the lowest $x$).
Also, by \autoref{cor:suffix}, the suffix from $p_{x+1}$ does not contain
$i$ transitions.

Now we extend the path with $i$: $\LL_i \ni q_{l+1} \tatrans_i q''$.
This is possible by \autoref{item:postterm} of \autoref{def:tas}.
Using again strong induction with the same IH as in the proof of
\autoref{lem:ots}, we may move the new $i$ transition
`in place' as a first step towards constructing a new ots.
Thus using \autoref{item:lmover} from \autoref{def:tas} and
  \autoref{cor:bisim-threads}, one can construct a path:
  \[ p_1\tatrans_{\varphi(1)}^+\ldots\tatrans_{\varphi(x-1)}^+p_x\tatrans_{\varphi(x)}^+p_{x+1}\tatrans_i 
   \underbrace{p_{x+1}'\tatrans^* q'''}_A\cong q'' \]
  By \autoref{cor:split-cts}, the prefix path including $p_{x+1}$ is a cts, so
  the prefix path including $p_{x+1}'$ is also a cts
  (by \autoref{item:post} of \autoref{def:tas}).
  Moreover, since by \autoref{cor:split-ots} 
  the original suffix path $p_{x+1}\tatrans^* q_{l+1}$ is an ots
  under $T'= \threads \setminus \set i$.
  The threads that transit in $A$
  are still contained in $T'$ by virtue of \autoref{item:lmover}
  and \autoref{cor:bisim-threads}.
  Since $p_x \in \NN$,
  we also have $p_{x+1},p_{x+1}'\in \NN_{T'}$
  (by \autoref{item:invar} of \autoref{def:pas} and $i\notin T'$).
  
  Since $1\le x \le k$, the suffix path $A$ is shorter than the original path.
  Hence one can apply the (strong) induction hypothesis to $A$ with $T'$
  to obtain an ots under  $T'$ starting in $p'_{x+1}\in \NN_{T'}$.
  This yields a new ots under $\threads$:
  \[\overbrace{
  	p_1\tatrans_{\varphi(1)}^+\ldots\tatrans_{\varphi(x-1)}^+p_t
	\tatrans_{\varphi(t)}^+p_{x+1}\tatrans_i 
  	\underbrace{p_{x+1}'\tatrans^*_{T'} q''''}_{\text{ots under}~T'}
  }^{\text{ots under}~\threads} \]
  with $q'''' \cong q''' \cong q'' \cong q_{l+1} \cong q$.
  Because the bisimilarity preserves error states,
  we have $q''''\in \WW$.

  We must further have $q_{x+1}'\in \LL_i$ or $q_{x+1}'\in \NN_i$
  (note that the phase may have changed compared to $q''$ by the moving 
  operation).
  In the former case, we repeat this process of extending the path with $i$ 
  until  $q_{x+1}'\in \NN_i$
  (since the system is finite this happens eventually).
  In the latter case, we are done for this $i$, because $q''''\in \NN_i$.
  It follows that for all other $j\notin T'$, we also have $q''''\in \NN_j$.
  If $q''''\in \overline{\LL}$, we are done showing that there is also
  an ots path leading to an error without ending in a post phase.
  Otherwise, we pick a new left-most $x'$ and a new $i'$ to repeat the process.
  Because, the number of threads is finite, $i'\neq i$ and
  $x' > x$ this process eventually terminates.

Let
$p_1 (\tatrans^{+}_{T_c})^* p_{k+1} = q_1 (\tatrans^{+}_{T_u})^* q_{l+1}
\in \overline{\LL}$ 
be the ots after the above extension.
Since, we reduced the case to an ots where
$q_{l+1} \notin \LL$, it remains to show 
that there exists a transaction path  $\trtrans$ to some $q'\in \WW$.
We know that $q\in\WW_i\subseteq \NN_i$ for 
some thread $i$. Therefore, we have $i\notin T_u$ and $q_1 \in \WW_i
		\subseteq \NN_i$
by \autoref{item:invar} of \autoref{def:pas}.
Also, because $q_{l+1}\in \overline{\LL}$, we have $q_1 \in \NN$.
Let $q'\defn q_1 = p_{k+1}$.
It is easy to show that the this cts is also a
$\bigcup_i \trtrans_i$ path.
Hence, the conclusion is satisfied.
\end{proof}

\noindent
The following theorem shows that internal states may just as well be skipped.
This theorem is found in \autoref{sec:dynamic}~as~\autoref{th:blockred}.

\begin{theorem}[Atomic block reduction]
\label{th:blockred2}~\\
The block-reduced transition relation is defined~as:
\[
\brtrans_i \defn \NN_i\lrestr 
	(\trtrans_i\rrestr \overline{\NN_i})^* \trtrans_i \rrestr \NN_i
	\text{~~and~~}
	\brtrans \defn \bigcup_i \brtrans_i
\]
\noindent
There is path $p\tatrans^{*} q$, $p\in \NN$ and $q\in \WW$, then
there is
$p \brtrans^{*} q'$ s.t. $q\cong q'\in \WW$.
\end{theorem}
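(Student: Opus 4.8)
The plan is to build directly on \autoref{th:dynred}. Applying that theorem to the hypothesis $p \tatrans^{*} q$ with $p \in \NN$ and $q \in \WW$ already yields a transaction path $p \trtrans^{*} q'$ with $q \cong q' \in \WW$, and inspection of its proof shows that the terminal state is in fact globally external, $q' \in \NN$ (the final ots is first reduced to one with $q_{l+1} \notin \LL$, whence $q' \defn q_1 \in \NN$). Since the endpoints and the error condition are thus already supplied, the only remaining work is to re-group this $\trtrans$-path into blocks, i.e. to show that a $\trtrans^{*}$-path between two globally external states is also a $\brtrans^{*}$-path. So first I would isolate the mutual-exclusion property of $\trtrans$ that makes this grouping canonical.

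The key structural observation is that, along any $\trtrans$-path starting in $\NN$, at most one thread is ever internal. This follows by induction using $\trtrans_j \defn \NN_{k\neq j} \lrestr \tatrans_j$: a $\trtrans_j$-step requires its source to be in $\NN_k$ for every $k \neq j$, so from a state in which thread $i$ is internal (in $\overline{\NN_i} = \RR_i \cup \LL_i$) no thread $j \neq i$ can move. Combined with \autoref{item:invar} of \autoref{def:pas} (a step of one thread preserves every other thread's phase) and the partition \autoref{item:part}, this means that once a thread leaves its external phase it is the unique mover until it returns to an external phase. Consequently the globally external states $\NN = \bigcap_i \NN_i$ occur exactly at the boundaries between maximal single-thread runs of the $\trtrans$-path.

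I would then match each maximal single-thread run to one $\brtrans_i$-step. Consider a run by thread $i$ beginning in $\sigma \in \NN$ and returning to the next global-external state $\sigma'$. Because $i$ moves alone and preserves the external phases of all $j \neq i$ (again \autoref{item:invar}), every intermediate state is in $\NN_j$ for $j \neq i$ while $i$ ranges over $\overline{\NN_i}$; hence the run reads as $\sigma \in \NN_i$, then zero or more $\trtrans_i$-steps staying in $\overline{\NN_i}$, then a final $\trtrans_i$-step landing in $\sigma' \in \NN_i$. This is precisely the pattern $\NN_i \lrestr (\trtrans_i \rrestr \overline{\NN_i})^{*} \trtrans_i \rrestr \NN_i$ defining $\brtrans_i$, so the run is a single $\brtrans_i$-step. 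Concatenating the blocks over the whole path gives $p \brtrans^{*} q'$, and since $q \cong q' \in \WW$ was already delivered by \autoref{th:dynred}, the conclusion follows.

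The main obstacle I expect is the bookkeeping around transactions that end in the post-commit phase $\LL_i$: a single cts-transaction produced by \autoref{th:dynred} need not terminate in an external state, so one $\brtrans_i$-block may have to absorb several consecutive cts-transactions of the same thread. The mutual-exclusion observation is exactly what makes this sound, since a transaction ending in $\LL_i$ leaves the state internal and therefore forces the next step to be by $i$ again, so no foreign thread can interleave before $i$ reaches $\NN_i$. I would also confirm that every maximal internal run does return to $\NN$ within the given finite path; this holds because the path is finite and terminates in $q' \in \NN$, so no run is left open.
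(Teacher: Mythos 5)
Your proposal is correct and takes essentially the same route as the paper: both first invoke \autoref{th:dynred} to obtain a transaction path $p \trtrans^{*} q'$ ending in a globally external error state, and then regroup that $\trtrans$-path into $\brtrans$-blocks using the source condition $\NN_{j\neq i}$ in the definition of $\trtrans_i$ together with phase invariance (\autoref{item:invar} of \autoref{def:pas}). The only difference is presentational: the paper runs a backwards induction from the error end (re-deriving $q'\in\NN$ there via $\WW_i\subseteq\NN_i$ and invariance), whereas you segment the path forward at globally external states via an explicit ``at most one thread is internal'' invariant and import $q'\in\NN$ by inspecting the proof of \autoref{th:dynred}---both of which are sound.
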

\begin{proof}
By \autoref{th:dynred}, the premise gives us that there is a transaction path
$p \trtrans^{*} q''$ s.t. $q\cong q''\in \WW$.
We induce backwards over that transaction path to show that it is a
block transition path $\brtrans$.
Let $x$ be an index s.t. the path contains a transition
$p_{x}\trtrans_i p_{x+1}$ with $p_{x}\notin \WW$ and $p_{x+1}\in \WW$.
We have $p_{x+1}\in \WW_i\subseteq \NN_i$, or else a contradiction with
\autoref{item:invar} (invariance).
By definition of $\tatrans$, we also have $p_{x+1}\in \NN$ and
$p_x \in \NN_{\threads\setminus\set i}$ by invariance.
If $p_x = p$, we are done since with $p\in \NN$ this is a block transition
(\brtrans).
Else either $p_x\in \NN_i$, and again we have a block transition or not.
In the first case, we repeat the process until eventually we hit
$p_{z} = p$ with $z < x$.
In the letter case, there exists a $y< x$ s.t.
$p_y \in \NN$  and all intermediate states in 
$\NN_{\threads\setminus\set i}$ by invariance. Clearly, this is also a
block transition step and one can repeat the process until we find some
$p_{z} = p$ with $z<y$.
Taking $q'\defn p_{x+1}$ therefore satisfies the induction hypothesis.
\end{proof}

\begin{theorem}[\autoref{th:blockred} in the main part]
\label{th:soundandcomplete}
Dynamic transaction reduction is sound and complete for the reachability of error states.
\end{theorem}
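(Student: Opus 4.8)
The plan is to read \autoref{th:soundandcomplete} as a restatement of \autoref{th:blockred}, split it into its two directions---completeness (preservation of reachable errors) and soundness (reflection of reachable errors)---and assemble it from the results already established in this appendix, so that the proof amounts to wiring together \autoref{lem:ots}, \autoref{th:dynred}, and \autoref{th:blockred2}.

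First I would dispatch the soundness (``if'') direction, which is the easy one and needs no bisimulation reasoning. Unfolding the definitions, $\trtrans_i = \NN_{j\neq i}\lrestr\tatrans_i$ is a mere left-restriction of $\tatrans_i$, hence $\trtrans_i\subseteq\tatrans_i$; and each block step $\brtrans_i = \NN_i\lrestr(\trtrans_i\rrestr\overline{\NN_i})^*\,\trtrans_i\rrestr\NN_i$ is a nonempty sequential composition of $\trtrans_i$ steps, so $\brtrans_i\subseteq\tatrans_i^+$ and therefore $\brtrans\subseteq\tatrans^+$. Since the reflexive--transitive closure is monotone under inclusion, any reduced witness $p\brtrans^* q'$ with $p\in\NN$ and $q'\in\WW$ expands verbatim into a genuine run $p\tatrans^* q'$ ending in the same error state $q'\in\WW$.

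For completeness (``only if'') I would simply invoke \autoref{th:blockred2}: from any run $p\tatrans^* q$ with $p\in\NN$ and $q\in\WW$ it already yields a reduced run $p\brtrans^* q'$ with $q\cong q'\in\WW$ (so $q'\in\WW$, because $\cong$ preserves error status by \autoref{def:errors}, item \ref{item:errors3}). Its proof in turn chains the three prior results: \autoref{lem:ots} turns the run into an open transaction sequence $p\tatrans^* q'$ with $q'\cong q$; \autoref{th:dynred} converts this into a transaction path $p\trtrans^* q''$ with $q''\cong q$, after first driving every thread out of the post-commit phase $\LL$ using post-phase termination (\autoref{def:tas}, item \ref{item:postterm}); and a final backward induction over the transaction path collapses each maximal internal segment (states in $\overline{\NN_i}$) into a single $\brtrans_i$ block step.

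The main obstacle lives entirely inside this completeness chain, specifically in the mover-based reshuffling of \autoref{lemma:right-move-uts} and the case analysis of \autoref{lem:ots}. The difficulty is that our dynamic instrumentation only guarantees commutativity \emph{up to} the thread bisimulation $\cong$ (\autoref{def:comm-bisim}), since committing a transition may change which branch, and hence which phase, a remote thread occupies. So when an $i$-transition is commuted left past an uncommitted transaction, the resulting states are only $\cong_{\{i\}}$-related to the originals rather than equal, and one must argue that this preserves the phase traces of the other threads (via \autoref{cor:bisim-threads} and \autoref{cor:bisim-upto}), never moves a thread from post back to pre (\autoref{def:tas}, item \ref{item:post}), and never alters error membership. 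Maintaining these invariants across the nested induction---outer over the number of transactions and inner over the steps within a single pre-commit block---is the delicate part; once \autoref{lem:ots} and \autoref{th:dynred} are in hand, the block-collapsing step of \autoref{th:blockred2} and the soundness inclusion above are both routine.
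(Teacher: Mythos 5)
Your proposal is correct and takes essentially the same route as the paper's own proof: soundness via the inclusions $\trtrans_i\subseteq\tatrans_i$ and $\brtrans\subseteq\tatrans^+$ (so reduced runs expand into genuine runs reaching the same error state), and completeness by invoking \autoref{th:blockred2}, which the paper likewise obtains by chaining \autoref{lem:ots} and \autoref{th:dynred}. Your added commentary on where the real difficulty lies (the commutativity-up-to-$\cong$ bookkeeping inside \autoref{lemma:right-move-uts} and \autoref{lem:ots}) accurately reflects the appendix but is not part of this theorem's proof, which the paper dispatches exactly as you do.
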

\begin{proof}
Completeness (no error states are missed) follows from
\autoref{th:dynred}(/\autoref{th:blockred2}). 
Soundness (no errors are introduced) follows immediately from the
fact that the reduced transition relation of \autoref{def:tas} is
a subset of the original transition relation $\tatrans$
(and the reduced relations of \autoref{th:blockred2} 
are subsets of the closure of the original transition relation: $\tatrans^+$).
\end{proof}

\newpage
\subsection{Instrumentation}
\label{a:instrument}

First, we define sufficient criteria for dynamic moving conditions
as a simple interface to easily design other heuristics. Then we give a
formal definition of the heuristics discussed in \autoref{s:dyn-movers}
and show that they satisfy the critaria.

\begin{definition}[Dynamic moving conditions]\label{def:dynamic}~\\
A state predicate (a subset of states)
$c^L_\alpha$ is a dynamic \emph{left}-moving
condition for a CFG edge $\tuple{l,\alpha,l'}\in\delta_i$,
if for all $j\neq i$, $\beta\in\Delta_j$:
$(c^L_\alpha \lrestr \tr\alpha{i}) \lcomm (c^L_\alpha \lrestr \tr\beta{j})$ 
and \tr{\beta}{j} preserves $c_\alpha=\true$, i.e.
$c^L_\alpha\lrestr \tr{\beta}{j} \rrestr \overline{c^L_\alpha}= \emptyset$.

A state predicate $c^R_\alpha$ is a dynamic \emph{right}-moving
condition for a CFG edge $\tuple{l,\alpha,l'}\in\delta_i$,
if for all $j\neq i$, $\beta\in\Delta_j$:
$(c^R_\alpha \lrestr \tr{\alpha}{i}) \rcomm  \tr\beta{j}$.
\end{definition}

\defmath\conflict{\mathsf{Conflict}}

To formulate the heuristics, we use some basic static analysis on the CFG
similar to e.g.~\cite[Sec. 3.4]{godefroid}.
The relation $\conflict$ on actions (relations on data) relates dependent
actions, i.e., $\conflict(\alpha,\beta)$ holds if $\alpha$ accesses
(reads or writes) variables that are written by $\beta$, or vice versa.
The requirement on the implementation of \conflict is such that:
\begin{align}
\neg\conflict(\alpha,\beta) \Longrightarrow
 \tr{\alpha}{}\bowtie \tr{\beta}{}
\label{eq:conflict}
\end{align}
We also write $\conflict(\alpha) = \set{\beta\mid \conflict(\alpha,\beta) }$.

\begin{table*}[ph!]
\caption{Some heuristics to establish dynamic commutativity conditions for a
CFG edge $(l_a,\alpha,l_b)\in \delta_i$ of thread $i$.
As the reachability heuristic is always applicable,
it can be considered used when the restrictions of the
other conditions do not hold on the CFGs.
\label{tab:heuristics}}
\label{t:dyn}
\begin{tabular}{p{2.5cm}|p{9.35cm}}
\toprule
Name
	& Dynamic condition $c_\alpha$ and  
	 static analysis on $\alpha$ and $\bigcup_{j\neq i} G_j$
\\
							
\midrule

\begin{minipage}{2.5cm}
	Reachability\\
	(for \autoref{fig:lazy})
\end{minipage}
	&
\begin{minipage}{9.35cm}
	The condition
	$c^{\mathit{reach}}_\alpha \defn \bigwedge_{j\neq i} \bigwedge_{l\in L(j)} (\pc_j \neq l)$
	guarantees that remote threads $j$ are not in certain locations $l\in L(j)$.
	The locations $l\in V_j$ considered are all locations that either:\\
	(1) have outgoing edges $\beta$ conflicting with $\alpha$, or\\
	(2) can reach another location $l_\beta\in V_j$ through $G_j$ where (1) holds.\\
	Therefore, $L(j)$ is defined as follows:\\
	$L(j) \defn \set{ l\in V_j \mid \exists
			 (l_\beta,\beta,l_\beta') \in \delta_j \colon \conflict(\alpha,\beta)
			 \land (l,l_\beta)\in \delta_j^* } $

	This heuristic can be applied for any action $\alpha\in\Delta_i$.
	When there are no conflicts, one obtains $L(j) = \emptyset$ and
	$c^{\mathit{reach}}_\alpha = \true$, yielding a static mover through our instrumentation
	(see \autoref{tab:instrument} and explanations).
\end{minipage}
\\

\hline

\begin{minipage}{2.5cm}
	Static pointer\\ dereference\\
	(for \autoref{ex:pointers})
\end{minipage}
	&
\begin{minipage}{9.35cm}
	The condition 	$c^{\mathit{deref}}_\alpha$ is only applicable when
	$\alpha$ is a pointer dereference of some pointer \ccode p, written here as $\alpha = \ccode{*p}$
	(the action might also be a modifying dereference such as $\alpha = \ccode{*p++}$ from
	\autoref{ex:pointers}).
	
	The condition further requires that all potentially conflicting actions $\beta$
	from other threads $j \neq i$ are also pointer dereferences of some pointer \ccode{p'}.
	It guarantees that the pointers:
		\begin{itemize}[nosep]
		\item[(1)] have a different value $\ccode{p} \neq \ccode{p'}$, and
		\item[(2)] the value of \ccode{p'} is not modified by some thread $k\neq i$ in the future.
		\end{itemize}
	The condition is defined as:\\
	$c^{\mathit{deref}}_\alpha \defn
		\bigwedge_{j\neq i} \bigwedge_{\tuple{\beta,\ccode{p'}}\in C_j(\alpha)}
		(\ccode{p} \neq \ccode{p'}) \land \bigwedge_{(k,l)\in F_i(\ccode{p'})} \pc_k \neq l$.\\
	Here, $C_j(\alpha)$ is the set of all conflicting actions that
	are pointer dereferences with their corresponding pointers:\\
	$C_j(\alpha) \defn \set{\tuple{\beta,\ccode{p'}} \mid 
			\beta \in\conflict(\alpha)\cap\Delta_j \land \beta=\ccode{*p'} }$.	\\
	And the set $F_i(\ccode{p'})$ holds pairs of thread ids $k$ and locations
	$l\in V_k$, such that $l$ can reach an action modifying the value of \ccode{p'}:\\
	$F_i(\ccode{p'}) \defn \{ (k,l) \mid k\neq i \land
		\exists (l_\gamma,\gamma,l'_\gamma)\in\delta_k,  (d,d')\in\gamma \colon
			d[\ccode{p'}]\neq d'[\ccode{p'}]$\\
	\vphantom{;}~~~~~~~~~~~~~~~~~~~~~~~$\land\, (l,l_\gamma)\in\delta_k^* \}$
	
	Note that the condition is restricted to actions $\alpha$ that are
	pointer dereferences and that all its conflicting actions $\beta$ must be pointer dereferences
	as well. It is easy to lose the second conjunct of this requirement, by
	carefully incorporating $c_\beta^{\mathit{reach}}$ in the condition.
	However, we did not do so to keep the presentation simple.
	Our implementation also only supports the restrictive, simpler version of
	this condition, making it less often applicable in real-world programs.
	For our examples, however, the simple version sufficed.
\end{minipage}
\\
\hline

\begin{minipage}{2.5cm}
	Monotonic atomic\\
	(for \autoref{ex:ht})
\end{minipage}
	&
\begin{minipage}{9.35cm}
	The condition requires that $\alpha$ and all its conflicting operations $\beta$
	are CAS checking for the same expected value $c$, i.e.: 
	$\alpha,\beta = \ccode{cas(p, c, x)}$ for some \ccode{p} and \ccode x.
	It is defined as 
	$c^{\mathit{atomic}}_\alpha \defn  (\ccode{*p} \neq \ccode{c})$ and
	guarantees that the CAS operation won't write to the location that \ccode p references.

\end{minipage}
\\

\bottomrule
\end{tabular}
\end{table*}

The conflict relation/function might over-estimate the set of conflicts
(see the implication in \autoref{eq:conflict}) as 
static analysis can be imprecise.
It should be noted that static analysis runs with a tight constraint on
computational resources. Typically, it is ran once over the syntactical
program structure to derive all the aliasing
constraints, and it completes in polynomial time---often quadratic or cubic---in
the input size, much unlike the expensive model checking procedure.
The heuristics we provide deal with the consequental imprecision of
static analysis, by deferring various judgements to the
model checker.



To reason over the control flow of the threads, we define location reachability.
Abusing notation slightly,
let $\delta_i^*$ be the transitive and reflexive closure of the CFG location relation 
$\set{(l,l') \mid \exists \alpha \colon (l,\alpha,l')\in \delta_i }$.

\autoref{tab:heuristics} provides formalized versions of the heuristics
presented in \autoref{s:dyn-movers}.
A static constraint on the CFG defines where each heuristic is applicable as
the table caption describes. Note that the table considers edges of the CFG of a thread~$i$.
For e.g. the dereference of a pointer \ccode p, we could write $d[d[\ccode p]]$,
where $d$ is a data assignment.
For simplicity, we assume that $\ccode p \neq \ccode{p'}$ ensures that
dereferencing these pointers does not conflict, i.e., references may not
partly overlap.
However, for clarity, we write 
\texttt c-style expressions, such as~\ccode{*p}.

\newpage
\begin{lemma}[\autoref{lem:dyncond} in the main part]~\\
The conditions in \autoref{t:dyn} are dynamic both-movers (\autoref{def:dynamicboth}).
\end{lemma}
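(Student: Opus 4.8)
The plan is to verify, for each of the three heuristics in \autoref{t:dyn} separately, the two requirements of \autoref{def:dynamicboth} for every remote thread $j\neq i$: the both-commute $(c_\alpha\lrestr\tr{\alpha}{i})\bowtie(c_\alpha\lrestr\tr{}{j})$, and the joint monotonicity $c_\alpha\lrestr\tr{\alpha}{i}\rrestr\overline{c_\alpha}=c_\alpha\lrestr\tr{}{j}\rrestr\overline{c_\alpha}=\emptyset$. The unifying observation is that each predicate $c_\alpha$ is built so that, in any state satisfying it, every transition of thread $j$ that could interfere with $\alpha$ is rendered harmless: either it is statically non-conflicting with $\alpha$, or it degenerates to a read of the contended cell. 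Commutativity then reduces to \autoref{eq:conflict}, $\neg\conflict(\alpha,\beta)\implies\tr{\alpha}{}\bowtie\tr{\beta}{}$, and the left-restriction to $c_\alpha$ merely pins the source state so this hypothesis applies.

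For the commutativity requirement I first rewrite $c_\alpha\lrestr\tr{}{j}=\bigcup_{\beta\in\Delta_j}(c_\alpha\lrestr\tr{\beta}{j})$, reducing the goal to $(c_\alpha\lrestr\tr{\alpha}{i})\bowtie(c_\alpha\lrestr\tr{\beta}{j})$ for each $\beta$ enabled at a $c_\alpha$-state. Using the monotonicity (proved below), both intermediate states of the two compositions again lie in $c_\alpha$, so the restriction on intermediate states is vacuous and the claim collapses to ordinary commutativity of $\tr{\alpha}{}$ and $\tr{\beta}{}$ started from $c_\alpha$. For \emph{reachability}, any $\beta$ with $\conflict(\alpha,\beta)$ has source location $l_\beta\in L(j)$ by clause~(1) of the definition of $L(j)$, which $c^{\mathit{reach}}_\alpha$ forbids; hence every enabled $\beta$ satisfies $\neg\conflict(\alpha,\beta)$ and commutes by \autoref{eq:conflict}. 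For \emph{static pointer dereference}, $c^{\mathit{deref}}_\alpha$ forces $\ccode{p}\neq\ccode{p'}$ for every conflicting dereference $\beta=\ccode{*p'}$, so by the stated non-overlap convention $\alpha$ and $\beta$ touch disjoint cells and commute. For \emph{monotonic atomic}, under $c^{\mathit{atomic}}_\alpha=(\ccode{*p}\neq\ccode{c})$ both $\alpha$ and every conflicting $\beta=\ccode{cas(p,c,x)}$ find $\ccode{*p}\neq\ccode{c}$, fail, and act as reads of $\ccode{*p}$, and two reads of a cell commute.

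The substantive part, and the main obstacle, is the monotonicity of $c_\alpha$ under both $\tr{\alpha}{i}$ and $\tr{}{j}$. In all three cases $c_\alpha$ constrains only remote program counters $\pc_j$ and shared data, never $\pc_i$; since $\tr{\alpha}{i}$ alters only $\pc_i$ together with the cells $\alpha$ writes, it preserves $c_\alpha$ once we check it does not break a data conjunct (for reachability there is none; for the other two, $\alpha$ leaves the pointer values, resp.\ keeps $\ccode{*p}\neq\ccode{c}$ since its CAS fails). The real work is showing that no remote $\tr{}{j}$ can falsify $c_\alpha$. For \emph{reachability} this is a backward-closure argument on the CFG: if $l\notin L(j)$ but $(l,\beta,l')\in\delta_j$ with $l'\in L(j)$, then $l$ reaches a conflict through $l'$, so $l\in L(j)$, a contradiction; thus a $j$-step from $\pc_j\notin L(j)$ stays outside $L(j)$. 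For \emph{static pointer dereference} the set $F_i(\ccode{p'})$ is defined precisely so that $\pc_k\neq l$ for all $(k,l)\in F_i(\ccode{p'})$ bars any remote thread from reaching an edge that writes $\ccode{p'}$; together with the heuristic's assumption that the pointers are not re-assigned, this keeps $\ccode{p}\neq\ccode{p'}$ invariant, and the same backward closure over $\delta_k^*$ preserves the $F_i$-conjunct. For \emph{monotonic atomic}, this is exactly where the assumption $\ccode{c}\neq\ccode{x}$ is used: from $\ccode{*p}\neq\ccode{c}$ every $\ccode{cas(p,c,x)}$ fails and leaves $\ccode{*p}$ unchanged, and because by hypothesis no other action writes that cell, $\ccode{*p}\neq\ccode{c}$ is an invariant. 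I expect the delicate step to be tying the static reachability relation $\delta_k^*$ to the dynamic invariance of $c_\alpha$: the closure arguments must hold simultaneously for all remote threads and must dovetail with the exclusion of conflicting $\beta$ used in the commutativity step, so that the two requirements are established from a single, consistent characterization of which $\beta$ are enabled under $c_\alpha$.
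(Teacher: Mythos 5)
Your proof is correct and takes essentially the same route as the paper's own proof sketch: for each heuristic you verify preservation of $c_\alpha$ under the local step and under remote steps (via the upward closure over $\delta_j^*$ for the reachability/dereference conditions, and the failing-CAS argument for the atomic one), and you reduce the both-commute requirement to the same dichotomy the paper uses, namely that conflicting $\beta$ are disabled under $c_\alpha$ (so the restricted relations are empty) while non-conflicting ones commute by \autoref{eq:conflict}. One cosmetic remark: the failure of a remote \ccode{cas(p,c,x)} under $c^{\mathit{atomic}}_\alpha$ follows from $\ccode{*p} \neq \ccode{c}$ alone, not from the assumption $\ccode{c}\neq\ccode{x}$ (which instead underpins the write-once monotonicity of the cell), so your invariance argument is right but attributes the key fact to the wrong hypothesis.
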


\begin{proof}[Proof sketch]
We show that all dynamic both-mover constraints from
\autoref{def:dynamicboth} hold for all three heuristic
dynamic moving conditions $c_\alpha$ in the table, i.e.,
for all $i$, $j\neq i$ and all $\alpha\in\Delta_i$, $\beta\in\Delta_j$:
\begin{itemize}
\item (1) the predicate $c_\alpha$ should not be disabled by transition \tr{\alpha}i,
\item (2) the predicate $c_\alpha$ should not be disabled by transition \tr{\beta}j, and
\item (3) the actions $c_\alpha\lrestr \tr{\alpha}i$ and $c_\alpha\lrestr \tr\beta{j}$
should (both) commute.
\end{itemize}

\noindent
In the following cases,
let $(l_\alpha, \alpha, l'_\alpha)\in\delta_i$ and  $(l_\beta, \beta, l'_\beta)\in\delta_j$.

\setlist[description]{font=\normalfont\itshape}
\begin{description}
\item[Reachability]:
\begin{itemize}
\item[(1)] As $c_\alpha := c_\alpha^{\mathit{reach}}$ only refers to control location of $j$, $\tr{\alpha}i$ cannot disable it.

\item[(2)]
Assume that we have
$c_\alpha\lrestr \tr{\beta}j \rrestr \overline{c_\alpha}\neq \emptyset$.
There are states $(\pc,d) \in c_\alpha$ and $(\pc',d') \notin c_\alpha$
such that  $(\pc,d)\tr{\beta}j (\pc',d')$.
Clearly, the conjuncts of $c_\alpha$ contain $\pc_j \neq l_\beta'$
(the first one to be invalidated to make $(\pc',d') \notin c_\alpha$).
However, by the additionally required upwards closure of $\delta_j^*$ enforces that
$\pc_j \neq l_\beta$ is also part of $c_\alpha$,
contradicting that $(\pc,d) \in c_\alpha$.

\item[(3)]

Assume that 
the condition includes the start location of $\beta$:
$c_\alpha = \dots \land \pc_j \neq l_\beta \land \dots$.
Therefore, $c_\alpha \lrestr \tr\beta{j} = \emptyset$. This makes commutativity trivially hold:
$(c_\alpha \lrestr \tr\beta{j})(c_\alpha \lrestr \tr\alpha{i}) =
(c_\alpha \lrestr \tr{\alpha}i)(c_\alpha \lrestr \tr{\beta}j) = \emptyset$.

Now take the complementary assumption. We have $\beta\notin \conflict(\alpha)$
and therefore commutativity.

\end{itemize}

\item[Static pointer dereferences]:
\begin{itemize}
\item[(1)] The condition $c_\alpha := c^{\mathit{deref}}_\alpha$ only checks the program counter
of other threads and compares pointer values.
As $\tr{\alpha}i$ only dereference one of those pointers,
the transition cannot disable the condition on that account.
Moreover, since the location checks only involve remote threads,
the pc update cannot disable $c_\alpha$ either.

\item[(2)]
This follows from a similar argument as for \textit{Reachability} Case (2),
as also this heuristic includes the upwards closure of edges
changing the pointer value that is dereferenced.
Moreover, since under this closure the pointer values do not
change value, their dereferences can also not start conflicting.

\item[(3)]
This follows from a similar argument as for \textit{Reachability} Case (3).

\end{itemize}
%

\item[Monotonic atomic]:
\begin{itemize}
\item[(1)] As $c_\alpha := c_\alpha^{\mathit{atomic}}$ ensures that the expected value check of the
compare and swap operation fails, no write will occur and $c_\alpha$
remains enabled after $\alpha$.

\item[(2)]
For all states $\sigma\in c_\alpha$, it holds that there is no $\beta$-transition
to a state $\sigma'\notin c_\alpha$, as the only conflicting operations with $\alpha$
are compare and swap operations that check for the same constant value.
Since the expected value of those CAS operations does not agree with 
the pointer dereference, these operations cannot conflict with it.

\item[(3)]
Follows from a similar argument as in Case 2.
\end{itemize}

%
\end{description}

\end{proof}

\begin{corollary}~\\
The conditions in \autoref{t:dyn} are dynamic left and right movers (\autoref{def:dynamic}).
\end{corollary}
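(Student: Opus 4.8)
The plan is to obtain the corollary from the per-action case analysis already performed in the proof of \autoref{lem:dyncond}, exploiting the elementary fact that a both-commuting pair ($R \comm Q$, i.e.\ $R\circ Q = Q\circ R$) is \emph{a fortiori} right-commuting ($R\circ Q\subseteq Q\circ R$) and left-commuting ($R\circ Q\supseteq Q\circ R$). Hence the commutativity half of \autoref{def:dynamic} is weaker than what \autoref{lem:dyncond} already delivers, and the genuine work lies in reconciling the two definitions: \autoref{def:dynamicboth} quantifies over the whole remote relation \tr{}j with \emph{both} operands restricted to $c_\alpha$, whereas \autoref{def:dynamic} quantifies per action $\beta\in\Delta_j$ and, in the right-moving case, leaves \tr{\beta}{j} unrestricted on one side. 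Throughout I take $c^L_\alpha = c^R_\alpha = c_\alpha$, the condition supplied by \autoref{t:dyn}.

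First I would isolate the dichotomy that the proof of \autoref{lem:dyncond} establishes for every heuristic, every $j\neq i$ and every $\beta\in\Delta_j$: either \textbf{(a)} $\beta\notin\conflict(\alpha)$, whence $\tr\alpha i \comm \tr\beta j$ holds unconditionally by \autoref{eq:conflict}; or \textbf{(b)} $c_\alpha\lrestr\tr\beta j = \emptyset$, i.e.\ $\beta$ is disabled in every state of $c_\alpha$ (the condition pins $j$'s program counter away from $\beta$'s source location, or blocks the \ccode{CAS} in the monotonic-atomic case). The same proof also records the two preservation facts, namely that $\tr\alpha i$ and every $\tr\beta j$ keep $c_\alpha$ true: $c_\alpha\lrestr\tr\alpha i\rrestr\overline{c_\alpha} = c_\alpha\lrestr\tr\beta j\rrestr\overline{c_\alpha} = \emptyset$.

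I would then discharge the right-moving obligation $(c_\alpha\lrestr\tr\alpha i)\rcomm\tr\beta j$. In case (a), any witness $\sigma\tr\alpha i\sigma'\tr\beta j\sigma''$ with $\sigma\in c_\alpha$ reorders to $\sigma\tr\beta j\sigma'''\tr\alpha i\sigma''$ by full commutativity, and preservation of $c_\alpha$ under $\tr\beta j$ yields $\sigma'''\in c_\alpha$, placing the pair in $\tr\beta j\circ(c_\alpha\lrestr\tr\alpha i)$. In case (b) the composite $(c_\alpha\lrestr\tr\alpha i)\circ\tr\beta j$ is empty, since preservation under $\tr\alpha i$ forces the range of $c_\alpha\lrestr\tr\alpha i$ into $c_\alpha$, where $\beta$ cannot fire; the inclusion is then trivial. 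The left-moving obligation $(c_\alpha\lrestr\tr\alpha i)\lcomm(c_\alpha\lrestr\tr\beta j)$ is handled symmetrically: in case (a) full commutativity plus preservation under $\tr\alpha i$ provides the reordering, and in case (b) the right-hand composite starts with the empty relation $c_\alpha\lrestr\tr\beta j$, so the required containment holds vacuously. The leftover preservation clause of \autoref{def:dynamic} (each $\tr\beta j$ preserves $c_\alpha$) is exactly one of the facts recalled above.

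I expect the main obstacle to be the asymmetry of the right-moving definition, which---unlike the both-mover and left-mover conditions---leaves \tr{\beta}{j} unrestricted. The delicate point is case (b): one cannot invoke $c_\alpha\lrestr\tr\beta j=\emptyset$ against this unrestricted relation directly, and must instead route through preservation under $\tr\alpha i$ to confine the intermediate state to $c_\alpha$, where $\beta$ is disabled. Everything else is a routine transcription of the per-$\beta$ split from \autoref{lem:dyncond} to the weaker \lcomm/\rcomm targets.
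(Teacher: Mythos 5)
Your overall strategy --- deducing the corollary from \autoref{lem:dyncond} with $c^L_\alpha=c^R_\alpha=c_\alpha$, and repairing the asymmetry of the right-moving clause (whose remote relation $\tr{\beta}{j}$ is unrestricted) by routing through preservation of $c_\alpha$ under $\tr{\alpha}{i}$ --- is sound, and on that point you are more explicit than the paper: its entire proof of this corollary is the single sentence that \autoref{def:dynamicboth} is stronger than the conjunction of the left- and right-moving conditions of \autoref{def:dynamic} (taking $c_\alpha^L=c_\alpha^R=c_\alpha$), with your preservation step acknowledged only in a later parenthetical remark noting that the both-mover definition ``also requires that $c_\alpha \lrestr \tr{\alpha}{i} \rrestr c_\alpha$''.

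The genuine gap is the backbone of your argument: the dichotomy (a) $\beta\notin\mathsf{Conflict}(\alpha)$ or (b) $c_\alpha\lrestr\tr{\beta}{j}=\emptyset$ is \emph{not} established by the proof of \autoref{lem:dyncond} ``for every heuristic''; it holds only for the reachability heuristic. For the monotonic-atomic heuristic, a conflicting $\beta$ is another \ccode{CAS} on the same location, and under $c_\alpha \equiv (\ccode{*p} \neq \ccode{c})$ that \ccode{CAS} is not blocked --- it executes, fails, and acts as a read --- so $c_\alpha\lrestr\tr{\beta}{j}\neq\emptyset$, while unconditional commutation $\tr{\alpha}{}\comm\tr{\beta}{}$ also fails (the two \ccode{CAS}es disagree in states with $\ccode{*p}=\ccode{c}$); this is exactly what the paper's case~(2)/(3) argument for that heuristic says, contrary to your parenthetical claim that the condition ``blocks the CAS''. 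The same happens for static pointer dereferences: a statically conflicting remote dereference $\ccode{*p'}$ remains enabled under $c_\alpha$; the condition does not disable it but makes it semantically disjoint via $\ccode{p}\neq\ccode{p'}$ (the paper's ``similar argument'' wording there is loose and may have misled you). So for these conflicting-but-condition-commuting actions neither (a) nor (b) applies, and your case analysis provides no argument. The fix is to argue from the lemma's \emph{conclusion} rather than re-opening its internal cases: \autoref{lem:dyncond} delivers uniformly the restricted both-commutativity $(c_\alpha\lrestr\tr{\alpha}{i})\comm(c_\alpha\lrestr\tr{\beta}{j})$ plus preservation of $c_\alpha$ by both transitions, and your own preservation routing works verbatim from this weaker premise: for the right mover, a witness $\sigma\tr{\alpha}{i}\sigma'\tr{\beta}{j}\sigma''$ with $\sigma\in c_\alpha$ has $\sigma'\in c_\alpha$, the restricted commutation then applies, and the restriction on the $\tr{\beta}{j}$ side is simply dropped; for the left mover, equality of the restricted composites immediately yields the required inclusion. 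That uniform argument needs no dichotomy at all and is, in substance, the paper's one-line proof.
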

\begin{proof}
\autoref{def:dynamicboth} is stronger than the conjunction of
dynamic left and right movers from \autoref{def:dynamic} (taking $c_\alpha^L=c_\alpha^R = c_\alpha$). 
\end{proof}

\begin{table}[t]
\caption{Instrumentation with dynamic right/left movers}
\label{tab:instrument}
\begin{tabular}{l|p{4.3cm}|l}
&$G_i\defn (V_i,\delta_i) $&
$V_i',\delta_i'$ in $G_i'$ (pictured)
\\\hline

R0&
$\forall 
l_a\in V_i\colon$
&
\raisebox{-.5\height}{
\begin{tikzpicture}\centering
  \tikzstyle{e}=[->]
  \tikzstyle{every node}=[font=\small, node distance=1.5cm]

  \node (n) at (-1,0) {$l_a^{N}$};
  \node (r) at ( 0,0) {$l_a^{R}$};
  \node (r) at ( 1,0) {$l_a^{L}$};
  \node (r) at ( 2,0) {$l_a^{\Rn}$};
  \node (r) at ( 3,0) {$l_a^{\Ln}$};
\end{tikzpicture}
}
\\\hline

R1&
$\forall \tuple{l_a, \alpha, l_b} \in \delta_i
\colon$ &  
\raisebox{-.5\height}{
\begin{tikzpicture}\centering
  \tikzstyle{e}=[->]
  \tikzstyle{every node}=[font=\small, node distance=1.5cm]

  \node (x) at (-1,0) [anchor=east] {\hphantom{$l_a^N$}};
  \node (n) at (1.2,0)  {$l_a^N$};
  \node (rN) at (3, .5) {$l_b^R$};
  \node (lN) at (3,-.5) {$l_b^L$};
  \path (n) edge[e,bend left =15] node[above,sloped] {$c^R_\alpha \lrestr \alpha $} (rN);
  \path (n) edge[e,bend right=15] node[below,sloped] {$\lnot c^R_\alpha \lrestr \alpha $} (lN);
\end{tikzpicture}
}\\\hline

R2&
$\forall \tuple{l_a, \alpha, l_b} \in \delta_i
\colon$ &  
\raisebox{-.5\height}{
\begin{tikzpicture}\centering
  \tikzstyle{e}=[->]
  \tikzstyle{every node}=[font=\small, node distance=1.5cm]

  \node (x) at (-1,0) [anchor=east] {\hphantom{$l_a^N$}};
  \node (rr) at (1.2, 0) {$l_a^{\Rn}$};
  \node (rN) at (3, .5) {$l_b^R$};
  \node (lN) at (3,-.5) {$l_b^L$};
  \path (rr) edge[e,bend left =15] node[above,sloped]
  			{$c^R_\alpha \lrestr \alpha$} (rN);
  \path (rr) edge[e,bend right=15] node[below,sloped]
  			{$\lnot c^R_\alpha \lrestr \alpha$} (lN);
\end{tikzpicture}
}\\\hline

R3&
$\forall 
l_a\in V_i\colon$
&
\raisebox{-.5\height}{
\begin{tikzpicture}\centering
  \tikzstyle{e}=[->]
  \tikzstyle{every node}=[font=\small, node distance=1.5cm]

  \node (r) at (-1,0) [anchor=east] {$l_a^R$};
  \node (rp) at (1.2, 0) {$l_a^{\Rn}$};
  \path (r) edge[e] node[above,sloped] {\true} (rp);
\end{tikzpicture}
}
\\\hline

R4 \& R5
&
$\forall 
l_a\in V_i\setminus\LFS_i\colon$
&
\raisebox{-.5\height}{
\begin{tikzpicture}\centering
  \tikzstyle{e}=[->]
  \tikzstyle{every node}=[font=\small, node distance=1.5cm]

  \node (l) at (-1,0) [anchor=east] {$l_a^L$};
  \node (ll) at (1.2, .5) {$l_a^{\Ln}$};
  \node (lN) at (1.2,-.5) {$l_a^{N}$};
  \path (l) edge[e,bend left =15] node[above,sloped]
  			{$c^L_{\alpha_1} \land ..\land c^L_{\alpha_n}$} (ll);
  \path (l) edge[e,bend right=15] node[below,sloped]
  			{$\lnot (c^L_{\alpha_1} \land ..\land c^L_{\alpha_n}) $} (lN);   
  \node (lL1) at (3, .85) {$l_{b1}^L$};
  \node (lLd) at (3, .5) {$\dots$};
  \node (lLn) at (3, .15) {$l_{bn}^L$};
  \node (XX) at (3,-.75) {\parbox{2.4cm}{with~$\tuple{l_a, \alpha_1, l_{b1}},..$\\
  									$,\tuple{l_a, \alpha_n, l_{bn}}\in \delta_i$}};
  \path (ll) edge[e] node[above,sloped] {$\alpha_1$} (lL1);
  \path (ll) edge[e,dotted] node[above] {} (lLd);
  \path (ll) edge[e] node[below,sloped] {$\alpha_n$} (lLn);
\end{tikzpicture}
}
\\\hline

%

R6&
$\forall 
l_a\in \LFS_i\colon$
&
\raisebox{-.5\height}{
\begin{tikzpicture}\centering
  \tikzstyle{e}=[->]
  \tikzstyle{every node}=[font=\small, node distance=1.5cm]

  \node (l) at (-1,0) [anchor=east] {$l_a^L$};
  \node (n) at (1.2, 0) {$l_a^{N}$};
  \path (l) edge[e] node[above,sloped] {\true} (n);
\end{tikzpicture}
}
\\\hline
\end{tabular}
\end{table}

(Note that  the proof of \autoref{th:instrument} uses \autoref{def:dynamic}, but
also holds with the stronger \autoref{def:dynamicboth}
---a hint is provided in the figure in the proof of \autoref{th:instrument}
case \autoref{def:tas}.2.R1.
This is because in addition to a stronger commutativity condition,
which also requires that $(c_\alpha \lrestr \tr\alpha{i}) \rcomm (c_\alpha \lrestr \tr\beta{j})$,
it also requires that $c_\alpha \lrestr \tr{\alpha}i \rrestr c_\alpha$.)

Let $G_i = (V_i, \delta_i)$ be the CFG for thread $i$.
We transform this into an instrumented CFG
$G'_i\defn (V_i', \delta'_i)$
by copying all locations $l_a\in V_i$ to pre-commit, post-commit, and external locations: $l^R_a,l^L_a,l^N_a$, as well as introducing intermediate states (see~\autoref{sec:instrument}). The edges in the instrumentation $G'_i$ as given in~\autoref{tab:instrument}. Note that all discussion on the instrumentation from the main paper holds here as well, and additionally we distinguish here between dynamic left/right/both-moving conditions. They are defined in the same way as dynamic both-moving conditions~\autoref{def:dynamicboth} by replacing the commuting condition to left/right-commuting for left/right-moving conditions.


\begin{theorem}[\autoref{lem:instrument2} in the main part]\label{th:instrument}
Let \tatrans be the transition relation of an instrumented system.\footnote{Within this theorem only, to simplify writing we use $\tatrans, \delta, V_i$ instead of $\tatrans', \delta_i', V_i'$.} Let:
\begin{align*}
\WW_i
	&\defn\bigr\{ (\pc,d) \mid
				\pc_i\in \{l_{sink}^N,l_{sink}^R,l_{sink}^L \} \bigr\}
	&\text{(Error)} \\
\RR_i
	&\defn \bigr\{ \tuple{\pc, d} \mid \pc_i \in
					\{l^R,l^{\Rn}\} \bigr\} \setminus \WW_i
	&\text{(Pre-commit)} \\
\LL_i
	&\defn \bigr\{	
		\tuple{\pc, d} \mid \pc_i \in \{l^{L}, l^{\Ln} \}\bigr\} \setminus \WW_i
	&\text{(Post-commit)} \\
\EX_i
	&\defn \bigr\{ \tuple{\pc, d}  \mid \pc_i \in
  				\{l^{N}\}\bigr\}  \setminus \WW_i
	&\text{(Ext./non-error)} \\
\NN_i
	&\defn \EX_i \uplus \WW_i
	&\text{(External)}
\end{align*}

\noindent
We define an equivalence relation $\cong_i$ over the locations
\begin{align*}
  \cong_i\defn&\set{ (l^X,l^Y)\mid l\in V_i\land X,Y\in\set{L,R}}\cup\\
  &\bigl\{(l^{X},l^{Y})\mid \exists l\in V_i\colon X,Y\in\set{\Rn,\Ln,N}\bigr\}
\end{align*}
and lift it to semantic states: 
  \begin{align*}
    \cong_i \defn \bigl\{ ((\pc,d),(\pc',d'))&\mid
    	 d=d'
    \land \pc_i\cong_i\pc'_i\\
    & \land\forall j\neq i\colon \pc_j=\pc_j' \bigr\}.
  \end{align*}
We show that \autoref{def:pas}, \autoref{def:tas} and \autoref{def:errors} hold.
\end{theorem}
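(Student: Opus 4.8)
The plan is to check, family by family, that the instrumented transition system $(\locs'\times\data,\tatrans)$ together with the relations $\cong_i$ satisfies every axiom of \autoref{def:pas}, \autoref{def:tas} and \autoref{def:errors}, and that each $\cong_i$ is a thread bisimulation in the sense of \autoref{def:bisim}. The guiding observation I would establish up front and reuse everywhere is twofold: a thread-$i$ step (and the relation $\cong_i$) rewrites only $\pc_i$ and the data, leaving every $\pc_j$ ($j\neq i$) fixed; and every dynamic moving condition is read on the \emph{underlying} location of a thread (the value shared by its five copies $l^N,l^R,l^L,l^{\Rn},l^{\Ln}$), never on its phase. Consequently any predicate that is a function of $\pc_j$ and $d$ alone is insensitive to which phase thread $i$ currently occupies.

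First I would dispatch the structural axioms, all of which are immediate from the observation above. The $3$-partition $S=\RR_i\uplus\LL_i\uplus\NN_i$ with $\NN_i=\WW_i\uplus\EX_i$ holds because R0 introduces exactly the five copies and the stated sets sort them by superscript, carving the sink copies into $\WW_i$. Disjointness of $\tatrans_i,\tatrans_j$, ``$i$ preserves $j$'s phase'', ``transitions preserve remote (non)errors'', and the two $\cong$-clauses constraining remote threads (\autoref{def:pas}.\ref{item:bisim-threads} and \autoref{def:errors}.\ref{item:errors3} for $k\neq i$) all reduce to the fact that $\pc_j$ and its phase/error membership are untouched. ``Post does not reach pre'' is read off the table: the only edges leaving a post copy are R4 ($l_a^L\to l_a^{\Ln}$ or $l_a^N$), R5 ($l_a^{\Ln}\to l_b^L$) and R6 ($l_a^L\to l_a^N$), none landing in an $R$- or $\Rn$-copy. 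Error states are kept absorbing and lie in $\NN_i$ by construction (the sink carries no outgoing CFG edge), and since $\sim_i$ pairs sink copies only with sink copies, the error clauses of \autoref{def:errors} follow directly.

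Next I would prove that each $\cong_i$ is a thread bisimulation, arguing the two $\sim_i$-classes simultaneously since they feed on one another. A step out of $l_a^R$ is R3 to $l_a^{\Rn}$; from the bisimilar $l_a^L$ this is matched by R4/R6 landing in $l_a^{\Ln}$ or $l_a^N$, both $\sim_i$-related to $l_a^{\Rn}$. Conversely a step out of $l_a^{\Rn}$ (R2, to $l_b^R$ or $l_b^L$) is matched from $l_a^{\Ln}$ by the \emph{same} action under R5 (to $l_b^L$), invoking $l_b^R\sim_i l_b^L$ from the first class. For a thread $j\neq i$ the match is trivial because $\cong_i$ preserves $d$ and $\pc_j$, on which $j$'s enabledness and targets depend. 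Lifting $\sim_i$ on locations to $\cong_i$ on states is then routine.

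The crux is the two mover axioms, \ref{item:rmover} and~\ref{item:lmover} of \autoref{def:tas}; here I must also note that a thread-$j$ instrumented step is either data-carrying (R1/R2/R5, governed by the conditions of \autoref{def:dynamic}) or a pure phase step (R3/R4/R6, which commutes trivially with $\alpha$ since it alters only $j$'s phase). For the right mover I would argue that every $i$-edge landing in $\RR_i$ is R3 (data-preserving, hence commuting) or an R1/R2 edge along the $c^R_\alpha$-branch; for the latter the source satisfied $c^R_\alpha$, and $(c^R_\alpha\lrestr\alpha)\rcomm\tr{\beta}{j}$ forces the reordered intermediate state to again satisfy $c^R_\alpha$, so $i$ re-takes the same branch and commutation is in fact \emph{exact}. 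For the left mover the left conditions are monotone (preserved by every $\beta$), so an R5 step keeps the post phase and left-commutes via $(c^L_\alpha\lrestr\alpha)\lcomm(c^L_\alpha\lrestr\beta)$; the genuinely new phenomenon is the R4 branch, where $\beta$ may \emph{enable} $c(l_a)$ so that doing $\beta$ first sends $i$ to $l_a^{\Ln}$ whereas doing the branch first sends $i$ to $l_a^N$. This is exactly the non-commuting diagram motivating the theorem, and it is absorbed because $l_a^N\sim_i l_a^{\Ln}$, giving commutation up to $\cong_{\set{i,j}}$ rather than on the nose. I expect this R4 case, namely verifying that the monotonicity baked into \autoref{def:dynamic} is precisely what confines the residual discrepancy to a phase swap inside a single $\sim_i$-class, to be the main obstacle. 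Finally, post-phase termination (\autoref{def:tas}.\ref{item:postterm}) follows because the post copies form an acyclic location graph: any would-be cycle mirrors a CFG cycle, which by \autoref{def:lfs} meets $\LFS_i$, but at an $\LFS$ location only R6 fires and exits to $\NN_i$, so every maximal chain of post edges reaches $\NN_i$ in boundedly many steps.
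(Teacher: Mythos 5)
Your overall route is the paper's own: direct verification of each axiom of \autoref{def:pas}, \autoref{def:tas} and \autoref{def:errors} by case analysis over the rules R0--R6 of \autoref{tab:instrument}, a per-rule proof that $\cong_i$ is a thread bisimulation (R3 matched by R4/R6, R2 matched by R5, etc.), and the LFS/acyclicity argument for post-phase termination --- all of which mirrors the paper's proof, including your observation that dynamic conditions read only underlying locations and data, never phases.

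There is, however, one genuine error in your right-mover case: the claim that along the $c^R_\alpha$-branch ``commutation is in fact \emph{exact}.'' You correctly show that $i$ re-takes the same branch (the guarded inclusion $(c^R_\alpha\lrestr\alpha)\rcomm\beta$ forces the reordered intermediate state into $c^R_\alpha$), but you overlook that the $j$-step being commuted over is itself an instrumented \emph{branching} transition. When $\beta$ executes before $\alpha$ instead of after, $j$'s own dynamic condition may hold in one order and not the other ($\alpha$ can enable, though never disable, it), so $j$ lands in $l^L$ via the $\lnot c$-branch in one order and in $l^R$ via the $c$-branch in the other: the final states differ in $\pc_j$ and are only $\cong_j$-related, never equal in general. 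This is exactly why \autoref{def:tas}.\ref{item:rmover} is stated up to $\cong_j$, and it is the content of the diagrams in \autoref{app:dynmovers} (``right mover $i$, when $j$ becomes dynamically non-right/non-left moving''). The same oversight infects your preliminary dichotomy, where you call $j$'s ``pure phase steps'' trivially commuting: the R4 guard $c(l_a)$ of thread $j$ can likewise be enabled by $\alpha$, flipping $j$ between $l_a^{\Ln}$ and $l_a^{N}$. None of this is fatal --- the discrepancy is always a phase swap inside a single $\sim_j$-class, so the absorption-by-bisimilarity mechanism you already deploy for $i$'s R4 branch in the left-mover case repairs the right-mover case verbatim, which is precisely how the paper argues it.
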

\begin{proof}

We show that all five conditions of \autoref{def:pas} are satisfied:\\
  \begin{enumerate}
  \item By definition, $\RR_i$, $\LL_i$ and $\NN_i$ are pair-wise disjoint.
  \item By definition, all $V_i$ are pairwise disjoint.
  \item By definition of $\tatrans_i$, the phases of $j\neq i$ remain the same.
  \item By definition of $\cong_i$, equivalent states must have the same phase for all $j\neq i$.
  \item Let $(\pc^a,d^a)\cong_i (\bar\pc^a,\bar d^a)$ and $(\pc^a,d^a)\tatrans (\pc^b,d^b)$.
  We show that there is some $ (\bar\pc^b,\bar d^b)$ such that
  $(\bar\pc^a,\bar d^a) \tatrans (\bar\pc^b,\bar d^b)$ and 
  $(\pc^b, d^b)\cong_i (\bar\pc^b,\bar d^b)$.
  
  We have some $j$ s.t. $(\pc^a,d^a)\tatrans_j (\pc^b,d^b)$.
  If $j\neq i$ then $\pc^a_i = \pc^b_i$ and the hypothesis holds trivially from the 
  definitions (because $\pc^a_j = \bar\pc^a_j$ and
  	$\pc^a_i = \pc^b_i$ together with $\bar\pc^a_i = \bar\pc^b_i$).
  
  Otherwise, $\pc^a_i \cong_i \bar\pc^a_i$ and $\bar d_a = d_a$.
  We inspect all possible CFG edges in the instrumentation.
  \begin{itemize}
  \item[R1:] We have $\pc^a_i = l^N$ for some $l\in V_i$ then we have either
  	$\pc^b = \pc^a[i := l^R]$ or $\pc^b = \pc^a[i := l^L]$ where
	$l^R$ and $l^L$ are bisimilar.
	Moreover, we have $d^b = \alpha(d^a)$ where $\alpha(d^a)$  is the result of 
	applying action $\alpha$ on $d^a$.
\begin{center}
\begin{tikzpicture}
   \tikzstyle{e}=[minimum width=1cm]
   \tikzstyle{ll}=[font=\scriptsize, node distance=2cm]
   \tikzstyle{every node}=[font=\small, node distance=.8cm]

  \node (s1) {$(\pc^a,d^a)$};
  \node (s2) [right of=s1, xshift=.9cm] {$(\pc^b,d^b)$};
  \path (s1) -- node[pos=.45]{$\tatrans_j$} (s2);
  \node (s4) [node distance=1.2cm,below of=s1] {$(\bar \pc^a, d^a)$};
  \path (s1) -- node[midway,sloped]{$\cong_i$} (s4);

  \node (s3) [gray,node distance=1.2cm,below of=s2] {$(\bar \pc^b,\bar d^b)$};
  \path (s4) -- node[gray,pos=.48]{$\tatrans_j$} (s3);
  \path (s2) -- node[midway,gray,sloped]{$\cong_i$} (s3);

  \node (S1) [ll,left of=s1] 		{$\pc^a_i = l_a^N$};
  \node (S2) [ll,right of=s2] 		{$\pc^b_i \in \{l_a^R,l_a^L\}$};
  \node (S4) [ll,left of=s4]	 	{$\bar\pc^a_i\in\{l_a^N,l_a^{\Rn},l_a^{\Ln}\}$};
  \node (S3) [ll,right of=s3,gray]	{$\bar\pc^b_i = ?, \bar d^b = ?$};

  \path (s1) -- node[sloped,pos=.48]{$\implies$} (s3);
\end{tikzpicture}
\end{center}

  	For all values of $\bar\pc^a_i$, we show that there exists a
	$(\bar\pc^b,\bar d^b)$ satisfying the hypothesis.
    \begin{itemize}
    \item[R1:] $l_a^{N}$: Trivially, as $(\pc^a,d^a) = (\bar \pc^a,\bar d^a)$.
    \item[R2:] $l_a^{\Rn}$: As before, all target states are bisimilar.
    		Now we have either 
  		$\bar\pc^b_i = l_b^R$ or $\bar\pc^b_i = l_b^L$. Nonetheless, it holds that
		$(\bar\pc^b,\bar d^b) \cong_i(\pc^b, d^b)$.
    \item[R4,5:] $l_a^{\Ln}$: Since, for all $(l_a, \beta,l_b) \in \delta_i$ there is a
    				$(l_a^{\Ln}, \beta,l_b^{L}) \in \delta_i'$, there is also
				$(l_a^{\Ln}, \alpha, l_b^{L}) \in \delta_i'$.
				After this edge, we have
				$\bar d^b = \alpha(\bar d^a)= \alpha(d^a) = d^b$, and also
				$(\bar\pc^b,\bar d^b) \cong_i(\pc^b, d^b)$.
    \end{itemize}
	In all cases, it follows that $(\pc^b,d^b)\cong_i (\bar\pc^b,\bar d^b)$.

  \item[R2:] We have $\pc^a_i$ is $l^{\Rn}$, and a similar argument can be used as for R1.
  \item[R3:] We have $\pc^a_i$ is $l^{R}$, $\pc^b_i$ is $l^{R'}$ and $d^a = d^b$.
  			 Bisimilarity $\pc^a \cong_i \bar\pc^a$ gives us three cases for $\pc^a_i$:
			 \begin{itemize}
		     \item[R3:] $\pc^a_i = \bar\pc^a_i$ and $\bar d^b = d^b$ follows trivially.
		     \item[R4,5:] $\pc^a_i = l^L$, because either edge of R4,5 is taken and
		     				neither performs an action,
		     			we get $\bar d^b = d^b$ and $\bar\pc^b_i=\pc^b[i := l^{L'}]$ or
						$\bar\pc^b_i=\pc^b[i := l^N]$.
		     \item[R5:] $\pc^a_i = l^L$, because both edges (R3 and R5)
		     			perform no action and are always enabled,
		     			we get $\bar d^b = d^b$ and $\bar\pc^b_i=\pc^b[i := l^N]$.
			 \end{itemize}					     
			In all cases, it follows that $(\pc^b,d^b)\cong_i (\bar\pc^b,\bar d^b)$.
    
  \item[R4,5:] If $\pc^a_i$ is $l^{L}$ with $l\notin\LFS_i$, then we have:
\begin{center}
\begin{tikzpicture}
   \tikzstyle{e}=[minimum width=1cm]
   \tikzstyle{ll}=[font=\scriptsize, node distance=2cm]
   \tikzstyle{every node}=[font=\small, node distance=.8cm]

  \node (s1) {$(\pc^a,d^a)$};
  \node (s2) [right of=s1, xshift=.9cm] {$(\pc^b,d^b)$};
  \path (s1) -- node[pos=.45]{$\tatrans_j$} (s2);
  \node (s4) [node distance=1.2cm,below of=s1] {$(\bar \pc^a, d^a)$};
  \path (s1) -- node[midway,sloped]{$\cong_i$} (s4);

  \node (s3) [gray,node distance=1.2cm,below of=s2] {$(\bar \pc^b,\bar d^b)$};
  \path (s4) -- node[gray,pos=.48]{$\tatrans_j$} (s3);
  \path (s2) -- node[midway,gray,sloped]{$\cong_i$} (s3);

  \node (S1) [ll,left of=s1] 		{$\pc^a_i = l_a^L$};
  \node (S2) [ll,right of=s2] 		{$\pc^b_i \in \{ l_a^{\Ln},l_a^N \}$};
  \node (S4) [ll,left of=s4]	 	{$\bar\pc^a_i \in \{l_a^{R},l_a^{L}\}$};
  \node (S3) [ll,right of=s3,gray]	{$\bar\pc^b_i = ?, \bar d^b = ?$};

  \path (s1) -- node[sloped,pos=.48]{$\implies$} (s3);
\end{tikzpicture}
\end{center}
	Since $\pc^b_i \in \{ l_a^{\Ln},l_a^N \}$,
	it must be the case that $\bar \pc^b_i \in \{ l^N_b, l_b^{\Ln}, l_b^{\Rn} \}$
	according to \autoref{tab:instrument}.
	By definition, we have $l_a^{\Ln} \cong_i l_a^N \cong_i l_b^{\Rn}$.
	None of the actions from above edges modifies any data.
	Hence, we also have $d^a = d^b = \bar d^a = \bar d^b$.
	Thus, we may conclude that $(\bar\pc^b,\bar d^b) \cong_i(\pc^b, d^b)$.
  \item[R6:] If $\pc^a_i$ is $l^{L}$ and $l\in\LFS_i$, we may apply a similar argument as for R3.
  \end{itemize}
  \end{enumerate}

  We show that all four conditions of \autoref{def:tas} are satisfied using the
  bisimulations established above for all $i$ and $j\neq i$:
  \begin{enumerate}
  \item Since the instrumentation doesn't contain any edge from some
  $l^L,l^{\Ln}\in V_i'$
  	to some $l^R, l^{\Rn} \in V_i'$,
	this condition is fulfilled by the definition of $\tatrans_i$.
  \item We look at all CFG edges ending in locations that constitute
  states $\tuple{\pc,d}\in\RR_i$
  and see if the edges are indeed right-movers up to $\cong_j$.
  Note that we reason separately over transitions and their actions, since
  the actions might commute, but the transitions may not perfectly commute
  if they end up taking a different branch of the instrumentation
  (e.g. see R1, R2, R4 and R5 in \autoref{tab:instrument}).
  Within this case and the next case only,
  we use $\tatrans', \delta_i', V_i', G_i'$ again to refer to the
  instrumented CFG to distinguish it from its original $\tatrans, \delta, V_i, G_i$.
  
    \begin{itemize}
    \item[R1] We have $\pc_i=l_a^{R}$:
      Only $G_i'$-edges $\tr{c_\alpha^R \lrestr \alpha}{}_i'$ reach the $l_a^R$ location
	(with $\tr{c^R_\alpha \lrestr \alpha}{}_i' = c^R_\alpha \lrestr \tr{ \alpha}{i}_i'$).
		Let $\sigma_1 \tr{c^R_\alpha \lrestr \alpha}{}_i \sigma_2$ with
		$\exists d \colon \sigma_2 = (\pc,d)$.
	    Also let $\sigma_2 \tr{\beta}{}_j' \sigma_3$.
      Indeed, only a path $\sigma_1 \tr{c^R_\alpha \lrestr \alpha}{}_i \sigma_2
      						\tr{\beta}{}_j' \sigma_3$
      fulfills the premise of the right-moving condition (recall \autoref{eq:commute}).
      Now we show that its conclusion is satisfied up to $\cong_i$, by returning
      our attention to the properties that hold in the original CFG $G_i$.

	   By \autoref{def:dynamicboth}, we have that $\tr{\alpha}i$ preserves $c^R_\alpha$ 
     with $c_\alpha = c_\alpha^R$.
     Therefore, we also obtain $\sigma_2\in c_\alpha^R$ and the valid path refinement
      $\sigma_1 \tr{c_\alpha^R \lrestr \alpha}{}_i' \,\sigma_2
          \tr{c_\alpha^R \lrestr \beta}{}_j' \,\sigma_3$ in the instrumented system
          (occurrences of control locations in $c_\alpha^R$ need to be updated to
          refer to all of their counterparts in the instrumented system).
       By \autoref{def:dynamicboth}, we therefore also have
       $ \tr{c_\alpha \lrestr\alpha}{i}\rcomm \tr\beta{j}$,

      It remains to show that the side effects of the pc update in the instrumented system
		preserve commutativity up to~$\cong_j$.
       \autoref{app:dynmovers} shows that the other thread $j$ might lose its
      phase in the moving process, because it might be in a location
      $l^L$, $l^N$ or $l^{R'}$ and R1, R2, R3 or R4 of \autoref{tab:instrument}
      could be forced in a different branch. The location of $i$ however remains
      unaffected.
	
	Therefore, we have $\tr{ c_\alpha \lrestr \alpha}{}_i' \rcomm_j \tr{\beta}{}_j'$:

      \begin{tikzpicture}
   \tikzstyle{e}=[minimum width=1cm]
   \tikzstyle{ll}=[font=\scriptsize, node distance=1cm]
   \tikzstyle{every node}=[font=\small, node distance=1.7cm]

  \node (s1) {$\sigma_1$};
  \node (s2) [below of=s1] {$\sigma_2$};
  \path (s1) -- node[sloped,pos=.45]{$\tr{c^R_\alpha \lrestr \alpha}{}_i'$} (s2);
  \node (s3) [right of=s2] {$\sigma_3$};
  \path (s2) -- node[midway,sloped]{$\tr{\beta}{}_j'$}  (s3);

  \node (s2p) [gray,right of=s1] {$\sigma_2'$};
  \path (s1) -- node[midway,gray,sloped]{$\tr{\beta}{}_j'$}  (s2p);
  \node (s3p) [gray,below right of=s2p] {$\sigma_3'$};
  \path (s2p) -- node (XX)[midway,gray,sloped]{$\tr{c^R_\alpha \lrestr \alpha}{}_i'$} (s3p);

  \node (S1) [ll,left of=s1] 		{$c^R_\alpha \ni$};
  \node (S2) [ll,gray,left of=s2] 		{$c^R_\alpha \ni$};
  \node (S4) [ll,below of=s2p,gray]	{$$};
  \node (S3) [ll,right of=s3p,gray]	{$$};

  \path (s2) -- node[sloped,pos=.48,allow upside down]{$\implies$} (s2p);
  \path (s3) -- node[sloped,pos=.48,allow upside down]{$\cong_j$} (s3p);
  
	\end{tikzpicture}
     
    \item[R2] We again have $\pc_i=l_a^{R}$ and can apply a similar argument as for R1.
    \item[R3] We have $\pc_i=l_a^{\Rn}$:
    		The action associated with the edge performs no state modification and is always 
    			enabled. It is therefore a both-mover.
			
			Because the edge is an internal edge added by our instrumentation,
		    i.e. from original location $l_a$ (copied to $l_a^R$) to
      		$l_a$ (copied to $l_a^{\Rn}$), the conditional movers of
      		other threads cannot detect it and the corresponding transition
			is a (perfect) both-mover, i.e., up to $\cong_{\set{}}$.
    \end{itemize}
As \autoref{def:dynamicboth} is stronger than both the left and the right mover case in
\autoref{def:dynamic}, this proof also holds for the latter.

  \item Similarily, we look at all the edges from locations constituting 
  	states $\tuple{\pc,d}\in\LL_i$ and show that they are left-mover up to $\cong_{\set{i,j}}$.
	Note again that we reason separately over transitions and there actions.
	  Within this case and the previous case only,
  we use $\tatrans', \delta_i', V_i', G_i'$ again to refer to the
  instrumented CFG to distinguish it from its original $\tatrans, \delta, V_i, G_i$.

    \begin{itemize}
        \item[R4] We have $\pc_i=l_a^{L}$:
        There are only two edges leaving this $l_a^{L}$ location ($l_a\notin\LFS_i$).
      Because it is always possible to take one of the edges to either
      $l_a^N$ or $l_a^{\Ln}$ and both target locations are bisimilar, and, since the 
      edges do not change the state, nor can together be disabled,
      the edges left move up to $\cong_i$.
      
      Because the edge is an internal edge added by our instrumentation,
      i.e. from original location $l_a$ (copied to $l_a^L$) to
      $l_a$ (copied to $l_a^{L'}$ or $l_a^N$), the conditional movers of
      other threads cannot detect it and the corresponding transition
      is a both-mover up to $\cong_i$.

    \item[R5] We have $\pc_i=l_a^{\Ln}$:
		For each $G_i'$-edge $\tr{c_\alpha^L \lrestr \alpha}{}_i'$ leaving this location,
		we show that it is a left mover up to $\cong_{\set{i,j}}$.
		Let $\sigma_2 \tr{c_\alpha^L \lrestr \alpha}{}_i' \sigma_3$ with $\exists d \colon \sigma_2 = (\pc,d)$.
	    Also let $\sigma_1 \tr{\beta}{}_j' \sigma_2$.
      	Indeed, only a path $\sigma_1 \tr{\beta}{}_j' \sigma_2 \tr{c_\alpha^L \lrestr \alpha}{}_i' \sigma_3$
      	fulfills the premise of the left-moving condition
      	(recall the dual of \autoref{eq:commute}).
	    Now we show that its conclusion is satisfied up to $\cong_{\set{i,j}}$.

		Because $\pc_i=l_a^{\Ln}$, we also have $\pc_i' = l_a^{\Ln}$ with
		$\exists d' \colon \sigma_1 = (\pc',d')$.
		Therefore, there must be a path $\sigma\tr{c^L_{\alpha_1} \land ..\land c^L_{\alpha_n}}{}_i'
		\sigma'(\to')^*\sigma_1$, i.e., including the action of the positive branch of R4
		whose guard is a conjunction of dynamic left-moving conditions
		including the dynamic left mover for~$\alpha$.
		
	    By \autoref{def:dynamicboth}, we obtain $\sigma_1\in c_\alpha^L$ with $c_\alpha^L=c_\alpha$,
	    as the dynamic left-moving condition may not be disabled by remote threads $j\neq i$.
	    Therefore, we can safely refine $\tr{\beta}{}_j'$ to  $c_\alpha^L \lrestr\tr{ \beta}{}_j'$.
	    Also by \autoref{def:dynamicboth}, we therefore also have that
	    $\tr{c_\alpha^L \lrestr\alpha}{i}\lcomm \tr{\beta}{j}$.
	    
	    It remains to show how the pc change in the instrumented system preserves
	    the commutativity up to $\cong_{\set{i,j}}$.
	    Again refer to \autoref{app:dynmovers}
	    for an overview of all possibilities. However,
	    because only $i$ and $j$ are involved in the move, only those two threads could
	    change phase.
		Therefore, it is easy to see that
			    $\tr{c_\alpha^L \lrestr \alpha}{i}\lcomm_{\set{i,j}}\tr{\beta}{j}$
			    is met.

            \begin{tikzpicture}
   \tikzstyle{e}=[minimum width=1cm]
   \tikzstyle{ll}=[font=\scriptsize, node distance=1cm]
   \tikzstyle{every node}=[font=\small, node distance=1.7cm]

  \node (s1) {$\sigma_1$};
  \node (s2) [right of=s1] {$\sigma_2$};
  \path (s1) -- node[pos=.45]{$\tr{\beta}{}_j'$} (s2);
  \node (s3) [below right of=s2] {$\sigma_3$};
  \path (s2) -- node[midway,sloped]{$\tr{c^L_\alpha \lrestr \alpha}{}_i'$}  (s3);

  \node (s2p) [gray,below of=s1] {$\sigma_2'$};
  \path (s1) -- node[midway,gray,sloped]{$\tr{c^L_\alpha \lrestr \alpha}{}_i'$}  (s2p);
  \node (s3p) [gray,right of=s2p] {$\sigma_3'$};
  \path (s2p) -- node[midway,gray,sloped]{$\tr{\beta}{}_j'$}  (s3p);

  \node (S1) [ll,gray,left of=s1] 		{$c^L_\alpha \ni$};

  \path (s2) -- node[sloped,pos=.48,allow upside down]{$\implies$} (s2p);
  \path (s3p) -- node[sloped,pos=.48,allow upside down]{$\cong_{\set{i,j}}$} (s3);
\end{tikzpicture}

\item[R6] By a similar argument as for case R3 (Condition 2 of \autoref{def:tas}),
			this edge is a perfect both-mover.
    
     \end{itemize}
As \autoref{def:dynamicboth} is stronger than both the left and the right mover case in
\autoref{def:dynamic}, this proof also holds for the latter.

  \item Because we require for every cycle in the $\mathrm{CFG}_i$ that there is at least one state in the location feedback set $\mathrm{LFS}_i$, we know that there is no cycle in $\mathrm{CFG}_i'$ such that every location is an $\LL_i$-state.
  \end{enumerate}

  We show that all two conditions of \autoref{def:errors} are satisfied:
  \begin{enumerate}
  \item Because $\set{l_{sink}^N,L_{sink}^L,l_{sink}^R}\subseteq\NN_i$, we have $\WW_i\subseteq\NN_i$.
  \item By assumption, there is no transition from a sink state to a non-sink state, so all states following an error state are error states as well.
  \item Same argument as before.
  \item Because sink locations are only $\cong_i$-equivalent to other sink locations, the simulation preserves error states.
  \end{enumerate}

This proves the hypothesis.
\end{proof}

\begin{lemma}[\autoref{lem:instrument} in the main part]\label{lem:instrument3}
An error state is $\to$-reachable in the original system
iff an error state is $\to'$-reachable
in the instrumented system.
\end{lemma}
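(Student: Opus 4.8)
The plan is to relate the original and instrumented systems through a ``stripping'' projection and to check that it transports reachability of error states in both directions. Define $\strip : V_i' \to V_i$ by $\strip(l_a^X) = l_a$ for every phase annotation $X \in \{N, R, \Rn, L, \Ln\}$, and lift it componentwise to states by $\strip((\pc', d)) = (\strip(\pc'), d)$, leaving the data valuation untouched. Two facts anchor the argument. First, $\strip$ sends the instrumented initial state $(\pc_0', d_0)$, where $\pc_{0,i}' = l_{0,i}^N$, to the original initial state $(\pc_0, d_0)$. Second, because $l_a^X$ strips to $l_a$, every instrumented error state (one with $\pc_i' \in \{l_{sink}^N, l_{sink}^R, l_{sink}^L\}$ for some $i$, i.e.\ a state in $\bigcup_i \WW_i$) strips to an original error state (one with $\pc_i = \sink$). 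It therefore suffices to exhibit a step-wise correspondence between $\to'$-paths and $\to$-paths compatible with $\strip$.

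For the ``only if'' direction ($\to'$-reachable error implies $\to$-reachable error) I would classify each instrumented edge as either a \emph{real} step, which executes an original action $\alpha$ (the edges of R1, R2 and R5), or a \emph{silent} step, which only rewrites the phase annotation and leaves the data unchanged (R3, R6 and both branches of R4). A real step $(\pc', d) \to'_i (\pc'', d')$ executing $\alpha$ strips to exactly the original transition $(\strip(\pc'), d) \to_i (\strip(\pc''), d')$: the guards $c_\alpha \lrestr \alpha$ and $\neg c_\alpha \lrestr \alpha$ are both restrictions of $\alpha$, so the data relation realized is $\alpha$, and the target location strips back to $l_b$ in every branch. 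A silent step strips to a stationary pair (no change at all). Consequently any $\to'$-path, after deleting its silent steps, projects to a genuine $\to$-path with the same stripped endpoints, and together with the preservation of error states this yields the first implication.

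For the ``if'' direction I would lift an original path step by step, maintaining the invariant that the instrumented state strips to the current original state. The crux is a local lifting lemma: from \emph{any} instrumented location $l_a^X$ and \emph{any} original edge $(l_a, \alpha, l_b) \in \delta_i$, there is an instrumented transition sequence that executes $\alpha$ and reaches a location stripping to $l_b$, using at most one preceding silent step. This follows by case analysis on $X$. From $l_a^N$, $l_a^{\Rn}$ and $l_a^{\Ln}$ the action $\alpha$ fires directly via R1, R2 and R5 respectively, since $c_\alpha \lrestr \alpha$ and $\neg c_\alpha \lrestr \alpha$ together cover all of $\alpha$ so exactly one branch is enabled. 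From $l_a^R$ one first takes the unconditional no-op R3 to $l_a^{\Rn}$ and then fires via R2. From a non-\LFS{} post-location $l_a^L$ one takes whichever branch of R4 is enabled (both are unconditional no-ops and exactly one of $c(l_a)$, $\neg c(l_a)$ holds), reaching $l_a^{\Ln}$ or $l_a^N$, from which R5 or R1 fires $\alpha$; from an \LFS{} post-location one instead uses R6 to reach $l_a^N$ first. In each case the final firing edge lands in $l_b^R$ or $l_b^L$; in particular, when the original step enters $\sink$, the lifting lands in $l_{sink}^R$ or $l_{sink}^L \in \WW_i$, an instrumented error state. Induction along the original path then produces a $\to'$-path from the instrumented initial state to an instrumented error state.

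The main obstacle will be the ``if'' direction, specifically the verification that the instrumentation never blocks an available action. Here one must confirm that the guard pairs $\{c_\alpha \lrestr \alpha, \neg c_\alpha \lrestr \alpha\}$ and $\{c(l_a), \neg c(l_a)\}$ are genuinely exhaustive, so that exactly one branch of R1, R2 and R4 is always taken, and that the intermediate edges R3, R4 and R6 are unconditionally enabled and data-preserving, so that the ``(at most one) silent step then $\alpha$'' schedule is always realizable regardless of the dynamic conditions $c_\alpha$. The phase bookkeeping itself plays no role in reachability --- it matters only for the reduction theorem --- so once exhaustiveness of the guards and liveness of the silent edges are in hand, the two inductions are routine.
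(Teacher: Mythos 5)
Your proposal is correct, and its instrumented-to-original direction takes a genuinely different (and cleaner) route than the paper's. The paper defines $\strip$ only on the non-intermediate annotations $X \in \set{R,L,N}$; consequently its backward argument must first normalize the $\to'$-path so that the endpoint carries no intermediate locations, and then handle a stray internal/conditional step by commuting it rightward until it immediately precedes the last action, or by replacing a disabled conditional with the enabled branch leading to a bisimilar state --- an argument that leans on the bisimilarity established in the proof of \autoref{lem:instrument2} (and on bisimilarity preserving error states). You instead make $\strip$ total on all five annotations, so that every instrumented edge either strips to a genuine original transition (the action-carrying edges R1, R2, R5, whose guards $c_\alpha\lrestr\alpha$ and $\lnot c_\alpha\lrestr\alpha$ are restrictions of $\alpha$) or strips to a stutter (R3, R4, R6, which are data-preserving); deleting stutters then projects any $\to'$-path to a $\to$-path outright, with no normalization, no commuting, and no appeal to bisimilarity. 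This is a standard projection/weak-simulation argument and it buys a markedly simpler backward direction; what the paper's partial $\strip$ buys in exchange is a statement aligned with the invariant it reuses elsewhere (paths landing only in $\set{R,L,N}$-annotated locations). Your forward direction is essentially the paper's: an induction lifting each original step to at most two instrumented steps, with exhaustiveness of the guard pairs $\{c_\alpha\lrestr\alpha, \lnot c_\alpha\lrestr\alpha\}$ and $\{c(l_a),\lnot c(l_a)\}$ doing the work --- exactly the point you flag. One harmless overstatement: your lifting lemma claims to start from \emph{any} annotation, but from $l_a^{\Ln}$ with $l_a\in\LFS_i$ rule R5 is unavailable; since R4 only feeds $l_a^{\Ln}$ for $l_a\notin\LFS_i$, such locations have no incoming edges and your induction never visits them, so the lemma should simply be restricted to reachable annotations.
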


\begin{proof}
Actually, a more general statement holds (in the left-to-right direction): If
$(\pc_0, d_0) \to^* (\pc_n, d_n)$ in the original system $C$, then there is an 
annotated state with the same data value
$(\bar\pc_n, d_n)$ such that  $(\bar\pc_0, d_0) \to'^{*} (\bar\pc_n,d_n)$ in the 
instrumented system $C'$ and moreover, for all $i$, 
$\bar\pc_{n,i} = \pc_{n,i}^X$ for $X \in \set{R,L, N}$. Here, $(\bar\pc_0, d_0)$ is 
the initial state of the instrumented system, i.e., $\bar\pc_{0,i} = \pc_{0,i}^N$. 
We prove this property by induction on the length of the path in the original 
system ($n$). 

For a path of length $0$, we have $(\pc_n, d_n) = (\pc_0,d_0)$ and the statement 
holds for $(\bar\pc_n, d_n) = (\bar\pc_0, d_0)$.
Assume the statement holds for a path of length $n$, and assume $(\pc_0, d_0) \to^* (\pc_{n+1}, d_{n+1})$ in the original system via a path of length $n+1$. Then there 
is a state $(\pc_n,d_n)$ such that $(\pc_0, d_0) \to^* (\pc_{n}, d_{n})$ via a path 
of length $n$ and $(\pc_{n}, d_{n}) \to (\pc_{n+1}, d_{n+1})$ in $C$. By the 
inductive hypothesis, $(\bar\pc_0, d_0) \to'^{*} (\bar\pc_n,d_n)$ in the 
instrumented system $C'$ such that, for all $i$, 
$\bar\pc_{n,i} = \pc_{n,i}^X$ for $X \in \set{R,L, N}$. Let $j$ and $\alpha$ be 
such that $(\pc_{n}, d_{n}) \to (\pc_{n+1}, d_{n+1})$ due to 
$\pc_{n,j} \stackrel{\alpha}{\to}_j \pc_{n+1,j}$ and $(d_n,d_{n+1}) \in \alpha$. 
Inspecting the instrumented edges,
we see that there is always an annotated state $(\bar\pc_{n+1}, d_{n+1})$ with
$\bar\pc_{n+1} = \bar\pc_n[j: = \pc_{n+1}^Y]$ for some $Y \in \set{R,L}$ and a path 
(of length one or two) $(\bar\pc_n,d_n) \to'^* (\bar\pc_{n+1},d_{n+1})$, which proves the 
inductive step.

Now, for the left-to-right direction of our lemma, we only need to note that if
$(\pc_n, d_n)$ is an error state in the original system, then $(\bar\pc_n, d_n)$ is 
an error state in the instrumented system. 

Similarly, for the opposite direction, we first note that if $(\bar\pc_o,d_0) 
\to'^* (\bar\pc_n,d_n)$ and $(\bar\pc_n,d_n)$ is an error state, then without loss 
of generality we can assume that for all $i$, $\bar\pc_{n,i} = l_i^X$ for
$X \in\set{R,L,N}$. The reason is that if this is not the case, i.e., if there is an 
intermediate location at some position $i$, then there is an internal edge (without 
an action) that eventually brought us in this location. The source of the 
corresponding internal transition is a state as required, and we can always extract 
a shorter path  with the required property. 

Now, we can prove by induction on $n$, the number of actions in $(\bar\pc_o,d_0) \to'^* (\bar\pc_n,d_n)$,
that there is a path $(\pc_0,d_0) \to^* (\pc_n,d_n)$ in the original system $C$ 
with $(\pc_n,d_n)$ being an error state. Actually, we prove that the statement 
holds for $\pc_n = \strip(\bar\pc_n)$ where $\strip(\bar\pc_n)_i = l_i$ iff
$\bar\pc_{n,i} = l_i^X$ for $X \in \{R,L,N\}$.

For $n=0$, the statement holds. Assume it holds for $n$ and consider
$(\bar\pc_o,d_0) \to'^* (\bar\pc_{n+1},d_{n+1})$ via $n+1$ actions, where the 
target state has no intermediate locations. Then, again without loss of generality, 
we can assume that $(\bar\pc_o,d_0) \to'^* (\bar\pc_{n},d_{n})$ via $n$ actions, 
and $(\bar\pc_{n},d_{n})$ has no intermediate locations, and 
$(\bar\pc_{n},d_{n}) \to'^* (\bar\pc_{n+1},d_{n+1})$ in at most two steps
(one optional internal/conditional step and one action). This last property is due to the 
bisimilarity in the proof of~\autoref{th:instrument}. If the internal/conditional 
step does not immediately precede the last action, then (1) if the conditional is 
enabled, then this step can move to the right, right before the last action; (2) if 
the conditional is disabled, then there an internal/conditional step for the same 
action that is enabled and leads to a bisimilar state (recall that bisimilarity 
preserves error states). By the inductive hypothesis, from
$(\bar\pc_o,d_0) \to'^* (\bar\pc_{n},d_{n})$ via $n$ actions, we get a path
$(\pc_0, d_0) \to^* (\pc_n,d_n)$ performing $n$ actions, and from 
$(\bar\pc_{n},d_{n}) \to'^* (\bar\pc_{n+1},d_{n+1})$ in (at most) two steps, and the shape of 
the instrumentation, we get a step $(\pc_n,d_n) \to (\pc_{n+1},d_{n+1})$ with
$\pc_{n+1} = \strip(\bar\pc_{n+1})$.
\end{proof}

\subsection{With Instrumentation, Left/Right Moving Affects Phases}
\label{app:dynmovers}

The following composition of left/right moving with in presence of
dynamic movers is exhaustive and refers to the rows in \autoref{tab:instrument}.

\noindent
Left mover $i$, when $j$ becomes dynamically right / left moving by R1,2 and R4,5:
\begin{equation*}
\begin{aligned}
\text{

\begin{tikzpicture}\centering

   \tikzstyle{e}=[minimum width=1cm]
   \tikzstyle{every node}=[font=\small, node distance=.5cm]

  \node (s1) {$\sigma_1$};
  \node (s2) [right of=s1, xshift=.5cm] {$\sigma_2$};
  \node (s3) [node distance=1.2cm,below of=s2,xshift=.5cm] {$\sigma_3$};
  \path (s2.south) -- node[midway,sloped,allow upside down]{$\tatrans_i$} (s3.north);
  \path (s1.west) -- node[pos=.45]{$\tatrans_j$} (s2.east);

  \node (s4) [node distance=1.2cm,gray,below of=s1,xshift=-.5cm] {$\sigma_4$};
  \node (s3p) [gray,node distance=1.2cm,below of=s2,xshift=-.5cm] {$\sigma_3'$};
  \path (s1.south) -- node[sloped,allow upside down,gray,midway]{$\tatrans_i$}
  (s4.north);
  \path (s4.west) -- node[gray,pos=.48]{$\tatrans_j$} (s3p.east);

  \node (S1) [left of=s1, xshift=-.3cm,e] {$\LL_i, {\nLL_j}\ni$};
  \node (S4) [left of=s4, gray,xshift=-.07cm,e] {$\RR_j\ni$};

  \node (S2) [right of=s2, xshift=.3cm,e] {$\in \LL_i, \LL_j$};
  \node (S3) [right of=s3, xshift=.07cm,e] {$\in \LL_j$};

  \node (S4) [gray,below of=s3p] {{$\in \RR_j$}};

  \path (s3p) -- node[gray,sloped,allow upside down,pos=.48]{$\cong_j$} (s3);
 
%
\end{tikzpicture}

\begin{tikzpicture}\centering

   \tikzstyle{e}=[minimum width=1cm]
   \tikzstyle{every node}=[font=\small, node distance=.5cm]

  \node (s1) {$\sigma_1$};
  \node (s2) [right of=s1, xshift=.5cm] {$\sigma_2$};
  \node (s3) [node distance=1.2cm,below of=s2,xshift=.5cm] {$\sigma_3$};
  \path (s2.south) -- node[midway,sloped,allow upside down]{$\tatrans_i$} (s3.north);
  \path (s1.west) -- node[pos=.45]{$\tatrans_j$} (s2.east);

  \node (s4) [node distance=1.2cm,gray,below of=s1,xshift=-.5cm] {$\sigma_4$};
  \path (s1.south) -- node[gray,midway,sloped,allow upside down]{$\tatrans_i$}
  (s4.north);
  \node (s3p) [gray,node distance=1.2cm,below of=s2,xshift=-.5cm] {$\sigma_3'$};
  \path (s4.west) -- node[gray,pos=.48]{$\tatrans_j$} (s3p.east);

  \node (S1) [left of=s1, xshift=-.3cm,e] {$\LL_i, \LL_j\ni$};
  \node (S4) [left of=s4, xshift=-.07cm,gray,e] {$\LL_j\ni$};

  \node (S2) [right of=s2, xshift=.3cm,e] {$\in \LL_i, \NN_j$};
  \node (S3) [right of=s3, xshift=.07cm,e] {$\in \NN_j$};
  \node (S4) [gray,below of=s3p, xshift=.07cm,e] {$\in \LL_j$};
  \path (s3p) -- node[sloped,allow upside down,pos=.48]{$\cong_j$} (s3);
\end{tikzpicture}

\end{aligned}
\end{equation*}

\noindent
Right mover $i$, when $j$ becomes dynamically non-right / non-left moving by R1,2 and R4,5:
\begin{equation*}
\begin{aligned}
\text{


\begin{tikzpicture}\centering

   \tikzstyle{e}=[minimum width=1cm]
   \tikzstyle{every node}=[font=\small, node distance=.5cm]

  \node (s1) {$\sigma_1$};
  \node (s2) [node distance=1.2cm,below of=s1] {$\sigma_2$};
  \node (s3) [right of=s2, xshift=.5cm] {$\sigma_3$};
  \path (s2.east) -- node[pos=.45]{$\tatrans_j$} (s3.west);
  \path (s1.south) -- node[midway,sloped]{$\tatrans_i$} (s2.north);

  \node (s4) [xshift=.9cm,gray,right of=s1] {$\sigma_4$};
  \node (s3p) [gray,node distance=1.2cm,below right of=s4,xshift=-.5cm] {$\sigma_3'$};
  \path (s1.east) -- node[gray,midway,sloped]{$\tatrans_j$}
  (s4.west);
  \path (s4.south) -- node[gray,midway,sloped]{$\tatrans_i$} (s3p);

  \node (S1) [left of=s1, xshift=-.07cm,e] {$\nLL_j\ni$};
  \node (S2) [below of=s2, xshift=.4cm,e] {$\in\RR_i, \RR_j$};

  \node (S4) [right of=s4, gray,xshift=.07cm,e] {$\in\LL_j$};
  \node (S3) [below right of=s3, xshift=.4cm,e] {$\in \RR_i,\RR_j$};

  \path (s3) -- node[gray,sloped,pos=.48]{$\cong_j$} (s3p);

 \node (S3) [gray,right of=s3p, xshift=.07cm,e] {$\in \LL_j$};  
%
%
%
%
\end{tikzpicture}

\begin{tikzpicture}\centering

   \tikzstyle{e}=[minimum width=1cm]
   \tikzstyle{every node}=[font=\small, node distance=.5cm]

  \node (s1) {$\sigma_1$};
  \node (s2) [node distance=1.2cm,below of=s1] {$\sigma_2$};
  \node (s3) [right of=s2, xshift=.5cm] {$\sigma_3$};
  \path (s2.east) -- node[pos=.45]{$\tatrans_j$} (s3.west);
  \path (s1.south) -- node[midway,sloped]{$\tatrans_i$} (s2.north);

  \node (s4) [xshift=.9cm,gray,right of=s1] {$\sigma_4$};
  \path (s1.east) -- node[gray,midway,sloped]{$\tatrans_j$}
  (s4.west);

  \node (s3p) [node distance=1.2cm,gray,below right of=s4,xshift=-.5cm] {$\sigma_3'$};
  \path (s4.south) -- node[gray,midway,sloped]{$\tatrans_i$} (s3p.north);

  \node (S1) [left of=s1, xshift=-.07cm,e] {$\LL_j\ni$};
  \node (S2) [left of=s2, xshift=-.3cm,e] {$\RR_i, \LL_j \ni$};

  \node (S4) [right of=s4, gray,xshift=.07cm,e] {$\in\NN_j$};
  \node (S3) [below of=s3, xshift=.3cm,e] {$\in \LL_j$};
  \node (S5) [gray,right of=s3p, xshift=.3cm,e] {$\in \RR_i,\NN_j$};
  \path (s3p) -- node[sloped,pos=.48]{$\cong_j$} (s3);
\end{tikzpicture}

\end{aligned}
\end{equation*}

\noindent
Right mover $i$, when $i$ remains dynamic right mover by R1,R2 and a left mover $i$
when $i$ itself loses dynamic left-movability (for the next transition) by R4,5:
\begin{equation*}
\begin{aligned}
\text{

\begin{tikzpicture}\centering

   \tikzstyle{e}=[minimum width=1cm]
   \tikzstyle{every node}=[font=\small, node distance=.5cm]

  \node (s1) {$\sigma_1$};
  \node (s2) [node distance=1.2cm,below of=s1] {$\sigma_2$};
  \node (s3) [right of=s2, xshift=.9cm] {$\sigma_3$};
  \path (s2.east) -- node[pos=.45]{$\tatrans_j$} (s3.west);
  \path (s1.south) -- node[midway,sloped]{$\tatrans_i$} (s2.north);

  \node (s4) [xshift=.9cm,gray,right of=s1] {$\sigma_4$};
  \node (s3p) [gray,node distance=1.2cm,below right of=s4] {$\sigma_3'$};
  \path (s1.east) -- node[gray,midway,sloped]{$\tatrans_j$}
  (s4.west);
  \path (s4.south) -- node[gray,midway,sloped]{$\tatrans_i$} (s3p);

  \node (S1) [left of=s1, xshift=-.07cm,e] {};
  \node (S2) [left of=s2, xshift=-.07cm,e] {$\RR_i \ni$};

  \node (S4) [right of=s3p, gray,xshift=.07cm,e] {$\in \RR_i$};
  \node (S3) [below left of=s3, xshift=.07cm,e] {$\in \RR_i$};

  \path (s3) -- node[gray,sloped,pos=.48]{=} (s3p);
%
%
%

\end{tikzpicture}

\begin{tikzpicture}\centering

   \tikzstyle{e}=[minimum width=1cm]
   \tikzstyle{every node}=[font=\small, node distance=.5cm]

  \node (s1) {$\sigma_1$};
  \node (s2) [right of=s1, xshift=.9cm] {$\sigma_2$};
  \node (s3) [node distance=1.2cm,below right of=s2] {$\sigma_3$};
  \path (s2.south) -- node[midway,sloped]{$\tatrans_i$} (s3.north);
  \path (s1.west) -- node[pos=.45]{$\tatrans_j$} (s2.east);

  \node (s4) [node distance=1.2cm,gray,below of=s1] {$\sigma_4$};
  \path (s1.south) -- node[gray,midway,sloped]{$\tatrans_i$}
  (s4.north);
  \node (s3p) [gray,node distance=1.2cm,below of=s2] {$\sigma_3'$};
  \path (s4.west) -- node[gray,pos=.48]{$\tatrans_j$} (s3p.east);

  \node (S1) [left of=s1, xshift=-.07cm,e] {$\LL_i \ni$};
  \node (S4) [left of=s4, xshift=-.07cm,gray,e] {$\NN_i\ni$};

  \node (S2) [right of=s2, xshift=.07cm,e] {$\in \LL_i$};
  \node (S3) [right of=s3, xshift=.07cm,e] {$\in \LL_i$};
  \node (S4) [gray,below of=s3p, xshift=.07cm,e] {$\in \NN_i$};
  \path (s3p) -- node[sloped,pos=.48]{$\cong_{i}$} (s3);

\end{tikzpicture}}
\end{aligned}
\end{equation*}

We conclude that the instrumentations allows other threads $j$ to lose their phase
(end up in a different location) when a transition of $i$ is moved to the right (over $j$)
in a global trace.
On the other hand, when $i$ moves a transition to the left over $j$, both $i$ and $j$ could
change their phase. This explains why \autoref{def:tas} defines the commutativity the way it does,
to wit:
$(\tatrans_i \rrestr \RR_{i}) \rcomm_j \tatrans_j$ and
$(\LL_i \lrestr \tatrans_i) \lcomm_{\set{i,j}} \tatrans_j$.
Note that we have equality for all threads $k\neq i,j$ (not participating in the move)
from the definition of $\cong_{i,j}$.


\end{document}